\newtheorem{Theorem}{Theorem}[section]
\newtheorem{Lemma}[Theorem]{Lemma}
\newtheorem{Corollary}[Theorem]{Corollary}
\newtheorem{Remark}[Theorem]{Remark}
\newtheorem{Definition}[Theorem]{Definition}
\newtheorem{Proposition}[Theorem]{Proposition}
\newtheorem{Example}[Theorem]{Example}
\makeatletter \@addtoreset{equation}{section} \makeatother
\begin{document}
	
	\title{Self-Dual Codes and LCD Codes in Sum-Rank Metric\let\thefootnote\relax}
	\author{ Qingfeng Xia$^1$,~ Hongwei Liu$^2$,~ Hao Chen$^3$,~ Xu Pan$^4$}
	\date{\footnotesize
$^1$Chern Institute of Mathematics, Nankai University, Tianjin, 300071,  China, 1120250016@mail.nankai.edu.cn\\
$^2$School of Mathematics and Statistics, Key Lab NAA-MOE, Central China Normal University, Wuhan, 430079, Hubei, China, hwliu@ccnu.edu.cn\\
$^3$College of Information Science and Technology, Jinan University, Guangzhou, 510632, Guangdong, China, haochen@jnu.edu.cn\\
$^4$College of Information Science and Technology/Cyber Security, Jinan University, Guangzhou, 510632, Guangdong, China, panxucode@163.com\\}
	\maketitle
	%MS+++++++++++++++++++++ Abstract +++++++++++++++++++++++++
	{\noindent\small{\bf Abstract:} Sum-rank codes are an important class of codes which can be utilized for linear network coding, space-time coding and distributed storage. They can not only reduce the size of network alphabet but also detect and correct more errors.
Based on the duality theory of sum-rank codes [Byrne, Gluesing-Luerssen, Ravagnani, IEEE TIT, 2021] and those related theory of rank-metric codes, it is significant to study self-dual codes and linear complementary dual (LCD) codes in sum-rank metric.
In this paper, we introduce the notion of self-dual codes and LCD codes in sum-rank metric, and obtain two methods of constructing self-dual sum-rank codes and LCD sum-rank codes from Euclidean self-dual codes and Euclidean LCD codes.
Some examples of cyclic self-dual sum-rank codes and cyclic LCD sum-rank codes with good parameters are provided.
In addition, we prove that there exist asymptotically good self-dual sum-rank codes.}
	
	\vspace{0.8ex}
	{\noindent\small{\bf Keywords:} sum-rank code; Euclidean self-dual code; Euclidean LCD code; cyclic code.
		}
	
	%2020 \emph{Mathematics Subject Classification}: 94B15, 94B60

\section{Introduction}
Sum-rank codes were introduced by N${\rm \acute{o}}$brega and Uch${\rm \hat{o}}$a-Filho for their applications in linear network coding, see \cite{NU}. They were also used to construct space-time codes in \cite{SK} and codes for distributed storage in \cite{CMST, MPK}. In recent years, the study of sum-rank codes has attracted much attention. Some scholars did research on various bounds and asymptotically bounds of sum-rank codes, such as \cite{AGKP,AKR,BGR} while others considered  specific sum-rank codes, such as \cite{BC, BGR, CH, CDCX, LZW, MPU, MK, ZZ}. In addition, decoding algorithms for sum-rank codes were presented in \cite{BJPR, LBW, LZW}.
The existence of asymptotically good linear sum-rank codes was proved in \cite{CH}.
 In  \cite{BGR},
Byrne {\it et al.} studied  the duality theory of sum-rank codes and gave the MacWilliams identities of the rank-list distribution and the support distribution which motivated our work very much.

In classical coding theory, self-dual codes are a fascinating family of codes which has inspired many researchers and has seen a tremendous growth in the past 50 years. They have many applications in coding theory, cryptography and combinatorics.
The earlier results about self-dual codes over
 small fields can be seen in \cite[Chapter 9]{HP} and \cite{RS1}.
Many researchers tried to construct self-dual codes by using GRS codes and algebraic techniques, see \cite{FF,JX,KL,LF,LP,MW,MFFZG,ZF}.
Also, there are some studies on constructing cyclic self-dual codes, such as \cite{C,XC,CSX}.
The authors of \cite{GPSA} got the result that the minimum distance of a self-dual code $C$ over $\mathbb{F}_{4}$ of length $n$ satisfies $$d_{H}(C)\leq 4\lfloor\frac{n}{12}\rfloor+4.$$

LCD codes are another interesting family of codes related to the duality theory which was introduced by Massey \cite{Mas} and have been investigated deeply. They have applications in symmetric cryptosystems and constructions of entanglement-assisted quantum error-correcting codes.
There are many references for the constructions of LCD
codes, such as \cite{CSCYR,Mas1,Mas}.
Many LCD BCH codes with very good parameters were constructed in \cite{LDL,LLDL}.

Apart from codes with Hamming metric, some researchers turned to self-dual codes and LCD codes in other metrics. Rank-metric codes are used to achieve a reliable communication over multiplicative-additive finite field matrix channels. In \cite{GV,R},  the MacWilliams identities for codes in rank metric were provided which motivated researchers to consider self-dual codes and LCD codes in rank metric. In \cite{GK},  Galvez and Kim obtained self-dual rank-metric codes from a self-dual rank-metric code of smaller size using a method called the building-up construction. In \cite{GK,Mo}, the self-dual rank-metric codes of small sizes over small finite fields were classified. In \cite{NW}, Nebe and Willems focused on the self-dual MRD codes (MRD codes are those rank-metric codes attaining Singleton-type bound).
As for LCD rank-metric codes, the authors of \cite{KSSD} found that a special class of rank-metric codes named Gabidulin codes over $\mathbb{F}_{2^{m}}$ of full length $m$ generated by the trace-orthogonal-generator matrices were shown to be both LCD codes and MRD codes.  In \cite{KR}, more LCD-MRD codes were constructed.  Liu and Liu pointed out the relationship between LCD rank-metric codes and Euclidean LCD codes in \cite{LL}. Based on the relationship, they also constructed some LCD-MRD codes and gave an application of rank-metric LCD codes in decoding algorithm. The authors of \cite{ZLYS} found that one can always construct LCD generalized Gabidulin codes under some conditions.

Sum-rank codes are a generalization or improved version of rank-metric codes and have important applications in theory and practice.
In \cite{NU}, the authors considered blocks of $n$ consecutive uses of the matrix channel and then introduced the model of sum-rank codes.
Compared with rank-metric codes, sum-rank codes can not only reduce the size of network alphabet but also detect and correct more errors.
Therefore, it is significant to research self-dual codes and LCD codes in sum-rank metric based on the duality theory of sum-rank codes and those related theory of rank-metric codes.
In this paper, our main contributions are as follows:
\begin{itemize}
  \item In \textbf{Theorems~\ref{Lemma 3.1}}, \textbf{\ref{Lemma 3.4}} and \textbf{Corollary~\ref{Corollary 3.1}}, we characterize the dual codes of two classes of sum-rank codes.

  \item In \textbf{Theorems~\ref{Theorem 3.1}} and \textbf{\ref{Theorem 3.2}}, we provide two methods of constructing self-dual codes and LCD codes in sum-rank metric from Euclidean self-dual codes and Euclidean LCD codes. Using these methods, cyclic self-dual codes and cyclic LCD codes in sum-rank metric are constructed in  \textbf{Theorems~\ref{Theorem 4.9}}, \textbf{\ref{Theorem 4.22}}, \textbf{Propositions~\ref{Proposition 3.7}}, \textbf{\ref{Proposition 4.25}} and so on. Some of them have good parameters which achieve the upper bound in Theorem~\ref{Theorem 2.3}.

    \item  In \textbf{Theorem~\ref{Theorem 5.1}}, we prove that asymptotically good self-dual sum-rank codes exist.
\end{itemize}

This paper is organized as follows. In Section 2, some necessary knowledge about Euclidean self-dual codes, Euclidean LCD codes, cyclic codes and sum-rank codes are presented. In Section 3, we characterize the dual codes of some sum-rank codes. In Section 4, we define self-dual sum-rank codes and LCD sum-rank codes and give some properties of self-dual sum-rank codes. Then we introduce two methods of constructing self-dual codes and LCD codes in sum-rank metric from Euclidean self-dual codes and Euclidean LCD codes and also provide several examples with good parameters and algebraic structures. In Section 5, we give an asymptotically good sequence of self-dual sum-rank codes. In Section 6, we conclude this paper.

\section{Preliminaries}
In this section, we provide some basic knowledge about Euclidean self-dual codes, Euclidean LCD codes, cyclic codes as well as sum-rank codes which will be used later. Throughout this paper, $\mathbb{F}_{q}$ is the finite field with $q$ elements where $q$ is a power of a prime.
Let $\boldsymbol{x}=(x_{1},\ldots,x_{n})$, $\boldsymbol{y}=(y_{1},\ldots,y_{n})\in\mathbb{F}_{q}^{n}$, the Hamming distance between $\boldsymbol{x}$ and $\boldsymbol{y}$ is
$$d_{H}(\boldsymbol{x},\boldsymbol{y})=|\{i\,:\,x_{i}\neq y_{i}, 1\leq i\leq n\}|.$$
Let $C\subseteq\mathbb{F}_{q}^{n}$ be a code, its minimum Hamming distance is as follow:
$$d_{H}(C)=\min\{d_{H}(\boldsymbol{c}_{1},\boldsymbol{c}_{2})\,:\,\boldsymbol{c}_{1}\neq\boldsymbol{c}_{2}\in C\}.$$
Furthermore, the rate and relative minimum Hamming distance of $C$ are
$$R_{H}(C)=\frac{\log_{q}|C|}{n}\ \text{and}\ \Delta_{H}(C)=\frac{d_{H}(C)}{n}.$$
If $C$ is a linear subspace of $\mathbb{F}_{q}^{n}$, $C$ is called a linear code. A linear code $C$ in $\mathbb{F}_{q}^{n}$ with dimension $k$ and minimum Hamming distance $d$ is called an $[n,k,d]_{q}$-linear code. A code sequence $C_{1}$, $C_{2}$, $\cdots$ is said to be asymptotically good if the length $n_{i}$ of $C_{i}$ goes to infinity and for $i=1,2,\ldots$, both the rate $R_{H}(C_{i})$ and the relative minimum Hamming distance $\Delta_{H}(C_{i})$ are positively bounded from below.

\subsection{Euclidean Self-Dual Codes and Euclidean LCD Codes}
We firstly review the definitions and some important results about Euclidean self-dual codes and Euclidean LCD codes. Let $\boldsymbol{x}=(x_{1},\ldots,x_{n})$, $\boldsymbol{y}=(y_{1},\ldots,y_{n})\in\mathbb{F}_{q}^{n}$. Then the Euclidean inner product of $\boldsymbol{x}$ and $\boldsymbol{y}$ is defined as
$$\langle\boldsymbol{x},\boldsymbol{y}\rangle=\sum_{i=1}^{n}x_{i}y_{i}.$$
Let $C$ be a linear code in $\mathbb{F}_{q}^{n}$. Then the dual code of $C$ is
$$C^{\bot}=\{\boldsymbol{y}\in\mathbb{F}_{q}^{n}\,:\,\langle\boldsymbol{x},\boldsymbol{y}\rangle=0,\,\forall \boldsymbol{x}\in C\}.$$
\begin{Definition}
A linear code $C$ is called (Euclidean) self-dual, if $C=C^{\bot}$. A linear code $C$ is called (Euclidean) linear complementary dual (LCD), if $C\cap C^{\bot}=\{\boldsymbol{0}\}$.
\end{Definition}
The lemma below is an upper bound on the minimum Hamming distance of a Euclidean self-dual code over $\mathbb{F}_{4}$ from \cite{GPSA}. In our work, we will derive an upper bound on the minimum sum-rank distance of self-dual sum-rank codes from it.
\begin{Lemma}(\!\cite[Corollary 3.4]{GPSA})\label{Lemma 2.1}
	Let $C$ be a Euclidean self-dual code over $\mathbb{F}_{4}$ of length $n$. Then $d_{H}(C)\leq 4\lfloor\frac{n}{12}\rfloor+4$.
\end{Lemma}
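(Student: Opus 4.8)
The statement is an extremal bound, so the natural route is the classical invariant-theoretic (Gleason--Mallows--Sloane) argument applied to the Hamming weight enumerator $W_C(x,y)=\sum_{i=0}^{n} A_i\, x^{n-i} y^{i}$, where $A_i$ counts the codewords of $C$ of Hamming weight $i$. The plan is to first translate self-duality into an invariance property of $W_C$, then invoke a Gleason-type theorem to write $W_C$ as a polynomial in finitely many basic invariants, and finally run a Mallows--Sloane extremal argument exploiting the nonnegativity $A_i\ge 0$.

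First I would record the constraints that a Euclidean self-dual code $C$ over $\mathbb{F}_4$ imposes on $W_C$. Since $C=C^{\perp}$, the MacWilliams identity forces $W_C$ to be fixed by the transform $M=\tfrac{1}{2}\begin{pmatrix}1&3\\1&-1\end{pmatrix}$. I would also extract the extra structure coming from characteristic two: because $\sum_i c_i^2=(\sum_i c_i)^2$ for every $c\in C$, self-orthogonality forces each codeword to have zero coordinate sum, so $\mathbf 1\in C$ and the admissible weights satisfy the resulting congruence conditions. To capture the \emph{full} symmetry, rather than only $\langle M\rangle$, I would pass to the complete (or symmetrized) weight enumerator, which is invariant under the associated Clifford--Weil group, and then project back to the Hamming weight enumerator; the effective group $G$ fixing $W_C$ is thereby considerably larger than $\langle M\rangle$.

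Next I would compute the Molien series of $G$ and identify two algebraically independent homogeneous invariants: one of ``small'' degree, responsible for the additive constant $4$ and the weight increment, and one of degree $12$, responsible for the period. Then $W_C$ lies in the polynomial ring they generate, and writing $W_C=\sum_{j=0}^{\lfloor n/12\rfloor} c_j\, g_1^{\,a_j} g_2^{\,j}$ with appropriate exponents $a_j$ exhibits exactly $\lfloor n/12\rfloor+1$ free coefficients. The Mallows--Sloane step fixes these coefficients by demanding $A_0=1$ and $A_1=\cdots=A_{d-1}=0$ for $d$ as large as possible; the extremal choice yields $d=4\lfloor n/12\rfloor+4$, and any attempt to force a larger minimum distance makes the next coefficient $A_d$ strictly negative, contradicting $A_d\in\mathbb{Z}_{\ge 0}$. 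This contradiction is precisely the asserted bound.

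The main obstacle is the precise determination of the invariance group $G$ and its ring of invariants, namely establishing that the two generators have the degrees producing the period $12$ and the increment $4$ (equivalently, identifying the correct Clifford--Weil group and verifying its Molien series), together with the bookkeeping needed to confirm that the extremal coefficients stay admissible. A cleaner alternative worth pursuing is to map $C$ to a binary code of length $2n$ via an $\mathbb{F}_2$-basis of $\mathbb{F}_4$ and to show that Euclidean self-duality, read through the trace form, yields a doubly-even binary self-dual code; the known binary extremal bound $4\lfloor 2n/24\rfloor+4=4\lfloor n/12\rfloor+4$ would then apply, the remaining difficulty being to control how the $\mathbb{F}_4$-Hamming weight compares with the binary weight of the image.
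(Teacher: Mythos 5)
First, a point of reference: the paper does not prove Lemma 2.1 at all---it is imported as a black box from \cite{GPSA} (Corollary 3.4)---so your proposal can only be measured against the cited source's argument, which is indeed of the invariant-theoretic/shadow type you describe. Your plan is the right genre, but as written it has two concrete gaps. The main route defers exactly the substantive content: for Euclidean self-dual codes over $\mathbb{F}_{4}$ the only symmetry forced on the Hamming weight enumerator is the MacWilliams involution $M$, and no diagonal symmetry is available, because such codes can contain \emph{odd-weight} codewords. For instance, with $w$ a primitive element of $\mathbb{F}_{4}$, the code $C=\langle(1,w,w^{2},0),(0,w^{2},w,1)\rangle_{\mathbb{F}_{4}}$ is Euclidean self-dual of length $4$ (all three inner products vanish since $1+w+w^{2}=0$) and contains the weight-$3$ word $(1,w,w^{2},0)$; the zero-coordinate-sum condition you correctly derive yields no congruence on Hamming weights. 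The invariant ring of the order-$2$ group $\langle M\rangle$ is generated in degrees $1$ and $2$ and gives nothing like a period-$12$ bound, so everything hinges on identifying the correct larger Clifford--Weil group on the complete/symmetrized enumerator and verifying that its invariants produce the period $12$ and increment $4$---precisely the step you yourself flag as ``the main obstacle'' and do not carry out. A Gleason--Mallows--Sloane template with unspecified generator degrees would equally well ``prove'' a wrong bound, so the sketch cannot be credited with the stated constants.

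Your ``cleaner alternative'' contains a genuinely false claim: the binary image of a Euclidean self-dual $\mathbb{F}_{4}$-code under the trace-orthogonal basis $\{w,w^{2}\}$ is always a self-dual binary code of length $2n$, but it is \emph{not} doubly even in general. Counterexample: $C=\{\boldsymbol{0},(w,w),(w^{2},w^{2}),(1,1)\}$ is Euclidean self-dual, and its image is $\langle(1,0,1,0),(0,1,0,1)\rangle_{\mathbb{F}_{2}}$, which has words of binary weight $2$; hence the Type~II extremal bound $4\lfloor 2n/24\rfloor+4$ is not applicable as you invoke it. The route is salvageable, though: replace the Mallows--Sloane Type~II bound by Rains' shadow bound, valid for \emph{all} binary self-dual codes of length $N$, namely $d\leq 4\lfloor N/24\rfloor+4$ except when $N\equiv 22\pmod{24}$ (where the bound is $4\lfloor N/24\rfloor+6$). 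Here $N=2n$ with $n$ even (a self-dual $\mathbb{F}_{4}$-code forces $\dim C=n/2\in\mathbb{Z}$), so $N\equiv 0\pmod 4$ and the exceptional case never occurs; since every nonzero $\mathbb{F}_{4}$-symbol maps to a binary pair of weight at least $1$, the $\mathbb{F}_{4}$-Hamming weight is bounded above by the binary weight of the image, whence $d_{H}(C)\leq 4\lfloor 2n/24\rfloor+4=4\lfloor\frac{n}{12}\rfloor+4$. With that repair your second route becomes a complete proof, and arguably a cleaner one than reconstructing the enumerator computation of \cite{GPSA}.
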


We end up this subsection with a famous theorem about Euclidean LCD codes from \cite{CSCYR}.
\begin{Theorem}(\!\cite[Corollary 5.4]{CSCYR})\label{Theorem 2.2}
  Let $q$ be a power of a prime with $q>3$. Then, an $[n,k,d]_{q}$-linear Euclidean LCD code exists if there is an $[n,k,d]_{q}$-linear code.
\end{Theorem}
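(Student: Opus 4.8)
The plan is to prove the stronger statement that every $[n,k,d]_{q}$-linear code is monomially equivalent to a Euclidean LCD code with the same parameters; since scaling the coordinates by nonzero scalars preserves the dimension and every Hamming weight, this immediately yields the claim. I would start from Massey's criterion: if $G$ is a generator matrix of a linear code $C$, then $C$ is Euclidean LCD if and only if the $k\times k$ matrix $GG^{T}$ is invertible. The task then becomes to modify $G$ within its equivalence class until this determinant is nonzero.

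The natural modification is coordinate scaling. Given nonzero scalars $\lambda_{1},\dots,\lambda_{n}\in\mathbb{F}_{q}^{*}$, set $D=\mathrm{diag}(\lambda_{1},\dots,\lambda_{n})$; then $GD$ generates a code monomially equivalent to $C$, and $(GD)(GD)^{T}=GD^{2}G^{T}=G\Lambda G^{T}$ with $\Lambda=\mathrm{diag}(\lambda_{1}^{2},\dots,\lambda_{n}^{2})$. By the Cauchy--Binet formula,
\[
\det\!\big(G\Lambda G^{T}\big)=\sum_{\substack{S\subseteq\{1,\dots,n\}\\ |S|=k}}\det(G_{S})^{2}\prod_{i\in S}\lambda_{i}^{2},
\]
where $G_{S}$ is the $k\times k$ submatrix of $G$ on the columns indexed by $S$. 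Regarding the squares $y_{i}:=\lambda_{i}^{2}$ as indeterminates, the right-hand side is a homogeneous, multilinear polynomial $P(y_{1},\dots,y_{n})$ of degree $k$. Because $G$ has rank $k$, at least one $k$-subset $S_{0}$ of its columns is linearly independent, so the coefficient $\det(G_{S_{0}})^{2}$ of the monomial $\prod_{i\in S_{0}}y_{i}$ is nonzero; in particular $P$ is not the zero polynomial.

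It remains to exhibit values $y_{i}$ lying in the set $Q=\{\lambda^{2}:\lambda\in\mathbb{F}_{q}^{*}\}$ of nonzero squares at which $P$ does not vanish. This is exactly the setting of the Combinatorial Nullstellensatz: the monomial $\prod_{i\in S_{0}}y_{i}$ attains the total degree $k=\deg P$ and has nonzero coefficient, so whenever $|Q|>1$ one may find $y_{i}\in Q$ with $P(y_{1},\dots,y_{n})\neq 0$. Choosing $\lambda_{i}$ to be a square root of the selected $y_{i}$ gives $\det\big((GD)(GD)^{T}\big)\neq 0$, and Massey's criterion then certifies that $GD$ generates an LCD code equivalent to $C$.

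The one arithmetic condition that must be checked --- and where the hypothesis $q>3$ enters in an essential way --- is $|Q|\geq 2$. For $q$ a power of $2$ every nonzero element is a square, so $|Q|=q-1\geq 3$; for odd $q$ one has $|Q|=(q-1)/2\geq 2$ precisely when $q\geq 5$. The excluded cases $q=2,3$ give $|Q|=1$, which is exactly the obstruction the theorem is designed to avoid. I expect this counting step, together with verifying that both the characterization ``$C$ is LCD $\iff GG^{T}$ is invertible'' and the Cauchy--Binet expansion remain valid in characteristic two, to be the only points requiring genuine care; everything else is a direct application of the Nullstellensatz.
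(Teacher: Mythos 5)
Your argument is correct, and it is worth noting that the paper itself offers no proof of this statement: Theorem~\ref{Theorem 2.2} is imported verbatim from \cite{CSCYR} (Corollary 5.4 there), so your write-up serves as a self-contained substitute for a citation rather than an alternative to an in-paper proof. Substantively, your route coincides with the argument in the cited source: there, too, one scales columns by an invertible diagonal matrix $D$, observes $(GD)(GD)^{T}=GD^{2}G^{T}$, expands $\det\big(GD^{2}G^{T}\big)$ by Cauchy--Binet into $\sum_{|S|=k}\det(G_{S})^{2}\prod_{i\in S}\lambda_{i}^{2}$, and concludes by a non-vanishing lemma for polynomials evaluated on grids. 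The only genuine difference is bookkeeping: the original treats the determinant as a polynomial in $\lambda_{1},\dots,\lambda_{n}$ of degree at most $2$ in each variable and applies a DeMillo--Lipton--Schwartz--Zippel-type lemma with value set $\mathbb{F}_{q}^{*}$, which requires $q-1>2$, i.e.\ $q>3$; you instead substitute $y_{i}=\lambda_{i}^{2}$, making the polynomial multilinear, and apply the Combinatorial Nullstellensatz with the set $Q$ of nonzero squares, which requires $|Q|\geq 2$ --- the same threshold, since $|Q|=q-1$ for even $q$ and $|Q|=(q-1)/2$ for odd $q$. Your version has the small expository advantage of isolating exactly why $q\in\{2,3\}$ fails ($|Q|=1$ there). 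The points you flag as needing care do all hold: Massey's criterion ($C$ is LCD if and only if $GG^{T}$ is nonsingular) is valid over every finite field, including characteristic $2$; Cauchy--Binet is a polynomial identity over $\mathbb{Z}$ and hence valid in any characteristic; in characteristic $2$ the coefficient $\det(G_{S_{0}})^{2}$ remains nonzero because squaring is the injective Frobenius map; and distinct subsets $S$ contribute distinct multilinear monomials, so no cancellation can kill the coefficient you exhibit. Finally, since monomial equivalence preserves $n$, $k$ and every Hamming weight, the stronger equivalence statement you prove does imply the existence statement of Theorem~\ref{Theorem 2.2}.
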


\subsection{Cyclic Codes}
A linear code $C\subseteq\mathbb{F}_{q}^{n}$ is said to be cyclic if $(c_{0},c_{1},\ldots,c_{n-1})\in C$ implies $(c_{n-1},c_{0},\ldots,c_{n-2})\\ \in C$. In the subsection, we recall some facts about cyclic codes.

By identifying $(c_{0},c_{1},\ldots,c_{n-1})\in\mathbb{F}_{q}^{n}$ with $c_{0}+c_{1}x+c_{2}x^{2}+\cdots+c_{n-1}x^{n-1}\in\mathbb{F}_{q}[x]/\langle x^{n}-1\rangle$,
 the corresponding cyclic code $C\subseteq\mathbb{F}_{q}^{n}$ is an ideal of $\mathbb{F}_{q}[x]/\langle x^{n}-1\rangle$. Since every ideal of $\mathbb{F}_{q}[x]/\langle x^{n}-1\rangle$ is principal, there exists a monic polynomial $g(x)$ of the smallest degree such that $C=\langle g(x)\rangle$ where $g(x)\,|\,x^{n}-1$. $g(x)$ is unique and called the generator polynomial of $C$.

 Let $n$ be a positive integer with ${\rm gcd}(n,q)=1$ and $m$ be the smallest positive integer such that $q^{m}\equiv 1{\pmod n}$.
 Let $\alpha$ denote a generator of $\mathbb{F}_{q^{m}}^{*}$ and put $\beta=\alpha^{\frac{q^{m}-1}{n}}$. Then $\beta$ is a primitive $n$-th root of unity.
 For $0\leq i\leq n-1$, $m_{i}(x)$ is the minimal polynomial of $\beta^{i}$ over $\mathbb{F}_{q}$. We define the $q$-cyclotomic cosets $C_{i}^{(q,n)}=\{i,iq,\ldots,iq^{l-1}\}{\pmod n}$ where $l$ is the smallest positive integer such that $iq^{l}\equiv i{\pmod n}$. Then $\{0,1,\ldots,n-1\}$ is partitioned into several $q$-cyclotomic cosets. And $$x^{n}-1={\rm lcm}(m_{0}(x),m_{1}(x),\ldots,m_{n-1}(x))\text{ where }m_{i}(x)=\prod\limits_{j\in C_{i}^{(q,n)}}(x-\beta^{j}).$$  Note that we use $i{\pmod n}$ to denote the unique integer in the set $\{0,1,\ldots,n-1\}$, which is congruent to $i$ modulo $n$. Thus, we have $m_{i}(x):=m_{i{\pmod n}}(x)$.

With these facts, we recall the definition of BCH codes, an important class of cyclic codes.
 \begin{Definition}
   For an integer $\delta \geq 2$, define $$g_{(q,n,\delta,b)}(x)=lcm(m_{b}(x),m_{b+1}(x),\ldots,m_{b+\delta-2}(x)).$$ Let $C_{(q,n,\delta,b)}$ be the cyclic code of length $n$ with the generator polynomial $g_{(q,n,\delta,b)}(x)$. Then $C_{(q,n,\delta,b)}$ is called a BCH code with designed distance $\delta$.
 \end{Definition}
It is well-known that the minimum distance of $C_{(q,n,\delta,b)}$ is greater or equal to $\delta$, see \cite{HP}.

\subsection{Sum-Rank Codes}
In this subsection, we review some knowledge about sum-rank codes which play crucial roles in our work. We begin with the definition of the sum-rank metric as well as those related notions from \cite{BGR}.
Let $\mathbb{F}_{q}^{(m,n)}$ be the set of all $m\times n$ matrices over $\mathbb{F}_{q}$. This is a linear space over $\mathbb{F}_{q}$ of dimension $mn$.
Let $t,\,m_{1},\cdots,m_{t},\,n_{1},\cdots,\,n_{t}$ be positive integers such that $m_{i}\leq n_{i}$ for all $i\in [t]$ and $n_{1}\geq\cdots\geq n_{t}$
where $[t]=\{1,\ldots,t\}$.
We consider the product of $t$ matrix spaces
$$\mathbb{F}_{q}^{(m_{1},n_{1}),\ldots,(m_{t},n_{t})}=\mathbb{F}_{q}^{(m_{1},n_{1})}\oplus\cdots\oplus\mathbb{F}_{q}^{(m_{t},n_{t})}.$$
This is a linear space over $\mathbb{F}_{q}$ of dimension $\sum\limits_{i=1}^{t}m_{i}n_{i}$. We note that these notations will be used throughout the paper.
\begin{Definition}
  Let $M=(M_{1},\ldots,M_{t})\in\mathbb{F}_{q}^{(m_{1},n_{1}),\ldots,(m_{t},n_{t})}$. We define the sum-rank weight of $M$ as $$wt_{sr}(M)=\sum\limits_{i=1}^{t}rank(M_{i}).$$
The sum rank distance between $M=(M_{1},\ldots,M_{t})$ and $N=(N_{1},\ldots,N_{t})$ in $\mathbb{F}_{q}^{(m_{1},n_{1}),\ldots,(m_{t},n_{t})}$ is
$$d_{sr}(M,N)=wt_{sr}(M-N)=\sum\limits_{i=1}^{t}rank(M_{i}-N_{i}).$$
\end{Definition}
It is easy to prove that this is indeed a metric on $\mathbb{F}_{q}^{(m_{1},n_{1}),\ldots,(m_{t},n_{t})}$.
\begin{Definition}
 A subset $C$ of the linear space $\mathbb{F}_{q}^{(m_{1},n_{1}),\ldots,(m_{t},n_{t})}$ with the sum-rank metric is called a sum-rank code with block length $t$ and matrix sizes $m_{1}\times n_{1},\ldots,m_{t}\times n_{t}$.
Its minimum sum-rank distance and minimum sum-rank weight are defined as $$d_{sr}(C)=\min\limits_{M\ne N,\,M,N\in C}d_{sr}(M,N) \ \text{and}\ wt_{sr}(C)=\min\limits_{M\ne \boldsymbol{0}, M\in C}wt_{sr}(M).$$
Its rate and relative minimum sum-rank distance are defined as $$R_{sr}(C)=\frac{\log_{q}|C|}{\sum_{i=1}^{t}m_{i}n_{i}}\ \text{and}\ \Delta_{sr}(C)=\frac{d_{sr}(C)}{\sum_{i=1}^{t}m_{i}}.$$
When $C$ is a linear subspace of $\mathbb{F}_{q}^{(m_{1},n_{1}),\ldots,(m_{t},n_{t})}$, $C$ is called a linear sum-rank code.
\end{Definition}
Obviously, if $C$ is a linear sum-rank code, $d_{sr}(C)=wt_{sr}(C)$.
\begin{Remark}\label{Remark 2.1}
   The definition reduces to rank-metric codes in the case where $t=1$ and to the code of length $t$ in Hamming metric in the case where $m_{i}=n_{i}=1$ for all $i\in[t]$.

   Another special class of sum-rank codes arises in the case where $m_{1}=\cdots=m_{t}=m$. Set $N=n_{1}+\cdots+n_{t}$.
    Then, $\mathbb{F}_{q}^{(m_{1},n_{1}),\ldots,(m_{t},n_{t})}\cong\mathbb{F}_{q^{m}}^{N}$.
    Let $\boldsymbol{X}=(\boldsymbol{X}^{(1)},\ldots,\boldsymbol{X}^{(t)}),\,
    \boldsymbol{Y}=(\boldsymbol{Y}^{(1)},\ldots,\boldsymbol{Y}^{(t)})\in\mathbb{F}_{q^{m}}^{N}$,
    where $\boldsymbol{X}^{(i)}=(\boldsymbol{X}_{1}^{(i)},\ldots,\boldsymbol{X}_{n_{i}}^{(i)})$, $\boldsymbol{Y}^{(i)}=(\boldsymbol{Y}_{1}^{(i)},\ldots,\boldsymbol{Y}_{n_{i}}^{(i)})\in\mathbb{F}_{q^{m}}^{n_{i}}$ for all $i\in[t]$.
    The sum-rank distance between $\boldsymbol{X}$ and $\boldsymbol{Y}$ is defined as
    $$d_{sr}(\boldsymbol{X},\boldsymbol{Y})=
    \sum\limits_{i=1}^{t}dim_{\mathbb{F}_{q}}\langle\boldsymbol{X}_{1}^{(i)}-\boldsymbol{Y}_{1}^{(i)},
    \ldots,\boldsymbol{X}_{n_{i}}^{(i)}-\boldsymbol{Y}_{n_{i}}^{(i)}\rangle_{\mathbb{F}_{q}}.$$
    Let $C\subseteq\mathbb{F}_{q^{m}}^{N}$. Then the minimum sum-rank distance of $C$ is $d_{sr}(C)=\min\limits_{\boldsymbol{X}\ne\boldsymbol{Y},\,\boldsymbol{X},\boldsymbol{Y}\in C}d_{sr}(\boldsymbol{X},\boldsymbol{Y})$.
    It is easy to prove that the definition is the same as that above when it comes to the special class of sum-rank codes.
    For more details about the class of sum-rank codes, readers can refer to \cite{MP}, \cite{MK}, \cite{MPK} and so on.
\end{Remark}
\begin{Definition}
  A sequence of sum-rank codes $C_{1}$, $C_{2}$, $\cdots$ with matrix size $m\times n$ is said to be asymptotically good if for $i=1,2,\ldots$, the block length $t_{i}$ of $C_{i}$ goes to infinity and both the rate $R_{sr}(C_{i})$ and the relative minimum sum-rank distance $\Delta_{sr}(C_{i})$ are positively bounded from below.
\end{Definition}
In addition, according to \cite{CDCX}, a sum-rank code $C$ with block length $t$ and matrix size $m\times n$ is called a cyclic sum-rank code, if $M=(M_{1},\ldots,M_{t})$ is a codeword in $C$ where $M_{i}\in\mathbb{F}_{q}^{(m,n)}$, $i=1,\ldots,t$ implies that $(M_{t},M_{1},\ldots,M_{t-1})$ is also a codeword in $C$.

We end up this subsection with the definition of the dual code of a sum-rank code which can be found in \cite{BGR}.
Let $M=(M_{1},\ldots,M_{t}),\,N=(N_{1},\ldots,N_{t})
\in\mathbb{F}_{q}^{(m_{1},n_{1}),\ldots,(m_{t},n_{t})}$.
The trace inner product of them is defined as
$$\langle M,N \rangle_{tr}=\sum\limits_{i=1}^{t}Tr(M_{i}N_{i}^{\top}).$$
\begin{Definition} (\cite{BGR})
Let $C$ be a linear subspace of $\mathbb{F}_{q}^{(m_{1},n_{1}),\ldots,(m_{t},n_{t})}$.
Then the dual code of $C$ is $$C^{\bot_{tr}}=\{N\in\mathbb{F}_{q}^{(m_{1},n_{1}),\ldots,(m_{t },n_{t})}\,:\,\langle M,N \rangle_{tr}=0,\,\forall M\in C\}.$$
\end{Definition}
This definition is the starting point of our work.

\section{Characterization of Dual Codes of Some Sum-Rank Codes}
In this section, we research two classes of sum-rank codes and characterize their dual codes.
\subsection{The Dual Code of $SR(C_{0},C_{1})$}
Firstly, we review an essential construction of sum-rank codes from \cite{CH}.
The author noted that the linear space $\mathbb{F}_{q}^{(m,m)}$ of all $m\times m$ matrices over $\mathbb{F}_{q}$ can be identified with the space of all $q-polynomials$,
$$\mathbb{F}_{q}^{(m,m)}\cong\{a_{0}x+a_{1}x^{q}+\cdots+a_{m-1}x^{q^{m-1}}\,:\,a_{0},\ldots,a_{m-1}\in\mathbb{F}_{q^{m}}\}.$$
This is an isomorphism of the linear spaces over $\mathbb{F}_{q}$.

Based on the isomorphism, the author introduced the method of constructing sum-rank codes as below.
\begin{Definition}
  Let $C_{i}$ be a linear code in $\mathbb{F}_{q^{m}}^{t}$, $0\leq i\leq m-1$. Define the linear sum-rank code $SR(C_{0},\ldots,C_{m-1})$ over $\mathbb{F}_{q}$ with block length $t$ and matrix size $m\times m$ as $$SR(C_{0},\ldots,C_{m-1})=\{SR(\boldsymbol{c}_{0},\ldots,\boldsymbol{c}_{m-1})\,:\,\boldsymbol{c}_{i}=(c_{i,1},\ldots,c_{i,t})\in C_{i},\,0\leq i\leq m-1\},$$
where $$SR(\boldsymbol{c}_{0},\ldots,\boldsymbol{c}_{m-1})\,:=\big(\sum_{i=0}^{m-1}c_{i,1}x^{q^{i}},\ldots,\sum_{i=0}^{m-1}c_{i,t}x^{q^{i}}\big).$$
\end{Definition}
\begin{Remark}
  When $C_{i}\subseteq\mathbb{F}_{q^{m}}^{t}$ is a cyclic code, $0\leq i\leq m-1$, it is clear that $SR(C_{0},\ldots,C_{m-1})$ is a cyclic sum-rank code.
\end{Remark}
The authors of \cite{CDCX} gave the dimension and bounds on the minimum sum-rank distance of the sum-rank code.
\begin{Theorem}(\!\cite[Theorem 3.1 and Proposition 3.1]{CDCX})\label{Theorem 2.3}
		Let $C_{i}\subseteq\mathbb{F}_{q^{m}}^{t}$ be a $[t,k_{i},d_{i}]_{q^{m}}$-linear code, $0\leq i\leq m-1$. Then $SR(C_{0},\ldots,C_{m-1})$ is a linear sum-rank code over $\mathbb{F}_{q}$ with block length $t$, matrix size $m\times m$ and dimension $m(k_{0}+\cdots +k_{m-1})$. The minimum sum-rank distance $d_{sr}$ of $SR(C_{0},\ldots,C_{m-1})$ satisfies
\small{$$\max\{\min\{md_{0},(m-1)d_{1},\ldots,d_{m-1}\},\min\{d_{0},2d_{1},\ldots,md_{m-1}\}\}\leq d_{sr}\leq\min\{md_{0},md_{1},\ldots,md_{m-1}\}.$$}
\end{Theorem}

In this subsection, we only consider the particular case where both $q$ and $m$ are equal to $2$. In the following, we characterize the dual code of $SR(C_{0},C_{1})$.

\iffalse
Firstly, we have the isomorphism $$\mathbb{F}_{2}^{(2,2)}\cong\{a_{0}x+a_{1}x^{2}\,:\,a_{0},\,a_{1}\in\mathbb{F}_{4}\}.$$
Let $C_{0}$, $C_{1}$ be linear codes over $\mathbb{F}_{4}$ of length $t$,
$$SR(C_{0},C_{1})=\{SR(\boldsymbol{c}_{0},\boldsymbol{c}_{1})\,:\,\boldsymbol{c}_{0}=(c_{0,1},\ldots,c_{0,t})\in C_{0}, \boldsymbol{c}_{1}=(c_{1,1},\ldots,c_{1,t})\in C_{1}\},$$
where $$SR(\boldsymbol{c}_{0},\boldsymbol{c}_{1})=(c_{0,1}x+c_{1,1}x^{2},\ldots,c_{0,t}x+c_{1,t}x^{2}).$$
$SR(C_{0},C_{1})$ is a linear sum-rank code over $\mathbb{F}_{2}$ with block length $t$ and matrix size $2\times 2$.
\fi

\begin{Theorem}\label{Lemma 3.1}
Let $C_{0}$, $C_{1}$ be linear codes over $\mathbb{F}_{4}$ of length $t$. Then $$SR(C_{0},C_{1})^{\bot_{tr}}=SR(C_{0}^{\bot},C_{1}^{\bot}).$$
\end{Theorem}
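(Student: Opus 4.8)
The plan is to reduce the set equality to a single orthogonality computation together with a dimension count. Since both $SR(C_{0},C_{1})$ and $SR(C_{0}^{\bot},C_{1}^{\bot})$ are $\mathbb{F}_{2}$-linear, it suffices to prove the containment $SR(C_{0}^{\bot},C_{1}^{\bot})\subseteq SR(C_{0},C_{1})^{\bot_{tr}}$ and then to check that the two sides have equal $\mathbb{F}_{2}$-dimension. For the dimension count I would invoke Theorem~\ref{Theorem 2.3} with $m=2$: writing $k_{0},k_{1}$ for the $\mathbb{F}_{4}$-dimensions of $C_{0},C_{1}$, we have $\dim_{\mathbb{F}_{2}}SR(C_{0},C_{1})=2(k_{0}+k_{1})$ and $\dim_{\mathbb{F}_{2}}SR(C_{0}^{\bot},C_{1}^{\bot})=2((t-k_{0})+(t-k_{1}))$. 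The ambient space has dimension $\sum_{i=1}^{t}m_{i}n_{i}=4t$, and $2(k_{0}+k_{1})+2(2t-k_{0}-k_{1})=4t$, so $SR(C_{0},C_{1})^{\bot_{tr}}$ has exactly the dimension of $SR(C_{0}^{\bot},C_{1}^{\bot})$; containment will then force equality.

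The heart of the argument is a block-wise formula for the trace inner product. Fix the basis $\{1,\omega\}$ of $\mathbb{F}_{4}$ over $\mathbb{F}_{2}$ with $\omega^{2}=\omega+1$, and use it to realize each $\mathbb{F}_{2}$-linear map $x\mapsto a_{0}x+a_{1}x^{2}$ of $\mathbb{F}_{4}$ as a matrix in $\mathbb{F}_{2}^{(2,2)}$. Writing $M$ for the matrix attached to $(a_{0},a_{1})$ and $N$ for the matrix attached to $(b_{0},b_{1})$, I would compute the images of the four basis vectors $(1,0),(\omega,0),(0,1),(0,\omega)$ under this identification and read off the Gram matrix of the Frobenius form $(M,N)\mapsto Tr(MN^{\top})$. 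The expected outcome is that this form is block-diagonal in the two coefficient slots and restricts to a field-trace form on each slot, namely
$$Tr(MN^{\top})=\mathrm{Tr}_{\mathbb{F}_{4}/\mathbb{F}_{2}}(a_{0}b_{0})+\mathrm{Tr}_{\mathbb{F}_{4}/\mathbb{F}_{2}}(\omega a_{1}b_{1}).$$
The two structural features that matter are the absence of cross terms mixing the $x$-slot with the $x^{2}$-slot, and the fact that each diagonal form $\mathrm{Tr}_{\mathbb{F}_{4}/\mathbb{F}_{2}}(\lambda_{i}a_{i}b_{i})$ is a nondegenerate symmetric $\mathbb{F}_{2}$-bilinear form, obtained by composing the Euclidean product with the field trace.

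Summing this identity over the $t$ blocks yields
$$\langle SR(\boldsymbol{c}_{0},\boldsymbol{c}_{1}),SR(\boldsymbol{d}_{0},\boldsymbol{d}_{1})\rangle_{tr}=\mathrm{Tr}_{\mathbb{F}_{4}/\mathbb{F}_{2}}\big(\langle\boldsymbol{c}_{0},\boldsymbol{d}_{0}\rangle+\omega\langle\boldsymbol{c}_{1},\boldsymbol{d}_{1}\rangle\big),$$
where $\langle\cdot,\cdot\rangle$ denotes the Euclidean inner product over $\mathbb{F}_{4}$. If $\boldsymbol{d}_{0}\in C_{0}^{\bot}$ and $\boldsymbol{d}_{1}\in C_{1}^{\bot}$, then both Euclidean inner products vanish for every $\boldsymbol{c}_{0}\in C_{0}$ and $\boldsymbol{c}_{1}\in C_{1}$, so the right-hand side equals $\mathrm{Tr}_{\mathbb{F}_{4}/\mathbb{F}_{2}}(0)=0$. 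This establishes the containment, which together with the dimension count proves the theorem.

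The main obstacle is the block identity itself: one must verify that the Frobenius form carries no cross term between the two coefficient slots, since a surviving contribution of the shape $\mathrm{Tr}_{\mathbb{F}_{4}/\mathbb{F}_{2}}(\langle\boldsymbol{c}_{0},\boldsymbol{d}_{1}\rangle)$ would not be annihilated by Euclidean duality and would destroy the clean factorization through $SR(C_{0}^{\bot},C_{1}^{\bot})$. This vanishing is precisely what the explicit basis computation must confirm; the particular constants $1$ and $\omega$ are an artifact of the chosen basis and play no role, only the nonvanishing of each diagonal block (its nondegeneracy) being needed.
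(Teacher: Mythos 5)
Your proposal is correct and follows essentially the same route as the paper: it proves the containment $SR(C_{0}^{\bot},C_{1}^{\bot})\subseteq SR(C_{0},C_{1})^{\bot_{tr}}$ by an explicit computation of the trace form over the basis $\{1,\omega\}$ and concludes with the same dimension count from Theorem~\ref{Theorem 2.3}. Your block identity $Tr(MN^{\top})=\mathrm{Tr}_{\mathbb{F}_{4}/\mathbb{F}_{2}}(a_{0}b_{0})+\mathrm{Tr}_{\mathbb{F}_{4}/\mathbb{F}_{2}}(\omega a_{1}b_{1})$ does hold --- the surviving terms in the paper's expansion, $a_{0}^{(1)}b_{0}^{(2)}+a_{0}^{(2)}b_{0}^{(1)}+a_{0}^{(2)}b_{0}^{(2)}$ and $a_{1}^{(1)}b_{1}^{(1)}+a_{1}^{(1)}b_{1}^{(2)}+a_{1}^{(2)}b_{1}^{(1)}$, are exactly these two traces --- so your cross-term-free factorization is just a cleaner packaging of the same cancellation the paper verifies term by term.
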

\begin{proof}
Let $SR(\boldsymbol{a}_{0},\boldsymbol{a}_{1})\in SR(C_{0},C_{1})$, $SR(\boldsymbol{b}_{0},\boldsymbol{b}_{1})\in SR(C_{0}^{\bot},C_{1}^{\bot})$, where $\boldsymbol{a}_{0}=(a_{0,1},\ldots,a_{0,t})\in C_{0}$, $\boldsymbol{a}_{1}=(a_{1,1},\ldots,a_{1,t})\in C_{1}$, $\boldsymbol{b}_{0}=(b_{0,1},\ldots,b_{0,t})\in C_{0}^{\bot}$, $\boldsymbol{b}_{1}=(b_{1,1},\ldots,b_{1,t})\in C_{1}^{\bot}$.
Then \begin{equation}\label{equation1}
       \sum_{j=1}^{t}a_{0,j}b_{0,j}=0,\:\sum_{j=1}^{t}a_{1,j}b_{1,j}=0.\tag{$\star$}
     \end{equation}
Let $w$ be a primitive element of $\mathbb{F}_{4}$. Then $w^{2}=w+1$ and $\{1,w\}$ is an ordered basis of $\mathbb{F}_{4}$ over $\mathbb{F}_{2}$.
Let $a_{i,j}=a_{i,j}^{(1)}+a_{i,j}^{(2)}w$, $b_{i,j}=b_{i,j}^{(1)}+b_{i,j}^{(2)}w$, where $a_{i,j}^{(1)}$, $a_{i,j}^{(2)}$, $b_{i,j}^{(1)}$, $b_{i,j}^{(2)}\in\mathbb{F}_{2}$, $i=0,1$, $j=1,\ldots,t$.
According to (\ref{equation1}), we have $$\sum_{j=1}^{t}[a_{0,j}^{(1)}b_{0,j}^{(1)}+a_{0,j}^{(2)}b_{0,j}^{(2)}]=0,\:
\sum_{j=1}^{t}[a_{0,j}^{(1)}b_{0,j}^{(2)}+a_{0,j}^{(2)}b_{0,j}^{(1)}+a_{0,j}^{(2)}b_{0,j}^{(2)}]=0,$$ $$\sum_{j=1}^{t}[a_{1,j}^{(1)}b_{1,j}^{(1)}+a_{1,j}^{(2)}b_{1,j}^{(2)}]=0,\:
\sum_{j=1}^{t}[a_{1,j}^{(1)}b_{1,j}^{(2)}+a_{1,j}^{(2)}b_{1,j}^{(1)}+a_{1,j}^{(2)}b_{1,j}^{(2)}]=0.$$
Also, the linear map $a_{0,j}x+a_{1,j}x^{2}$ over $\mathbb{F}_{4}$ corresponds to the matrix $$A_{j}=\begin{pmatrix}
a_{0,j}^{(1)}+a_{1,j}^{(1)} & a_{0,j}^{(2)}+a_{1,j}^{(2)}+a_{1,j}^{(1)}\\
a_{0,j}^{(2)}+a_{1,j}^{(2)} & a_{0,j}^{(1)}+a_{1,j}^{(1)}+a_{0,j}^{(2)}
\end{pmatrix},$$
and the linear map $b_{0,j}x+b_{1,j}x^{2}$ over $\mathbb{F}_{4}$ corresponds to the matrix $$B_{j}=\begin{pmatrix}
b_{0,j}^{(1)}+b_{1,j}^{(1)} & b_{0,j}^{(2)}+b_{1,j}^{(2)}+b_{1,j}^{(1)}\\
b_{0,j}^{(2)}+b_{1,j}^{(2)} & b_{0,j}^{(1)}+b_{1,j}^{(1)}+b_{0,j}^{(2)}
\end{pmatrix}.$$
In the following, we compute the trace inner product of $SR(\boldsymbol{a}_{0},\boldsymbol{a}_{1})$ and $SR(\boldsymbol{b}_{0},\boldsymbol{b}_{1})$. We have
\begin{equation*}
\begin{aligned}
\langle SR(\boldsymbol{a}_{0},\boldsymbol{a}_{1}), SR(\boldsymbol{b}_{0},\boldsymbol{b}_{1})\rangle_{tr}
 = & \sum_{j=1}^{t}Tr(A_{j}B_{j}^{\top})\\
 = & \sum_{j=1}^{t}[(a_{0,j}^{(1)}+a_{1,j}^{(1)})(b_{0,j}^{(1)}+b_{1,j}^{(1)})
+(a_{0,j}^{(2)}+a_{1,j}^{(2)}+a_{1,j}^{(1)})(b_{0,j}^{(2)}+b_{1,j}^{(2)}+b_{1,j}^{(1)})\\
 & +(a_{0,j}^{(2)}+a_{1,j}^{(2)})(b_{0,j}^{(2)}+b_{1,j}^{(2)})
+(a_{0,j}^{(1)}+a_{1,j}^{(1)}+a_{0,j}^{(2)})(b_{0,j}^{(1)}+b_{1,j}^{(1)}+b_{0,j}^{(2)})]\\
= & \sum_{j=1}^{t} [a_{1,j}^{(1)}b_{0,j}^{(2)}+a_{1,j}^{(1)}b_{1,j}^{(2)}+a_{0,j}^{(2)}b_{1,j}^{(1)}+a_{1,j}^{(2)}b_{1,j}^{(1)}+a_{1,j}^{(1)}b_{1,j}^{(1)}\\
& +a_{0,j}^{(1)}b_{0,j}^{(2)}+a_{1,j}^{(1)}b_{0,j}^{(2)}+a_{0,j}^{(2)}b_{0,j}^{(1)}+a_{0,j}^{(2)}b_{1,j}^{(1)}+a_{0,j}^{(2)}b_{0,j}^{(2)}]\\
= & 0.
\end{aligned}
\end{equation*}
Thus, $SR(\boldsymbol{b}_{0},\boldsymbol{b}_{1})\in SR(C_{0},C_{1})^{\bot_{tr}}$, giving $SR(C_{0}^{\bot},C_{1}^{\bot})\subseteq SR(C_{0},C_{1})^{\bot_{tr}}$.\\
Let $\dim(C_{0})=k_{0}$, $\dim(C_{1})=k_{1}$. Then $\dim(C_{0}^{\bot})=t-k_{0}$, $\dim(C_{1}^{\bot})=t-k_{1}$.
By Theorem \ref{Theorem 2.3},
$$\dim(SR(C_{0},C_{1})^{\bot_{tr}})=4t-2(k_{0}+k_{1})=2(2t-k_{0}-k_{1}),$$
$$\dim(SR(C_{0}^{\bot},C_{1}^{\bot}))=2(t-k_{0}+t-k_{1})=2(2t-k_{0}-k_{1}).$$
Thus, $\dim(SR(C_{0},C_{1})^{\bot_{tr}})=\dim(SR(C_{0}^{\bot},C_{1}^{\bot}))$.\\
Therefore, $SR(C_{0},C_{1})^{\bot_{tr}}=SR(C_{0}^{\bot},C_{1}^{\bot})$.
\end{proof}
\begin{Remark}
Let $C_{i}$ be a linear code in $\mathbb{F}_{q^m}^{t}$, $0\leq i\leq m-1$. It is very challenging to prove the more general result: $$SR(C_{0},C_{1},\ldots,C_{m-1})^{\bot_{tr}}=SR(C_{0}^{\bot},C_{1}^{\bot},\ldots,C_{m-1}^{\bot}).$$
When $q$ and $m$ are certain, we can get the irreducible polynomial of degree $m$ over $\mathbb{F}_{q}$ exactly. But for the general case, it is impossible to determine the coefficients of the irreducible polynomial. Hence, we can not give the isomorphism  $$\mathbb{F}_{q}^{(m,m)}\cong\{a_{0}x+a_{1}x^{q}+\cdots+a_{m-1}x^{q^{m-1}}\,:\,a_{0},\ldots,a_{m-1}\in\mathbb{F}_{q^{m}}\}$$ in details which makes the proof difficult.
\end{Remark}

\subsection{The Dual Code of $Mat_{\mathcal{B}}(C)$}
In this subsection, we aim to characterize the dual code of another class of sum-rank codes. It is inspired from \cite{R} which mainly studied the duality theory in rank metric.
In Remark \ref{Remark 2.1}, we point out that $\mathbb{F}_{q}^{(m_{1},n_{1}),\ldots,(m_{t},n_{t})}\cong\mathbb{F}_{q^{m}}^{N}$ if $m_{1}=\cdots=m_{t}=m$ and $N=n_{1}+\cdots+n_{t}$. This is an $\mathbb{F}_{q}$-linear isomorphism between the two linear spaces. Now, we give the isomorphism exactly. Let $\mathcal{B}=\{\gamma_{1},\ldots,\gamma_{m}\}$ be the basis of $\mathbb{F}_{q^{m}}$ over $\mathbb{F}_{q}$ and $\boldsymbol{X}=(\boldsymbol{X}^{(1)},\ldots,\boldsymbol{X}^{(t)})\in\mathbb{F}_{q^{m}}^{N}$, where $\boldsymbol{X}^{(i)}=(\boldsymbol{X}^{(i)}_{1},\ldots,\boldsymbol{X}^{(i)}_{n_{i}})\in\mathbb{F}_{q^{m}}^{n_{i}}$.
Set $Mat_{\mathcal{B}}^{(i)}\,:\,\mathbb{F}_{q^{m}}^{n_{i}}\rightarrow\mathbb{F}_{q}^{(m,n_{i})}$, $Mat_{\mathcal{B}}^{(i)}(\boldsymbol{X}^{(i)})=M_{\mathcal{B}}^{(i)}$,
where $\boldsymbol{X}_{s}^{(i)}=\sum_{j=1}^{m}(M_{\mathcal{B}}^{(i)})_{js} \gamma_{j}$, $1\leq s\leq n_{i}$.
In other words, the $s$-th row of $M_{\mathcal{B}}^{(i)}$ is the expansion of $\boldsymbol{X}_{s}^{(i)}$ over the basis $\mathcal{B}$.
Let $Mat_{\mathcal{B}}\,:\,\mathbb{F}_{q^{m}}^{N}\rightarrow\mathbb{F}_{q}^{(m,n_{1}),\ldots,(m,n_{t})}$, $Mat_{\mathcal{B}}(\boldsymbol{X})=(Mat_{\mathcal{B}}^{(1)}(\boldsymbol{X}^{(1)}),\ldots,Mat_{\mathcal{B}}^{(t)}(\boldsymbol{X}^{(t)}))
=(M_{\mathcal{B}}^{(1)},\ldots,M_{\mathcal{B}}^{(t)})=M_{\mathcal{B}}$.
We note that these notations will be used throughout the subsection. The following result is immediate.
\begin{Lemma}\label{Lemma 3.3}
Let $C$ be a linear code in $\mathbb{F}_{q^{m}}^{N}$. Then $Mat_{\mathcal{B}}(C)$ is a linear sum-rank code in $\mathbb{F}_{q}^{(m,n_{1}),\ldots,(m,n_{t})}$. In addition, $\dim_{\mathbb{F}_{q}}(Mat_{\mathcal{B}}(C))=m\dim_{\mathbb{F}_{q^{m}}}(C),\:d_{sr}(Mat_{\mathcal{B}}(C))=d_{sr}(C).$
\end{Lemma}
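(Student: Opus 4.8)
The plan is to show that $Mat_{\mathcal{B}}$ is an $\mathbb{F}_{q}$-linear isomorphism that preserves the sum-rank weight, and then to read off all three assertions from this single fact. First I would verify linearity: since coordinate expansion over a fixed basis $\mathcal{B}$ is $\mathbb{F}_{q}$-linear, each block map satisfies $Mat_{\mathcal{B}}^{(i)}(\lambda\boldsymbol{X}^{(i)}+\mu\boldsymbol{Y}^{(i)})=\lambda Mat_{\mathcal{B}}^{(i)}(\boldsymbol{X}^{(i)})+\mu Mat_{\mathcal{B}}^{(i)}(\boldsymbol{Y}^{(i)})$ for all $\lambda,\mu\in\mathbb{F}_{q}$, and assembling the $t$ blocks shows that $Mat_{\mathcal{B}}$ is $\mathbb{F}_{q}$-linear. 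It is moreover bijective, because expansion over $\mathcal{B}$ identifies $\mathbb{F}_{q^{m}}$ with $\mathbb{F}_{q}^{m}$ coordinatewise, hence $\mathbb{F}_{q^{m}}^{N}$ with $\mathbb{F}_{q}^{(m,n_{1}),\ldots,(m,n_{t})}$. Consequently the image $Mat_{\mathcal{B}}(C)$ of the $\mathbb{F}_{q}$-subspace $C$ is an $\mathbb{F}_{q}$-subspace of $\mathbb{F}_{q}^{(m,n_{1}),\ldots,(m,n_{t})}$, that is, a linear sum-rank code.

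For the dimension, since $Mat_{\mathcal{B}}$ is injective and $\mathbb{F}_{q}$-linear we have $\dim_{\mathbb{F}_{q}}(Mat_{\mathcal{B}}(C))=\dim_{\mathbb{F}_{q}}(C)$. Regarding the $\mathbb{F}_{q^{m}}$-space $C$ as an $\mathbb{F}_{q}$-space multiplies its dimension by $[\mathbb{F}_{q^{m}}:\mathbb{F}_{q}]=m$, so $\dim_{\mathbb{F}_{q}}(C)=m\dim_{\mathbb{F}_{q^{m}}}(C)$, which gives the claimed dimension.

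The main step, and the only one that genuinely needs an argument, is the equality of distances. I would first prove the per-block identity
\begin{equation*}
rank(M_{\mathcal{B}}^{(i)})=\dim_{\mathbb{F}_{q}}\langle\boldsymbol{X}_{1}^{(i)},\ldots,\boldsymbol{X}_{n_{i}}^{(i)}\rangle_{\mathbb{F}_{q}}.
\end{equation*}
Indeed, by the defining relation $\boldsymbol{X}_{s}^{(i)}=\sum_{j=1}^{m}(M_{\mathcal{B}}^{(i)})_{js}\gamma_{j}$, the columns of $M_{\mathcal{B}}^{(i)}$ are exactly the coordinate vectors $\phi(\boldsymbol{X}_{s}^{(i)})\in\mathbb{F}_{q}^{m}$ of the entries $\boldsymbol{X}_{s}^{(i)}$ under the coordinate isomorphism $\phi\colon\mathbb{F}_{q^{m}}\to\mathbb{F}_{q}^{m}$ attached to $\mathcal{B}$. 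The rank of $M_{\mathcal{B}}^{(i)}$ is the $\mathbb{F}_{q}$-dimension of the span of its columns, and since $\phi$ is an $\mathbb{F}_{q}$-isomorphism it carries $\langle\boldsymbol{X}_{1}^{(i)},\ldots,\boldsymbol{X}_{n_{i}}^{(i)}\rangle_{\mathbb{F}_{q}}$ isomorphically onto that column span, preserving dimension. Summing over $i$ then yields $wt_{sr}(Mat_{\mathcal{B}}(\boldsymbol{X}))=\sum_{i=1}^{t}\dim_{\mathbb{F}_{q}}\langle\boldsymbol{X}_{1}^{(i)},\ldots,\boldsymbol{X}_{n_{i}}^{(i)}\rangle_{\mathbb{F}_{q}}$, which is precisely the sum-rank weight of $\boldsymbol{X}$ in the sense of Remark~\ref{Remark 2.1}.

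Finally, linearity lets me pass from weights to distances: for $\boldsymbol{X},\boldsymbol{Y}\in C$ we have $Mat_{\mathcal{B}}(\boldsymbol{X})-Mat_{\mathcal{B}}(\boldsymbol{Y})=Mat_{\mathcal{B}}(\boldsymbol{X}-\boldsymbol{Y})$, so $d_{sr}(Mat_{\mathcal{B}}(\boldsymbol{X}),Mat_{\mathcal{B}}(\boldsymbol{Y}))=wt_{sr}(Mat_{\mathcal{B}}(\boldsymbol{X}-\boldsymbol{Y}))=d_{sr}(\boldsymbol{X},\boldsymbol{Y})$ by the previous paragraph and the definition in Remark~\ref{Remark 2.1}. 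Since $Mat_{\mathcal{B}}$ restricts to a bijection from $C$ onto $Mat_{\mathcal{B}}(C)$, taking the minimum over distinct pairs gives $d_{sr}(Mat_{\mathcal{B}}(C))=d_{sr}(C)$. The only real content is the rank identity in the third step; the rest is bookkeeping with the linear isomorphism, which is why the result can be stated as immediate.
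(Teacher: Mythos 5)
Your proof is correct and is exactly the argument the paper has in mind: the paper states Lemma~\ref{Lemma 3.3} without proof (``the following result is immediate''), and what makes it immediate is precisely your observation that $Mat_{\mathcal{B}}$ is an $\mathbb{F}_{q}$-linear bijection together with the per-block identity $rank(M_{\mathcal{B}}^{(i)})=\dim_{\mathbb{F}_{q}}\langle\boldsymbol{X}_{1}^{(i)},\ldots,\boldsymbol{X}_{n_{i}}^{(i)}\rangle_{\mathbb{F}_{q}}$, which matches the sum-rank weight on $\mathbb{F}_{q^{m}}^{N}$ as defined in Remark~\ref{Remark 2.1}. One minor remark: you correctly follow the defining relation $\boldsymbol{X}_{s}^{(i)}=\sum_{j=1}^{m}(M_{\mathcal{B}}^{(i)})_{js}\gamma_{j}$, under which the coordinate vectors are the \emph{columns} of $M_{\mathcal{B}}^{(i)}$; the paper's verbal gloss that they form the $s$-th row is a typo in the source, not a gap in your argument.
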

\begin{Remark}
  Since $Mat_{\mathcal{B}}$ is an $\mathbb{F}_{q}$-linear isomorphism instead of an $\mathbb{F}_{q^{m}}$-linear isomorphism, if $C$ is a linear code in $\mathbb{F}_{q^{m}}^{N}$, $Mat_{\mathcal{B}}(C)$ is a linear sum-rank code in $\mathbb{F}_{q}^{(m,n_{1}),\ldots,(m,n_{t})}$, but we do not guarantee that every linear sum-rank code can be obtained in this way.
\end{Remark}
Let Trace : $\mathbb{F}_{q^{m}}\,\rightarrow\,\mathbb{F}_{q}$ be the $\mathbb{F}_{q}$-linear trace map given by $Trace(\alpha)\,:=\alpha+\alpha^{q}+\cdots+\alpha^{q^{m-1}}$ for all $\alpha\in\mathbb{F}_{q^{m}}$.
The bases $\mathcal{B}=\{\gamma_{1},\ldots,\gamma_{m}\}$ and $\mathcal{B}^{\prime}=\{\gamma_{1}^{\prime},\ldots,\gamma_{m}^{\prime}\}$ of $\mathbb{F}_{q^{m}}$ over $\mathbb{F}_{q}$ are said to be dual if $Trace(\gamma_{i}\gamma_{j}^{\prime})=\delta_{ij}=\begin{cases}
                                                                                                                      1, & \mbox{if } i=j, \\
                                                                                                                      0, & \mbox{if } i\neq j,
                                                                                                                    \end{cases}$ for all $i,\, j\in\{1,\ldots,m\}$.
From \!\cite[p58]{LN}, we get the fact that for every basis $\mathcal{B}$ of $\mathbb{F}_{q^{m}}$ over $\mathbb{F}_{q}$ there exists a unique dual basis $\mathcal{B}^{\prime}$. In the following, we give the characterization of $Mat_{\mathcal{B}}(C)^{\bot_{tr}}$.
\begin{Theorem}\label{Lemma 3.4}
Let $C$ be a linear code in $\mathbb{F}_{q^{m}}^{N}$ and $\mathcal{B}=\{\gamma_{1},\ldots,\gamma_{m}\}$ and $\mathcal{B}^{\prime}=\{\gamma_{1}^{\prime},\ldots,\gamma_{m}^{\prime}\}$
be dual bases of $\mathbb{F}_{q^{m}}$ over $\mathbb{F}_{q}$.
We have $$Mat_{\mathcal{B}}(C)^{\bot_{tr}}=Mat_{\mathcal{B}^{\prime}}(C^{\bot}).$$
\end{Theorem}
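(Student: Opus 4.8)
The plan is to prove the two inclusions $Mat_{\mathcal{B}^{\prime}}(C^{\bot})\subseteq Mat_{\mathcal{B}}(C)^{\bot_{tr}}$ and $Mat_{\mathcal{B}}(C)^{\bot_{tr}}\subseteq Mat_{\mathcal{B}^{\prime}}(C^{\bot})$, where the first inclusion is the computational heart of the argument and the second follows from a dimension count. The key observation that drives everything is a \emph{trace-compatibility identity} relating the Euclidean inner product on $\mathbb{F}_{q^{m}}^{N}$ to the trace inner product on the matrix space: if $\boldsymbol{X}\in\mathbb{F}_{q^{m}}^{N}$ with matrix image $M_{\mathcal{B}}=Mat_{\mathcal{B}}(\boldsymbol{X})$, and $\boldsymbol{Y}\in\mathbb{F}_{q^{m}}^{N}$ with matrix image $M^{\prime}_{\mathcal{B}^{\prime}}=Mat_{\mathcal{B}^{\prime}}(\boldsymbol{Y})$ taken with respect to the \emph{dual} basis, then I expect
\begin{equation*}
\langle Mat_{\mathcal{B}}(\boldsymbol{X}),Mat_{\mathcal{B}^{\prime}}(\boldsymbol{Y})\rangle_{tr}=Trace\bigl(\langle\boldsymbol{X},\boldsymbol{Y}\rangle\bigr),
\end{equation*}
where $\langle\boldsymbol{X},\boldsymbol{Y}\rangle=\sum_{i=1}^{t}\sum_{s=1}^{n_{i}}\boldsymbol{X}^{(i)}_{s}\boldsymbol{Y}^{(i)}_{s}\in\mathbb{F}_{q^{m}}$ is the ordinary Euclidean inner product over $\mathbb{F}_{q^{m}}$. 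This is precisely the reason the dual basis must appear on the right-hand side.

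To establish this identity, I would work block by block. Fix a block $i$ and write $\boldsymbol{X}^{(i)}_{s}=\sum_{j=1}^{m}(M^{(i)}_{\mathcal{B}})_{js}\gamma_{j}$ and $\boldsymbol{Y}^{(i)}_{s}=\sum_{l=1}^{m}(M^{\prime(i)}_{\mathcal{B}^{\prime}})_{ls}\gamma_{l}^{\prime}$ by the definition of $Mat_{\mathcal{B}}$. Then $Tr(M^{(i)}_{\mathcal{B}}(M^{\prime(i)}_{\mathcal{B}^{\prime}})^{\top})=\sum_{j,s}(M^{(i)}_{\mathcal{B}})_{js}(M^{\prime(i)}_{\mathcal{B}^{\prime}})_{js}$, the sum of entrywise products over $\mathbb{F}_{q}$. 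On the other side, $Trace(\boldsymbol{X}^{(i)}_{s}\boldsymbol{Y}^{(i)}_{s})=\sum_{j,l}(M^{(i)}_{\mathcal{B}})_{js}(M^{\prime(i)}_{\mathcal{B}^{\prime}})_{ls}\,Trace(\gamma_{j}\gamma_{l}^{\prime})$, and the defining property $Trace(\gamma_{j}\gamma_{l}^{\prime})=\delta_{jl}$ collapses the double sum to $\sum_{j}(M^{(i)}_{\mathcal{B}})_{js}(M^{\prime(i)}_{\mathcal{B}^{\prime}})_{js}$. Summing over $s$ and then over $i$ matches the two expressions and yields the identity; here I am using $\mathbb{F}_{q}$-linearity of the trace map to pull scalars out, together with the fact that products of $\mathbb{F}_{q}$-entries land back in $\mathbb{F}_{q}$ so $Trace$ acts as multiplication by one on them.

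With the identity in hand, the first inclusion is immediate: if $\boldsymbol{Y}\in C^{\bot}$ then for every $\boldsymbol{X}\in C$ we have $\langle\boldsymbol{X},\boldsymbol{Y}\rangle=0$, hence $\langle Mat_{\mathcal{B}}(\boldsymbol{X}),Mat_{\mathcal{B}^{\prime}}(\boldsymbol{Y})\rangle_{tr}=Trace(0)=0$, so $Mat_{\mathcal{B}^{\prime}}(\boldsymbol{Y})\in Mat_{\mathcal{B}}(C)^{\bot_{tr}}$. For the reverse inclusion I would compare dimensions over $\mathbb{F}_{q}$. By Lemma~\ref{Lemma 3.3}, $\dim_{\mathbb{F}_{q}}(Mat_{\mathcal{B}^{\prime}}(C^{\bot}))=m\dim_{\mathbb{F}_{q^{m}}}(C^{\bot})=m(N-\dim_{\mathbb{F}_{q^{m}}}C)$, while $\dim_{\mathbb{F}_{q}}(Mat_{\mathcal{B}}(C)^{\bot_{tr}})=mN-\dim_{\mathbb{F}_{q}}(Mat_{\mathcal{B}}(C))=mN-m\dim_{\mathbb{F}_{q^{m}}}C$, using that the trace inner product is a nondegenerate $\mathbb{F}_{q}$-bilinear form on the $mN$-dimensional space. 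The two dimensions coincide, so the established inclusion is an equality.

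The main obstacle I anticipate is bookkeeping rather than conceptual: carefully tracking the distinction between $\mathcal{B}$ on one side and the dual basis $\mathcal{B}^{\prime}$ on the other, and making sure the row-versus-column conventions in the definition of $Mat_{\mathcal{B}}^{(i)}$ (the $s$-th \emph{row} of $M_{\mathcal{B}}^{(i)}$ records the expansion of $\boldsymbol{X}_{s}^{(i)}$) are threaded correctly through the transpose in $Tr(M^{(i)}_{\mathcal{B}}(M^{\prime(i)}_{\mathcal{B}^{\prime}})^{\top})$. Once the trace-compatibility identity is stated with the dual basis in the right slot, the orthogonality and the dimension count are routine, so I would invest the care up front in proving that identity cleanly.
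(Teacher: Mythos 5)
Your proposal is correct and takes essentially the same route as the paper: the trace-compatibility identity $\langle Mat_{\mathcal{B}}(\boldsymbol{X}),Mat_{\mathcal{B}^{\prime}}(\boldsymbol{Y})\rangle_{tr}=Trace(\langle\boldsymbol{X},\boldsymbol{Y}\rangle)$, established via $Trace(\gamma_{j}\gamma_{l}^{\prime})=\delta_{jl}$, yields $Mat_{\mathcal{B}^{\prime}}(C^{\bot})\subseteq Mat_{\mathcal{B}}(C)^{\bot_{tr}}$, and equality follows from the identical dimension count based on Lemma~\ref{Lemma 3.3}. The only cosmetic difference is that you isolate the identity up front, whereas the paper obtains it inline by inserting $Trace(\gamma_{j}^{\prime}\gamma_{j})=1$ and regrouping the sums.
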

\begin{proof}
  Let $M_{\mathcal{B}^{\prime}}=(M_{\mathcal{B}^{\prime}}^{(1)},\ldots,M_{\mathcal{B}^{\prime}}^{(t)})\in Mat_{\mathcal{B}^{\prime}}(C^{\bot})$,
  $N_{\mathcal{B}}=(N_{\mathcal{B}}^{(1)},\ldots,N_{\mathcal{B}}^{(t)})\in Mat_{\mathcal{B}}(C)$.
  Then we have $\boldsymbol{X}=(\boldsymbol{X}^{(1)},\ldots,\boldsymbol{X}^{(t)})\in C^{\bot}$, $\boldsymbol{Y}=(\boldsymbol{Y}^{(1)},\ldots,\boldsymbol{Y}^{(t)})\in C$ such that $M_{\mathcal{B}^{\prime}}=Mat_{\mathcal{B}^{\prime}}(\boldsymbol{X})$, $N_{\mathcal{B}}=Mat_{\mathcal{B}}(\boldsymbol{Y})$.
  Since $\langle\boldsymbol{X},\boldsymbol{Y}\rangle=0$,

 \small{ \begin{equation*}
    \begin{aligned}
    \langle M_{\mathcal{B}^{\prime}},N_{\mathcal{B}} \rangle_{tr} & =\sum_{i=1}^{t} Tr(M_{\mathcal{B}^{\prime}}^{(i)}(N_{\mathcal{B}}^{(i)})^{\top})\\
    & =\sum_{i=1}^{t}\sum_{s=1}^{n_{i}}\sum_{j=1}^{m}(M_{\mathcal{B}^{\prime}}^{(i)})_{js}(N_{\mathcal{B}}^{(i)})_{js}\\
    & =\sum_{i=1}^{t}\sum_{s=1}^{n_{i}}\big(\sum_{j=1}^{m}(M_{\mathcal{B}^{\prime}}^{(i)})_{js}(N_{\mathcal{B}}^{(i)})_{js}
    Trace(\gamma_{j}^{\prime}\gamma_{j})\big)\\
    & =Trace\big(\sum_{i=1}^{t}\sum_{s=1}^{n_{i}}(\sum_{j=1}^{m}(M_{\mathcal{B}^{\prime}}^{(i)})_{js}\gamma_{j}^{\prime})
    (\sum_{j=1}^{m}(N_{\mathcal{B}}^{(i)})_{js}\gamma_{j})\big)\\
    & =Trace\big(\sum_{i=1}^{t}\sum_{s=1}^{n_{i}}\boldsymbol{X}_{s}^{(i)}\boldsymbol{Y}_{s}^{(i)}\big)\\
    & =Trace(\langle\boldsymbol{X},\boldsymbol{Y}\rangle)\\
    & =0.
    \end{aligned}
  \end{equation*}}\\
  Thus, $M_{\mathcal{B}^{\prime}}\in Mat_{\mathcal{B}}(C)^{\bot_{tr}}$, giving $Mat_{\mathcal{B}^{\prime}}(C^{\bot})\subseteq Mat_{\mathcal{B}}(C)^{\bot_{tr}}$.
  Let $k=\dim_{\mathbb{F}_{q^{m}}}(C)$. Then
  $$\dim_{\mathbb{F}_{q}}(Mat_{\mathcal{B}}(C)^{\bot_{tr}})=mN-mk=m(N-k)=\dim_{\mathbb{F}_{q}}(Mat_{\mathcal{B}^{\prime}}(C^{\bot})).$$
  Therefore, $Mat_{\mathcal{B}}(C)^{\bot_{tr}}=Mat_{\mathcal{B}^{\prime}}(C^{\bot}).$
\end{proof}
Let $\mathcal{B}=\{\gamma_{1},\ldots,\gamma_{m}\}$ and $\mathcal{B}^{\prime}=\{\gamma_{1}^{\prime},\ldots,\gamma_{m}^{\prime}\}$ be dual bases of $\mathbb{F}_{q^{m}}$ over $\mathbb{F}_{q}$. If $\mathcal{B}=\mathcal{B}^{\prime}$, $\mathcal{B}$ is said to be a self-dual basis. By \cite[Theorem 1]{JMV}, a self-dual basis $\mathcal{B}$ of $\mathbb{F}_{q^{m}}$ over $\mathbb{F}_{q}$ exists if and only if $q$ is even or both $q$ and $m$ are odd. Using these knowledge, it is easy to get the corollary below.
\begin{Corollary}\label{Corollary 3.1}
Let $C$ be a linear code in $\mathbb{F}_{q^{m}}^{N}$ where $q$ is even or both $q$ and $m$ are odd. If $\mathcal{B}$ is a self-dual basis of $\mathbb{F}_{q^{m}}$ over $\mathbb{F}_{q}$,
$$Mat_{\mathcal{B}}(C)^{\bot_{tr}}=Mat_{\mathcal{B}}(C^{\bot}).$$
\end{Corollary}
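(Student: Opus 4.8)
The plan is to obtain the corollary as a direct specialization of Theorem~\ref{Lemma 3.4}, since the hypothesis has been arranged precisely so that the two dual bases appearing there collapse into a single self-dual basis. First I would recall the definition introduced just before the statement: $\mathcal{B}$ is self-dual when its (unique) trace-dual basis $\mathcal{B}^{\prime}$ coincides with $\mathcal{B}$ itself, i.e. $Trace(\gamma_{i}\gamma_{j})=\delta_{ij}$ for all $i,j$. The role of the arithmetic condition ``$q$ even or both $q$ and $m$ odd'' is only to guarantee, via \cite[Theorem 1]{JMV}, that such a self-dual basis actually exists; under this hypothesis the statement is therefore non-vacuous, and one fixes such a $\mathcal{B}$.

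With the self-dual basis $\mathcal{B}$ fixed, I would apply Theorem~\ref{Lemma 3.4} to the linear code $C\subseteq\mathbb{F}_{q^{m}}^{N}$, taking the pair of dual bases there to be $\mathcal{B}$ together with its trace-dual $\mathcal{B}^{\prime}$. Theorem~\ref{Lemma 3.4} then yields
$$Mat_{\mathcal{B}}(C)^{\bot_{tr}}=Mat_{\mathcal{B}^{\prime}}(C^{\bot}).$$
Substituting $\mathcal{B}^{\prime}=\mathcal{B}$, which is exactly the self-duality of $\mathcal{B}$, the right-hand side becomes $Mat_{\mathcal{B}}(C^{\bot})$, and the claimed identity $Mat_{\mathcal{B}}(C)^{\bot_{tr}}=Mat_{\mathcal{B}}(C^{\bot})$ follows at once.

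There is essentially no genuine obstacle here: all of the real content — the evaluation of the trace inner product through the dual-basis relation $Trace(\gamma_{j}^{\prime}\gamma_{j})=1$ and the accompanying dimension count $mN-mk=m(N-k)$ — has already been carried out in the proof of Theorem~\ref{Lemma 3.4}. The only thing that needs to be observed is the logical point that self-duality is precisely the equality $\mathcal{B}=\mathcal{B}^{\prime}$ of a basis with its trace-dual, so that the single expansion map $Mat_{\mathcal{B}}$ simultaneously plays both roles on the two sides of the identity. If anything at all requires care, it is merely invoking \cite[Theorem 1]{JMV} correctly to confirm that the condition on $(q,m)$ is exactly the existence criterion for a self-dual basis; the underlying algebraic identity is then immediate.
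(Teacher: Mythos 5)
Your proposal is correct and matches the paper's own (implicit) argument exactly: the paper derives Corollary~\ref{Corollary 3.1} by specializing Theorem~\ref{Lemma 3.4} to the case $\mathcal{B}^{\prime}=\mathcal{B}$, with the condition on $(q,m)$ serving only to guarantee, via \cite[Theorem 1]{JMV}, that a self-dual basis exists. Nothing further is needed.
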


\section{Self-Dual Codes and LCD Codes in Sum-Rank Metric}
In this section, we firstly give the definitions of self-dual sum-rank codes and LCD sum-rank codes and derive some properties of such codes.
\begin{Definition}
  Let $C$ be a linear sum-rank code in $\mathbb{F}_{q}^{(m_{1},n_{1}),\ldots,(m_{t},n_{t})}$. $C$ is self-dual provided $C=C^{\bot_{tr}}$ and linear complementary dual (LCD) provided $C\cap C^{\bot_{tr}}=\{\boldsymbol{0}\}$.
\end{Definition}
The following are some properties of self-dual sum-rank codes.
\begin{Proposition}
  If a self-dual sum-rank code $C\subseteq\mathbb{F}_{q}^{(m_{1},n_{1}),\ldots,(m_{t},n_{t})}$ exists, then $m_{1}n_{1}+\cdots+m_{t}n_{t}$ must be even. Furthermore, the dimension of $C$ is $\frac{1}{2}(m_{1}n_{1}+\cdots+m_{t}n_{t})$.
\end{Proposition}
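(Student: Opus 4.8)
The plan is to reduce everything to the single structural fact that the trace inner product is a nondegenerate symmetric $\mathbb{F}_{q}$-bilinear form on the ambient space $V:=\mathbb{F}_{q}^{(m_{1},n_{1}),\ldots,(m_{t},n_{t})}$, which is an $\mathbb{F}_{q}$-vector space of dimension $D:=\sum_{i=1}^{t}m_{i}n_{i}$. Once nondegeneracy is in hand, the standard linear-algebra complementarity identity
$$\dim_{\mathbb{F}_{q}}C+\dim_{\mathbb{F}_{q}}C^{\bot_{tr}}=D$$
holds for every linear subspace $C\subseteq V$; this is precisely the dimension count already invoked in the proof of Theorem~\ref{Lemma 3.4} (where $\dim_{\mathbb{F}_{q}}(Mat_{\mathcal{B}}(C)^{\bot_{tr}})=mN-mk$ is used). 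Feeding the self-dual hypothesis $C=C^{\bot_{tr}}$ into this identity will force $2\dim_{\mathbb{F}_{q}}C=D$, from which both claims follow at once.

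First I would verify the three properties of $\langle\cdot,\cdot\rangle_{tr}$. Bilinearity over $\mathbb{F}_{q}$ is immediate from the linearity of the trace and of matrix multiplication in each argument. For symmetry, I would use $Tr(A)=Tr(A^{\top})$ together with $(M_{i}N_{i}^{\top})^{\top}=N_{i}M_{i}^{\top}$ to get $Tr(M_{i}N_{i}^{\top})=Tr(N_{i}M_{i}^{\top})$ blockwise, hence $\langle M,N\rangle_{tr}=\langle N,M\rangle_{tr}$. For nondegeneracy, the key observation is that expanding each block gives $Tr(M_{i}N_{i}^{\top})=\sum_{j,s}(M_{i})_{js}(N_{i})_{js}$, so that flattening every matrix entry into a coordinate vector of length $D$ identifies $\langle\cdot,\cdot\rangle_{tr}$ with the ordinary entrywise (Frobenius) dot product on $\mathbb{F}_{q}^{D}$. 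That form is manifestly nondegenerate: if $\langle M,N\rangle_{tr}=0$ for all $N$, choosing $N$ to be the indicator of a single entry forces that entry of $M$ to vanish, so $M=\boldsymbol{0}$.

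With nondegeneracy established, I would then record the complementarity identity, either by citing it as standard for a nondegenerate bilinear form on a finite-dimensional space or by giving the one-line argument that $C^{\bot_{tr}}$ is the kernel of the surjective $\mathbb{F}_{q}$-linear map $V\to C^{*}$, $v\mapsto(c\mapsto\langle v,c\rangle_{tr})$, whose surjectivity is exactly nondegeneracy restricted to $C$. Finally, substituting $C=C^{\bot_{tr}}$ yields $2\dim_{\mathbb{F}_{q}}C=D$, so $D=m_{1}n_{1}+\cdots+m_{t}n_{t}$ is even and $\dim_{\mathbb{F}_{q}}C=\tfrac{1}{2}(m_{1}n_{1}+\cdots+m_{t}n_{t})$. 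I do not expect any genuine obstacle here: the entire content is the verification of nondegeneracy, and even that is routine once the trace inner product is recognized as the Frobenius dot product on the flattened entries. The only point demanding a little care is to work consistently over the base field $\mathbb{F}_{q}$ (all dimensions are $\mathbb{F}_{q}$-dimensions), since the sum-rank ambient space is an $\mathbb{F}_{q}$-space and the form takes values in $\mathbb{F}_{q}$.
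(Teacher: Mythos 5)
Your proposal is correct and follows essentially the same route as the paper: the paper's proof simply asserts the dimension identity $\dim C+\dim C^{\bot_{tr}}=m_{1}n_{1}+\cdots+m_{t}n_{t}$ and feeds in $C=C^{\bot_{tr}}$ to get $2k=\sum_{i}m_{i}n_{i}$, exactly as you do. The only difference is that you explicitly justify that identity by verifying nondegeneracy of the trace form (via its identification with the Frobenius dot product on flattened entries), a step the paper takes for granted; this is a sound and welcome addition, not a different argument.
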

\begin{proof}
  Let the dimension of $C$ is $k$, then the dimension of $C^{\bot_{tr}}$ is $m_{1}n_{1}+\cdots+m_{t}n_{t}-k$. Because $C$ is self-dual, $k=m_{1}n_{1}+\cdots+m_{t}n_{t}-k$. Thus, $m_{1}n_{1}+\cdots+m_{t}n_{t}$ must be even and $k=\frac{1}{2}(m_{1}n_{1}+\cdots+m_{t}n_{t})$.
\end{proof}
\begin{Proposition}
  If $C\subseteq\mathbb{F}_{2^{l}}^{(m_{1},n_{1}),\ldots,(m_{t},n_{t})}$ is a self-dual sum-rank code, $C$ must contain the codeword whose every block is all ones matrix.
\end{Proposition}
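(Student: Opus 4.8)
The plan is to exhibit the all-ones-block codeword directly as an element of $C^{\bot_{tr}}$ and then invoke self-duality. Write $J=(J_{1},\ldots,J_{t})$, where each $J_{i}$ is the all-ones matrix in $\mathbb{F}_{2^{l}}^{(m_{i},n_{i})}$. Since $C=C^{\bot_{tr}}$, it suffices to prove $J\in C^{\bot_{tr}}$, i.e. $\langle J,N\rangle_{tr}=0$ for every $N\in C$; this immediately yields $J\in C^{\bot_{tr}}=C$.

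First I would unwind the trace inner product into a plain entrywise sum. For any $N=(N_{1},\ldots,N_{t})\in C$ one computes $\langle J,N\rangle_{tr}=\sum_{i=1}^{t}Tr(J_{i}N_{i}^{\top})=\sum_{i=1}^{t}\sum_{a,b}(N_{i})_{ab}$, that is, the sum of all entries of $N$ taken over all blocks. By the same reasoning, $\langle N,N\rangle_{tr}=\sum_{i=1}^{t}Tr(N_{i}N_{i}^{\top})=\sum_{i=1}^{t}\sum_{a,b}(N_{i})_{ab}^{2}$.

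The crucial step, and essentially the only substantive one, is that over a field of characteristic $2$ the squaring (Frobenius) map is additive, so $\sum_{i,a,b}(N_{i})_{ab}^{2}=\big(\sum_{i,a,b}(N_{i})_{ab}\big)^{2}$, all cross terms in the expansion of the square vanishing because $2=0$. Comparing with the two displays above gives the identity $\langle N,N\rangle_{tr}=\big(\langle J,N\rangle_{tr}\big)^{2}$.

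Finally, because $C=C^{\bot_{tr}}$, every codeword is orthogonal to itself, so $\langle N,N\rangle_{tr}=0$, whence $\big(\langle J,N\rangle_{tr}\big)^{2}=0$. As $\mathbb{F}_{2^{l}}$ has no zero divisors, $\langle J,N\rangle_{tr}=0$ for every $N\in C$, and therefore $J\in C^{\bot_{tr}}=C$, as claimed. I expect the only real obstacle to be recognizing the characteristic-$2$ identity $\sum x^{2}=\big(\sum x\big)^{2}$ and pairing it with the self-orthogonality $\langle N,N\rangle_{tr}=0$; the rest is routine bookkeeping with the matrix trace, and this argument parallels the classical fact that a binary self-dual code contains the all-ones word.
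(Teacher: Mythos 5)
Your proof is correct and follows essentially the same route as the paper: both establish $\langle N,N\rangle_{tr}=\bigl(\langle J,N\rangle_{tr}\bigr)^{2}$ via the characteristic-$2$ identity $\sum x^{2}=\bigl(\sum x\bigr)^{2}$, then use self-orthogonality of codewords to conclude $\langle J,N\rangle_{tr}=0$ and hence $J\in C^{\bot_{tr}}=C$. No gaps; your write-up merely spells out the trace bookkeeping a bit more explicitly than the paper does.
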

\begin{proof}
Let $M=(M^{(1)},\ldots,M^{(t)})\in C$ and $J=(J^{(1)},\ldots,J^{(t)})\in\mathbb{F}_{q}^{(m_{1},n_{1}),\ldots,(m_{t},n_{t})}$ where $J^{(i)}$ is all ones matrix, $1\leq i\leq t$.
Since $C\subseteq\mathbb{F}_{q}^{(m_{1},n_{1}),\ldots,(m_{t},n_{t})}$ is a self-dual sum-rank code and $q=2^{l}$,
$$0=\langle M,M \rangle_{tr}=\sum_{i=1}^{t}\sum_{j=1}^{m_{i}}\sum_{s=1}^{n_{i}}(M^{(i)}_{js})^{2}=(\sum_{i=1}^{t}\sum_{j=1}^{m_{i}}\sum_{s=1}^{n_{i}}M^{(i)}_{js})^{2}=
(\langle M,J \rangle_{tr})^{2}.$$
Thus, $\langle M,J \rangle_{tr}=0$.
Hence, $J\in C^{\bot_{tr}}$. As $C=C^{\bot_{tr}}$, $J\in C$.
\end{proof}

Next, we give two methods of constructing self-dual codes and LCD codes in sum-rank metric from Euclidean self-dual codes and Euclidean LCD codes.
\subsection{The First Method}
In this subsection, we introduce the first method based on Subsection 3.1. Using this method, we can construct self-dual sum-rank codes and LCD sum-rank codes in $\mathbb{F}_{2}^{(2,2),\ldots,(2,2)}$. The following is the key theorem from Theorem \ref{Lemma 3.1}.
\begin{Theorem}\label{Theorem 3.1}
Let $C_{0}$, $C_{1}$ be linear codes over $\mathbb{F}_{4}$ of length $t$. Then $SR(C_{0},C_{1})$ is a linear sum-rank code with block length $t$ and matrix size $2\times 2$. In particular,
\begin{itemize}
\item[(1)] $SR(C_{0},C_{1})$ is self-dual if and only if $C_{0}$, $C_{1}$ are both Euclidean self-dual.
\item[(2)] $SR(C_{0},C_{1})$ is LCD if and only if $C_{0}$, $C_{1}$ are both Euclidean LCD.
\end{itemize}
\end{Theorem}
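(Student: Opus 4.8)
The plan is to reduce both equivalences to the duality formula $SR(C_0,C_1)^{\bot_{tr}}=SR(C_0^{\bot},C_1^{\bot})$ proved in Theorem~\ref{Lemma 3.1}, combined with the observation that the assignment $(C_0,C_1)\mapsto SR(C_0,C_1)$ is injective and commutes with intersections. The opening assertion — that $SR(C_0,C_1)$ is a linear sum-rank code of block length $t$ and matrix size $2\times2$ — is immediate from the definition of $SR$ and from Theorem~\ref{Theorem 2.3} specialized to $q=m=2$, so the work is entirely in parts (1) and (2).

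First I would isolate the structural fact that does the real work. Every codeword of $SR(C_0,C_1)$ is a tuple of $q$-polynomials $(c_{0,1}x+c_{1,1}x^2,\ldots,c_{0,t}x+c_{1,t}x^2)$ over $\mathbb{F}_{4}$, and since the $q$-polynomial representation of an $\mathbb{F}_{2}$-linear map is unique, the coefficients of $x$ and of $x^2$ in each block are well defined. Reading them off block by block gives two projections of $SR(C_0,C_1)$ onto $\mathbb{F}_{4}^{t}$ whose images are exactly $C_0$ and $C_1$, because $\boldsymbol{c}_0$ and $\boldsymbol{c}_1$ range over $C_0$ and $C_1$ independently. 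Hence $SR(C_0,C_1)=SR(D_0,D_1)$ forces $C_0=D_0$ and $C_1=D_1$, and more generally a codeword lies in both $SR(A_0,A_1)$ and $SR(B_0,B_1)$ iff its $x$-coefficient vector lies in $A_0\cap B_0$ and its $x^2$-coefficient vector lies in $B_1\cap A_1$, so that
$$SR(A_0,A_1)\cap SR(B_0,B_1)=SR(A_0\cap B_0,\,A_1\cap B_1).$$
In particular $SR(A_0,A_1)=\{\boldsymbol{0}\}$ precisely when $A_0=A_1=\{\boldsymbol{0}\}$.

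With this in hand the two equivalences are short. For (1), $SR(C_0,C_1)$ is self-dual iff $SR(C_0,C_1)=SR(C_0,C_1)^{\bot_{tr}}=SR(C_0^{\bot},C_1^{\bot})$ by Theorem~\ref{Lemma 3.1}, and by the injectivity above this holds iff $C_0=C_0^{\bot}$ and $C_1=C_1^{\bot}$, i.e. iff $C_0$ and $C_1$ are both Euclidean self-dual. For (2), applying Theorem~\ref{Lemma 3.1} and then the intersection formula gives
$$SR(C_0,C_1)\cap SR(C_0,C_1)^{\bot_{tr}}=SR(C_0,C_1)\cap SR(C_0^{\bot},C_1^{\bot})=SR(C_0\cap C_0^{\bot},\,C_1\cap C_1^{\bot}),$$
which equals $\{\boldsymbol{0}\}$ iff $C_0\cap C_0^{\bot}=\{\boldsymbol{0}\}$ and $C_1\cap C_1^{\bot}=\{\boldsymbol{0}\}$, i.e. iff $C_0$ and $C_1$ are both Euclidean LCD.

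The only genuine content is the structural lemma of the second paragraph: verifying that the $x$- and $x^2$-coefficients of a codeword can be read off independently, so that $SR$ is injective on pairs of codes and distributes over intersection. This is where I expect the main (though still routine) care to be needed, since it relies on the uniqueness of the $q$-polynomial representation and on $C_0,C_1$ contributing their coefficients independently. Once that lemma is recorded, both equivalences follow mechanically from Theorem~\ref{Lemma 3.1}.
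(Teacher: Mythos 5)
Your proposal is correct and takes essentially the same route as the paper: both reduce parts (1) and (2) to the duality formula $SR(C_{0},C_{1})^{\bot_{tr}}=SR(C_{0}^{\bot},C_{1}^{\bot})$ of Theorem~\ref{Lemma 3.1} and then conclude at the level of the component codes. Your structural lemma --- that $(C_{0},C_{1})\mapsto SR(C_{0},C_{1})$ is injective and commutes with intersections, via the uniqueness of the $x$- and $x^{2}$-coefficients in the $q$-polynomial representation --- merely makes explicit the step the paper glosses over with ``This implies that $C_{0}=C_{0}^{\bot}$ and $C_{1}=C_{1}^{\bot}$,'' so it is a sound (and slightly more careful) rendering of the same argument.
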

\begin{proof}
\renewcommand{\qedsymbol}{}
  Since $C_{0}$, $C_{1}$ are linear codes over $\mathbb{F}_{4}$ of length $t$, according to Theorem \ref{Theorem 2.3}, $SR(C_{0},C_{1})$ is a linear sum-rank code with block length $t$ and matrix size $2\times 2$. By Theorem \ref{Lemma 3.1}, we have
\begin{itemize}
\item[(1)] $SR(C_{0},C_{1})$ is self-dual if and only if $SR(C_{0},C_{1})=SR(C_{0},C_{1})^{\bot_{tr}}$.
Since $SR(C_{0},C_{1})^{\bot_{tr}}\\=SR(C_{0}^{\bot},C_{1}^{\bot})$, $SR(C_{0},C_{1})=SR(C_{0},C_{1})^{\bot_{tr}}$ if and only if $SR(C_{0},C_{1})=SR(C_{0}^{\bot},C_{1}^{\bot})$.
This implies that $C_{0}=C_{0}^{\bot}$ and $C_{1}=C_{1}^{\bot}$, giving $C_{0}$, $C_{1}$ are Euclidean self-dual.
\item[(2)] $SR(C_{0},C_{1})$ is LCD if and only if $SR(C_{0},C_{1})\cap SR(C_{0},C_{1})^{\bot}=\{\boldsymbol{0}\}$.
Since $SR(C_{0},C_{1})^{\bot_{tr}}\\=SR(C_{0}^{\bot},C_{1}^{\bot})$, $SR(C_{0},C_{1})\cap SR(C_{0},C_{1})^{\bot_{tr}}=\{\boldsymbol{0}\}$ if and only if $SR(C_{0},C_{1})\cap SR(C_{0}^{\bot},C_{1}^{\bot})=\{\boldsymbol{0}\}$.
 This implies that $C_{0}\cap C_{0}^{\bot}=\{\boldsymbol{0}\}$ and $C_{1}\cap C_{1}^{\bot}=\{\boldsymbol{0}\}$, giving $C_{0}$, $C_{1}$ are Euclidean LCD.$\hfill\square$
\end{itemize}
\end{proof}
\vspace{-8mm}
\begin{Remark}
  Using this theorem, we can construct some self-dual sum-rank codes or LCD sum-rank codes over $\mathbb{F}_{2}$ with block length $t$ and matrix size $2\times 2$ from two Euclidean self-dual codes or Euclidean LCD codes in $\mathbb{F}_{4}^{t}$ . In addition, according to Theorem \ref{Theorem 2.3}, to some extent, the larger the minimum Hamming distances of $C_{0}$ and $C_{1}$ are, the larger the minimum sum-rank distance of $SR(C_{0}, C_{1})$ will be.
\end{Remark}

When $C_{0}=C_{1}$, we can determine the exact value of $d_{sr}(SR(C_{0},C_{1}))$.
\begin{Proposition}\label{Proposition 3.3}
  Let $C\subseteq\mathbb{F}_{4}^{t}$ be a $[t,k,d]_{4}$ code. Then the minimum sum-rank distance of $SR(C,C)$
  $$d_{sr}(SR(C,C))=d.$$
\end{Proposition}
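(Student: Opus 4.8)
The plan is to compute the sum-rank weight of an arbitrary codeword of $SR(C,C)$ explicitly in terms of the Hamming supports of its two constituent $\mathbb{F}_{4}$-components, and then to optimise. Since $SR(C,C)$ is linear by Theorem~\ref{Theorem 2.3}, we have $d_{sr}(SR(C,C))=wt_{sr}(SR(C,C))$, so it suffices to minimise $wt_{sr}(SR(\boldsymbol{c}_{0},\boldsymbol{c}_{1}))=\sum_{j=1}^{t}rank(A_{j})$ over all nonzero pairs $(\boldsymbol{c}_{0},\boldsymbol{c}_{1})\in C\times C$, where $A_{j}$ is the $2\times 2$ matrix over $\mathbb{F}_{2}$ attached to the block $c_{0,j}x+c_{1,j}x^{2}$ exactly as in the proof of Theorem~\ref{Lemma 3.1}.

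First I would determine $rank(A_{j})$. The matrix $A_{j}$ represents the $\mathbb{F}_{2}$-linear map $x\mapsto c_{0,j}x+c_{1,j}x^{2}$ on $\mathbb{F}_{4}$, so $rank(A_{j})=2-\dim_{\mathbb{F}_{2}}\ker$. A short case check gives three outcomes: $rank(A_{j})=0$ when $c_{0,j}=c_{1,j}=0$; $rank(A_{j})=2$ when exactly one of $c_{0,j},c_{1,j}$ is nonzero, since the map is then $x\mapsto c_{0,j}x$ or $x\mapsto c_{1,j}x^{2}$, both $\mathbb{F}_{2}$-bijections of $\mathbb{F}_{4}$; and $rank(A_{j})=1$ when $c_{0,j}\neq 0$ and $c_{1,j}\neq 0$, because then $x=c_{0,j}c_{1,j}^{-1}\in\mathbb{F}_{4}$ is a nonzero root of $c_{0,j}x+c_{1,j}x^{2}$, forcing a one-dimensional kernel. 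These values can also be read off directly from the explicit entries of $A_{j}$ recorded in the proof of Theorem~\ref{Lemma 3.1}.

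Writing $S_{0}=\operatorname{supp}(\boldsymbol{c}_{0})$, $S_{1}=\operatorname{supp}(\boldsymbol{c}_{1})$ and setting $a=|S_{0}\cap S_{1}|$, $b=|S_{0}\setminus S_{1}|$, $c=|S_{1}\setminus S_{0}|$, the rank formula yields $wt_{sr}(SR(\boldsymbol{c}_{0},\boldsymbol{c}_{1}))=a+2b+2c$, while $wt_{H}(\boldsymbol{c}_{0})=a+b$ and $wt_{H}(\boldsymbol{c}_{1})=a+c$. For the upper bound, taking $\boldsymbol{c}_{0}=\boldsymbol{c}_{1}$ to be a minimum-weight codeword of $C$ gives $b=c=0$ and $a=d$, hence a codeword of sum-rank weight $d$, so $d_{sr}(SR(C,C))\leq d$. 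For the lower bound I would split into cases: if one component is zero, say $\boldsymbol{c}_{0}=\boldsymbol{0}$ and $\boldsymbol{c}_{1}\neq\boldsymbol{0}$, then $wt_{sr}=2c=2\,wt_{H}(\boldsymbol{c}_{1})\geq 2d\geq d$; if both are nonzero, then $wt_{sr}=a+2b+2c=(a+b)+(b+2c)\geq wt_{H}(\boldsymbol{c}_{0})\geq d$. Combining the two bounds yields $d_{sr}(SR(C,C))=d$.

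The only genuine content is the per-block rank computation, and within it the case $c_{0,j},c_{1,j}\neq 0$: one must verify that $c_{0,j}x+c_{1,j}x^{2}$ is singular over $\mathbb{F}_{2}$ yet nonzero, i.e.\ has rank exactly $1$. This is precisely where the structure of $\mathbb{F}_{4}$ and the $q$-degree being $2$ enter, through the existence of the nonzero root $c_{0,j}c_{1,j}^{-1}$; once the three rank values are established, the remainder is routine bookkeeping with supports.
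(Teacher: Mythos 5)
Your proof is correct. The tight-codeword construction is the same as the paper's: both take a minimum-weight codeword $\boldsymbol{c}\in C$, form the ``diagonal'' element $SR(\boldsymbol{c},\boldsymbol{c})$, and observe that each nonzero block has rank exactly $1$ (the paper reads this off the explicit matrix $M_{i}=\begin{pmatrix}0 & c_{i}^{(1)}\\ 0 & c_{i}^{(2)}\end{pmatrix}$ from the basis expansion, while you argue via the kernel: $c_{i}c_{i}^{-1}=1$, or more generally $c_{0,j}c_{1,j}^{-1}$, is a nonzero root of the $2$-polynomial, and a nonzero singular map on a $2$-dimensional $\mathbb{F}_{2}$-space has rank $1$ --- these are equivalent computations). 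Where you diverge is the lower bound: the paper simply cites Theorem~\ref{Theorem 2.3}, whose general bound $\max\{\min\{2d_{0},d_{1}\},\min\{d_{0},2d_{1}\}\}$ specializes to $d$ when $d_{0}=d_{1}=d$, whereas you re-derive this special case from scratch by classifying $rank(A_{j})\in\{0,1,2\}$ according to which of $c_{0,j},c_{1,j}$ vanish and then doing support bookkeeping with $a+2b+2c$. Your route is more self-contained and yields a stronger byproduct --- an exact weight formula $wt_{sr}(SR(\boldsymbol{c}_{0},\boldsymbol{c}_{1}))=a+2b+2c$ for every codeword, not just a bound --- at the cost of redoing, in the case $m=2$, $q=2$, work the paper gets for free from Theorem~\ref{Theorem 2.3}. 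Both arguments are complete; in particular your rank-$1$ case correctly handles the only nontrivial point (that $c_{0,j}x+c_{1,j}x^{2}$ with both coefficients nonzero is singular but not the zero map, since a degree-$2$ polynomial cannot vanish on all of $\mathbb{F}_{4}$).
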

\begin{proof}
  By Theorem \ref{Theorem 2.3}, $d_{sr}(SR(C,C))\geq d$. As $SR(C,C)$ is a linear sum-rank code, $wt_{sr}(SR(C,C))=d_{sr}(SR(C,C))$.  In the following, we only need to find a codeword in $SR(C,C)$ with sum-rank weight $d$.

  Since the minimum Hamming distance of $C$ is $d$, there exists a codeword $\boldsymbol{c}=(c_{0},\ldots,c_{t-1})\in C$ such that $d_{H}(\boldsymbol{c})=d$.
  Let $w$ be a primitive element of $\mathbb{F}_{4}$. Then $w^{2}=w+1$ and $\{1,w\}$ is an ordered basis of $\mathbb{F}_{4}$ over $\mathbb{F}_{2}$.
  Let $c_{i}=c_{i}^{(1)}+c_{i}^{(2)}w$, where $c_{i}^{(1)}$, $c_{i}^{(2)}\in\mathbb{F}_{2}$, $0\leq i\leq t-1$.
  Then the linear map $c_{i}x+c_{i}x^{2}$ over $\mathbb{F}_{4}$ corresponds to the matrix $$M_{i}=\begin{pmatrix}
2c_{i}^{(1)} & 2c_{i}^{(2)}+c_{i}^{(1)}\\
2c_{i}^{(2)} & 2c_{i}^{(1)}+c_{i}^{(2)}
\end{pmatrix}
=\begin{pmatrix}
0 & c_{i}^{(1)}\\
0 & c_{i}^{(2)}
\end{pmatrix}.$$
If $c_{i}=0$, $M_{i}=\boldsymbol{0}$, giving $rank(M_{i})=0$.
If $c_{i}\neq 0$, either $c_{i}^{(1)}$ or $c_{i}^{(2)}$ is not equal to $0$, giving $rank(M_{i})=1$.
Thus, $wt_{sr}(SR(\boldsymbol{c},\boldsymbol{c}))=d$.
Hence, $d_{sr}(SR(C,C))=d.$
\end{proof}

\FloatBarrier
\begin{Example}
  Reference \cite{G} provides a table of the best known Euclidean self-dual codes over $\mathbb{F}_{4}$ (with largest minimum Hamming distances). We can construct many self-dual sum-rank codes with good parameters from these codes. For example, there is only one $[2,1,2]$ self-dual code over $\mathbb{F}_{4}$, so we let both $C_{0}$ and $C_{1}$ be the code. Then by Theorem \ref{Theorem 2.3} and Theorem \ref{Theorem 3.1}, $SR(C_{0},C_{1})$ is a self-dual sum-rank code with block length $2$, matrix size $2\times 2$ and dimension $4$. By the proposition above, we find that its minimum sum-rank distance $d_{sr}(SR(C_{0},C_{1}))=2$. We present those with block length up to $30$ in Table \ref{Table 1}.
We list the length of $C_{0}$ and $C_{1}$ (the block length of $SR(C_{0},C_{1})$) in column 1 and the minimum Hamming distance of $C_{0}$ and $C_{1}$ in column 2. In the last two columns, we present the dimension as well as the minimum sum-rank distance of $SR(C_{0},C_{1})$. We note that it is not necessary to let $C_{0}=C_{1}$, because in some cases there exist more than one best known Euclidean self-dual codes and when $C_{0}\neq C_{1}$, $SR(C_{0},C_{1})$ has larger minimum sum-rank distance. (Even if $C_{0}$ is monomial equivalent to $C_{1}$, we think that they are two different codes.)
\begin{table}[h]
  \centering
  \begin{tabular}{cccc}
    \hline
    $t$ & $d_{H}(C_{0})=d_{H}(C_{1})$ & $\dim(SR(C_{0},C_{1}))$ & $d_{sr}(SR(C_{0},C_{1}))$ \\
    \hline
    $2$ & $2$ & $4$ & $d_{sr}(SR(C_{0},C_{1}))=2$ \\
    $4$ & $3$ & $8$ & $3\leq d_{sr}(SR(C_{0},C_{1}))\leq 6$ \\
    $6$ & $3$ & $12$ & $3\leq d_{sr}(SR(C_{0},C_{1}))\leq 6$ \\
    $8$ & $4$ & $16$ & $4\leq d_{sr}(SR(C_{0},C_{1}))\leq 8$ \\
    $10$ & $4$ & $20$ & $4\leq d_{sr}(SR(C_{0},C_{1}))\leq 8$ \\
    $12$ & $6$ & $24$ & $6\leq d_{sr}(SR(C_{0},C_{1}))\leq 12$ \\
    $14$ & $6$ & $28$ & $6\leq d_{sr}(SR(C_{0},C_{1}))\leq 12$ \\
    $16$ & $6$ & $32$ & $6\leq d_{sr}(SR(C_{0},C_{1}))\leq 12$ \\
    $18$ & $6$ & $36$ & $6\leq d_{sr}(SR(C_{0},C_{1}))\leq 12$ \\
    $20$ & $8$ & $40$ & $8\leq d_{sr}(SR(C_{0},C_{1}))\leq 16$ \\
    $22$ & $8$ & $44$ & $8\leq d_{sr}(SR(C_{0},C_{1}))\leq 16$ \\
    $24$ & $8$ & $48$ & $8\leq d_{sr}(SR(C_{0},C_{1}))\leq 16$ \\
    $26$ & $8$ & $52$ & $8\leq d_{sr}(SR(C_{0},C_{1}))\leq 16$ \\
    $28$ & $9$ & $56$ & $9\leq d_{sr}(SR(C_{0},C_{1}))\leq 18$ \\
    $30$ & $10$ & $60$ & $10\leq d_{sr}(SR(C_{0},C_{1}))\leq 20$ \\
    \hline
  \end{tabular}
  \caption{Self-Dual Sum-Rank Codes $SR(C_{0},C_{1})$}\label{Table 1}
\end{table}
\end{Example}
\FloatBarrier
What is more, we can also construct cyclic self-dual sum-rank codes with those known cyclic self-dual codes.

\begin{Example}
 In \cite{JLX}, the authors provided all the cyclic self-dual codes over $\mathbb{F}_{4}$ of length up to $30$. Using MAGMA, we calculated their minimum Hamming distances. Then, using those with largest minimum Hamming distances, we construct cyclic self-dual sum-rank codes which are shown in Table \ref{Table 11} .
In Table \ref{Table 11}, the contents of column 1 and column 2 are the length of $C_{0}$ and $C_{1}$ (block length of $SR(C_{0},C_{1})$) and the generator polynomials of $C_{0}$ and $C_{1}$. In the last two columns, we list the minimum Hamming distance of $C_{0}$ and $C_{1}$ and the minimum sum-rank distance of $SR(C_{0},C_{1})$.
\end{Example}

In addition, we can construct cyclic self-dual sum-rank codes of block length larger than $30$. To achieve this goal, we need some cyclic self-dual codes of larger and more flexible lengths. The following is an example.

\begin{Lemma}(\!\cite[Theorem 4.1]{C})\label{Lemma 2.2}
  Let $n=\frac{2(2^{sm}-1)}{u}$, where $m=3,5,7,\ldots$ and $u$ is a divisor of $2^{sm}-1$. Then there exists a family of Euclidean self-dual repeated-root cyclic codes over $\mathbb{F}_{2^{s}}$ with the length $n$ and the minimum distance at least $\sqrt{\frac{2^{s-1}n}{u}}-\frac{2^{s}}{u}$.
\end{Lemma}

By Theorem \ref{Theorem 2.3}, Theorem \ref{Theorem 3.1}, Proposition \ref{Proposition 3.3} and Lemma \ref{Lemma 2.2}, we can get the cyclic self-dual code in sum-rank metric below.
\begin{Theorem}\label{Theorem 4.9}
  Let $t=\frac{2(2^{2m}-1)}{u}$, where $m=3,5,7,\ldots$ and $u$ is a divisor of $2^{2m}-1$. Then there exists a cyclic self-dual sum-rank code over $\mathbb{F}_{2}$ with block length $t$, matrix size $2\times 2$, dimension $2t$ and minimum sum-rank distance at least $\sqrt{\frac{2t}{u}}-\frac{4}{u}$.
\end{Theorem}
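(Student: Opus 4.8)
The plan is to assemble this result by combining the three structural ingredients already at hand: the distance/dimension bounds of Theorem~\ref{Theorem 2.3}, the self-duality criterion of Theorem~\ref{Theorem 3.1}, the exact-distance computation of Proposition~\ref{Proposition 3.3}, and the existence statement for cyclic Euclidean self-dual codes from Lemma~\ref{Lemma 2.2}. First I would specialize Lemma~\ref{Lemma 2.2} to the case $s=2$ (so that the base field is $\mathbb{F}_{2^{s}}=\mathbb{F}_{4}$, which is exactly the field over which $SR(\cdot,\cdot)$ operates) and to $m\in\{3,5,7,\ldots\}$. This gives, for each such $m$ and each divisor $u$ of $2^{2m}-1$, a Euclidean self-dual repeated-root cyclic code $C$ over $\mathbb{F}_{4}$ of length $t=\frac{2(2^{2m}-1)}{u}$ with minimum Hamming distance $d_{H}(C)\geq\sqrt{\frac{2^{s-1}t}{u}}-\frac{2^{s}}{u}=\sqrt{\frac{2t}{u}}-\frac{4}{u}$, upon substituting $s=2$.

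Next I would set $C_{0}=C_{1}=C$ and form the sum-rank code $SR(C_{0},C_{1})=SR(C,C)$. By Theorem~\ref{Theorem 2.3} this is a linear sum-rank code over $\mathbb{F}_{2}$ with block length $t$, matrix size $2\times 2$, and dimension $2(k_{0}+k_{1})=2\cdot 2k=4k$, where $k=\dim_{\mathbb{F}_{4}}(C)$. Because $C$ is self-dual over $\mathbb{F}_{4}$, we have $k=t/2$, so the dimension equals $4\cdot\frac{t}{2}=2t$, matching the claim. Since $C$ is Euclidean self-dual and $C_{0}=C_{1}=C$, part~(1) of Theorem~\ref{Theorem 3.1} shows that $SR(C,C)$ is a self-dual sum-rank code; moreover, because $C$ is cyclic, the Remark following the definition of $SR$ guarantees that $SR(C,C)$ is a cyclic sum-rank code. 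For the distance, Proposition~\ref{Proposition 3.3} gives the exact value $d_{sr}(SR(C,C))=d_{H}(C)$, so the lower bound on $d_{H}(C)$ transfers verbatim to $d_{sr}(SR(C,C))\geq\sqrt{\frac{2t}{u}}-\frac{4}{u}$.

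Putting these pieces together yields precisely the asserted parameters: a cyclic self-dual sum-rank code over $\mathbb{F}_{2}$ with block length $t$, matrix size $2\times 2$, dimension $2t$, and minimum sum-rank distance at least $\sqrt{\frac{2t}{u}}-\frac{4}{u}$. I do not anticipate a genuine obstacle here, since each required property is handed to us by a cited earlier result; the proof is essentially a careful bookkeeping of how the parameters of $C$ propagate through the $SR$ construction. The only point demanding a little care is the substitution $s=2$ in Lemma~\ref{Lemma 2.2}: one must verify that $2^{s-1}=2$ and $2^{s}=4$ so that the distance bound reads $\sqrt{\frac{2t}{u}}-\frac{4}{u}$, and that the hypothesis ``$u$ is a divisor of $2^{sm}-1$'' becomes ``$u\mid 2^{2m}-1$'' as stated. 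With that substitution checked, the chain of implications closes immediately.
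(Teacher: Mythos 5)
Your proposal is correct and follows exactly the route the paper itself takes: it derives Theorem~\ref{Theorem 4.9} by specializing Lemma~\ref{Lemma 2.2} to $s=2$, setting $C_{0}=C_{1}=C$, and invoking Theorem~\ref{Theorem 2.3} for the dimension, Theorem~\ref{Theorem 3.1} for self-duality, the remark on cyclicity of $SR(C_{0},\ldots,C_{m-1})$, and Proposition~\ref{Proposition 3.3} for the exact distance transfer. Your write-up is in fact more explicit than the paper's one-line justification, and all parameter substitutions ($2^{s-1}=2$, $2^{s}=4$, dimension $2(k_{0}+k_{1})=2t$) check out.
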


Theorem \ref{Theorem 2.3} provides an upper bound and a lower bound on the minimum sum-rank distance of $SR(C_{0},C_{1})$. If $C_{0}$, $C_{1}$ are self-dual, then $SR(C_{0},C_{1})$ is self-dual, we can get a more simple upper bound on its minimum sum-rank distance.
\begin{Proposition}
Let $C_{i}\subseteq\mathbb{F}_{4}^{t}$ be a $[t,k_{i},d_{i}]_{4}$ self-dual code, $i=0,\,1$. Then the minimum sum-rank distance of $SR(C_{0},C_{1})$ is at most $$8(\lfloor\frac{t}{12}\rfloor+1).$$
\end{Proposition}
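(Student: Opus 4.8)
The plan is to combine the Singleton-type upper bound on the minimum sum-rank distance of an $SR$-code (Theorem~\ref{Theorem 2.3}) with the Euclidean self-dual bound over $\mathbb{F}_{4}$ (Lemma~\ref{Lemma 2.1}). The whole argument is a short chain of inequalities, so there is no genuinely hard step; the work lies only in recognizing which two earlier results to chain together and specializing them to the present setting with $m=2$.

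First I would invoke Theorem~\ref{Theorem 2.3}, which for the parameters here ($q=4$ after identification, block length $t$, matrix size $2\times 2$, i.e. $m=2$) gives the upper bound
\[
d_{sr}(SR(C_{0},C_{1}))\;\le\;\min\{2d_{0},\,2d_{1}\}\;=\;2\min\{d_{0},d_{1}\}.
\]
Since each $C_{i}$ is a Euclidean self-dual code over $\mathbb{F}_{4}$ of length $t$, I would then apply Lemma~\ref{Lemma 2.1} to each of them to obtain
\[
d_{i}=d_{H}(C_{i})\;\le\;4\Big\lfloor\tfrac{t}{12}\Big\rfloor+4,\qquad i=0,1,
\]
and hence $\min\{d_{0},d_{1}\}\le 4\lfloor t/12\rfloor+4$.

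Finally I would substitute this into the first inequality:
\[
d_{sr}(SR(C_{0},C_{1}))\;\le\;2\Big(4\Big\lfloor\tfrac{t}{12}\Big\rfloor+4\Big)
\;=\;8\Big\lfloor\tfrac{t}{12}\Big\rfloor+8\;=\;8\Big(\Big\lfloor\tfrac{t}{12}\Big\rfloor+1\Big),
\]
which is exactly the claimed bound. The only point worth flagging is that this upper bound is cleaner than the general one in Theorem~\ref{Theorem 2.3} precisely because self-duality of $C_{0},C_{1}$ forces $SR(C_{0},C_{1})$ to be self-dual (Theorem~\ref{Theorem 3.1}) and, more importantly, because it lets us replace the two individual distances $d_{0},d_{1}$ by the single length-dependent quantity $4\lfloor t/12\rfloor+4$; no further structural information about $C_{0}$ or $C_{1}$ is needed.
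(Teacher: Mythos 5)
Your proposal is correct and takes essentially the same route as the paper: the paper's proof likewise applies the upper bound of Theorem~\ref{Theorem 2.3} to get $d_{sr}(SR(C_{0},C_{1}))\leq\min\{2d_{0},2d_{1}\}$ and then bounds each $d_{i}$ by $4\lfloor\frac{t}{12}\rfloor+4$ via Lemma~\ref{Lemma 2.1}. Your closing remark invoking Theorem~\ref{Theorem 3.1} is harmless but unnecessary, and the parenthetical ``$q=4$ after identification'' should read $q=2$, $m=2$ (the codes live over $\mathbb{F}_{4}=\mathbb{F}_{q^{m}}$ with base field $\mathbb{F}_{2}$); neither point affects the argument.
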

\begin{proof}
  By Theorem \ref{Theorem 2.3}, the minimum sum-rank distance of $SR(C_{0},C_{1})$
  $$d_{sr}(SR(C_{0},C_{1}))\leq \min\{2d_{0},2d_{1}\}.$$
  Since $C_{i}$ is a $[t,k_{i},d_{i}]_{4}$ self-dual code,  $i=0,\,1$, by Lemma \ref{Lemma 2.1}, $d_{i}\leq 4\lfloor\frac{t}{12}\rfloor+4$.
  Thus, $$d_{sr}(SR(C_{0},C_{1}))\leq 8(\lfloor\frac{t}{12}\rfloor+1).$$
\end{proof}

As for the construction of LCD sum-rank codes, it seems to be easier than the construction of self-dual sum-rank codes. According to Theorem \ref{Theorem 2.2}, only if there exists an $[n,k,d]_{4}$ linear code, can we find an $[n,k,d]_{4}$ Euclidean LCD code. Then we can use those Euclidean LCD codes with good parameters to get some LCD sum-rank codes.

If $C\subseteq\mathbb{F}_{q}^{(m_{1},n_{1}),\ldots,(m_{t },n_{t})}$ is a sum-rank code with $M$ codewords and the minimum sum-rank distance $d_{sr}$ and there is no sum-rank code in $\mathbb{F}_{q}^{(m_{1},n_{1}),\ldots,(m_{t },n_{t})}$ with $M$ codewords and the minimum sum-rank distance $d_{sr}+1$, we call $C$ a distance-optimal sum-rank code. The authors of \cite{CDCX} constructed a family of distance-optimal cyclic sum-rank codes in $\mathbb{F}_{2}^{(2,2),\ldots,(2,2)}$ as below.

\begin{Lemma}(\!\cite[Corollary 7.3]{CDCX})\label{Lemma 2.4}
  Let $l\geq 3$, $C_{0}$ be the cyclic code of length $4^{l}-1$ over $\mathbb{F}_{4}$ with generator polynomial $x-1$ and $C_{1}$ be the BCH code $C_{(4,4^{l}-1,4,0)}$. Then $SR(C_{0},C_{1})\subseteq\mathbb{F}_{2}^{(2,2),\ldots,(2,2)}$ has dimension $4(4^{l}-l-2)$ and minimum sum-rank distance $4$. Furthermore, the sum-rank cyclic code $SR(C_{0},C_{1})$ is distance-optimal.
\end{Lemma}

Correspondingly, we can get the following family of distance-optimal LCD sum-rank codes.

\begin{Theorem}
  Let $l\geq 3$. Then there exist two Euclidean LCD codes over $\mathbb{F}_{4}$, $C_{0}$, $C_{1}$ with parameters $[4^{l}-1,4^{l}-2,2]$ and $[4^{l}-1,4^{l}-2l-2,4]$ respectively. Furthermore, $SR(C_{0},C_{1})\subseteq\mathbb{F}_{2}^{(2,2),\ldots,(2,2)}$ is a distance-optimal LCD sum-rank code with block length $4^{l}-1$, dimension $4(4^{l}-l-2)$ and minimum sum-rank distance $4$.
\end{Theorem}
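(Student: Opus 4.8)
The plan is to combine the distance-optimality result of Lemma~\ref{Lemma 2.4} with the LCD-versus-Euclidean-LCD correspondence of Theorem~\ref{Theorem 3.1}(2), using the classical existence result of Theorem~\ref{Theorem 2.2} to upgrade ordinary linear codes to Euclidean LCD codes without sacrificing parameters. First I would invoke Lemma~\ref{Lemma 2.4}: for $l\geq 3$, the code $C_{0}=\langle x-1\rangle$ of length $4^{l}-1$ over $\mathbb{F}_{4}$ has parameters $[4^{l}-1,\,4^{l}-2,\,2]$, and the BCH code $C_{(4,4^{l}-1,4,0)}$ has parameters $[4^{l}-1,\,4^{l}-2l-2,\,4]$; moreover $SR$ of these two has minimum sum-rank distance $4$ and is distance-optimal. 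The point is that distance-optimality and the minimum sum-rank distance $4$ depend, through Theorem~\ref{Theorem 2.3} and Proposition~\ref{Proposition 3.3}-type reasoning, only on the \emph{parameters} $[n,k,d]$ of the two constituent codes, not on the specific codes chosen. So the strategy is to replace each constituent code by a Euclidean LCD code having exactly the same parameters.

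The key step is the application of Theorem~\ref{Theorem 2.2}. Since $q=4>3$, Theorem~\ref{Theorem 2.2} guarantees that whenever an $[n,k,d]_{4}$-linear code exists, there also exists an $[n,k,d]_{4}$-linear Euclidean LCD code. Applying this with $(n,k,d)=(4^{l}-1,\,4^{l}-2,\,2)$ and $(n,k,d)=(4^{l}-1,\,4^{l}-2l-2,\,4)$ respectively, I obtain two Euclidean LCD codes $C_{0}$ and $C_{1}$ over $\mathbb{F}_{4}$ with precisely the parameters stated in the theorem. This establishes the first assertion.

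For the second assertion, I would apply Theorem~\ref{Theorem 3.1}: because $C_{0}$ and $C_{1}$ are both Euclidean LCD, part~(2) of that theorem yields that $SR(C_{0},C_{1})\subseteq\mathbb{F}_{2}^{(2,2),\ldots,(2,2)}$ is an LCD sum-rank code with block length $4^{l}-1$ and matrix size $2\times 2$. Its dimension is $m(k_{0}+k_{1})=2\big((4^{l}-2)+(4^{l}-2l-2)\big)=4(4^{l}-l-2)$ by Theorem~\ref{Theorem 2.3}, matching the dimension in Lemma~\ref{Lemma 2.4}. Since these new $C_{0},C_{1}$ share the same length, dimension and minimum Hamming distance as the codes in Lemma~\ref{Lemma 2.4}, the bounds of Theorem~\ref{Theorem 2.3} pin the minimum sum-rank distance to $4$ in exactly the same way, and distance-optimality (which is a statement about the number of codewords, i.e.\ the size, versus the sum-rank distance, both unchanged) is preserved.

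The main subtlety to watch is the distance computation: I must confirm that the minimum sum-rank distance really is forced to be $4$ purely from the parameters $d_{0}=2$, $d_{1}=4$. The upper bound $\min\{2d_{0},2d_{1}\}=\min\{4,8\}=4$ from Theorem~\ref{Theorem 2.3} gives $d_{sr}\leq 4$, and the lower bound $\max\{\min\{2d_{0},d_{1}\},\min\{d_{0},2d_{1}\}\}=\max\{\min\{4,4\},\min\{2,8\}\}=\max\{4,2\}=4$ gives $d_{sr}\geq 4$, so in fact $d_{sr}=4$ follows immediately from the parameters alone, independently of which codes realize them. This is the crux that makes the parameter-substitution argument valid, and it also shows that distance-optimality transfers to the new codes, since it is a property of the pair (size, minimum sum-rank distance) which is identical to that of the codes in Lemma~\ref{Lemma 2.4}.
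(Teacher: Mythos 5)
Your proof is correct and follows exactly the paper's route: the paper disposes of this theorem in one line by combining Theorem~\ref{Theorem 2.2}, Theorem~\ref{Theorem 3.1} and Lemma~\ref{Lemma 2.4}, which is precisely your argument. The extra check you supply---that the bounds of Theorem~\ref{Theorem 2.3} pin $d_{sr}=4$ from the parameters $d_{0}=2$, $d_{1}=4$ alone, so the specific cyclic codes of Lemma~\ref{Lemma 2.4} may be replaced by Euclidean LCD codes with the same parameters without disturbing the distance or the distance-optimality---is exactly the detail the paper leaves implicit, and you verify it correctly.
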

Combining Theorem \ref{Theorem 2.2}, Theorem \ref{Theorem 3.1} and Lemma \ref{Lemma 2.4}, we can prove the theorem directly.

Apart from this, we can construct some cyclic LCD sum-rank codes from those classical cyclic LCD codes.
We know that a cyclic code of length $n$ over $\mathbb{F}_{q}$ is an LCD code if $-1$ is a power of $q$ modulo $n$ (see \cite{LDL}), so cyclic codes over $\mathbb{F}_{q}$ with length $n=\frac{q^{m}+1}{N}$ are always LCD codes, where $1\leq N\leq q^{m}+1$ is a divisor of $q^{m}+1$. There are many cyclic LCD codes over $\mathbb{F}_{q}$ with length $n=\frac{q^m+1}{N}$. In \cite{FL}, the authors provided some relevant examples.
The following two examples are from some related results in \cite{FL} and the computation by MAGMA.

\begin{Example}
  We list the dimensions, minimum Hamming distances and generator polynomials of three cyclic LCD codes over $\mathbb{F}_{4}$ of length $13$ in Table \ref{Table 2}.
  \begin{table}[h]
  \centering
  \begin{tabular}{cccc}
    \hline
    $C$ & $\dim(C)$ & $d_{H}(C)$ & $G(x)$\\
    \hline
    $C_{(4,13,2,1)}$ & 7 & $5$ & $x^{6}+wx^{5}+w^{2}x^{3}+wx+1$ \\
    $C_{(4,13,3,0)}$ & 6 & $6$ & $(x+1)(x^{6}+wx^{5}+w^{2}x^{3}+wx+1)$\\
    $C_{(4,13,13,1)}$ & 1 & $13$ & $(x^{6}+wx^{5}+w^{2}x^{3}+wx+1)(x^{6}+w^{2}x^{5}+wx^{3}+w^{2}x+1)$\\
    \hline
  \end{tabular}
  \caption{Cyclic LCD Codes of Length $13$}\label{Table 2}
\end{table}

\FloatBarrier
Combining Theorem \ref{Theorem 2.3}, Theorem \ref{Theorem 3.1} and Proposition \ref{Proposition 3.3}, we can construct some cyclic LCD sum-rank-codes from the three cyclic LCD codes and get their parameters, see Table \ref{Table 3}. $\star$ means that the minimum sum-rank distance of $SR(C_{0},C_{1})$ achieves the upper bound in Theorem \ref{Theorem 2.3}.
\begin{table}[h]
  \centering
  \begin{tabular}{cccc}
    \hline
    $C_{0}$ & $C_{1}$ & $\dim(SR(C_{0},C_{1}))$ & $d_{sr}(SR(C_{0},C_{1}))$  \\
    \hline
    $C_{(4,13,2,1)}$ & $C_{(4,13,2,1)}$  & $28$ & $d_{sr}(SR(C_{0},C_{1}))=5$\\
    $C_{(4,13,2,1)}$ & $C_{(4,13,3,0)}$ & $26$ & $6\leq d_{sr}(SR(C_{0},C_{1}))\leq 10$\\
    $C_{(4,13,2,1)}$ & $C_{(4,13,13,1)}$ & $16$ & $d_{sr}(SR(C_{0},C_{1}))=10(\star)$\\
    $C_{(4,13,3,0)}$ & $C_{(4,13,2,1)}$ & $26$ & $6\leq d_{sr}(SR(C_{0},C_{1}))\leq 10$\\
    $C_{(4,13,3,0)}$ & $C_{(4,13,3,0)}$ & $24$ & $d_{sr}(SR(C_{0},C_{1}))=6$\\
    $C_{(4,13,3,0)}$ & $C_{(4,13,13,1)}$ & $14$ & $d_{sr}(SR(C_{0},C_{1}))=12(\star)$\\
    $C_{(4,13,13,1)}$ & $C_{(4,13,2,1)}$ & $16$ & $d_{sr}(SR(C_{0},C_{1}))=10(\star)$\\
    $C_{(4,13,13,1)}$ & $C_{(4,13,3,0)}$ & $14$ & $d_{sr}(SR(C_{0},C_{1}))=12(\star)$\\
    $C_{(4,13,13,1)}$ & $C_{(4,13,13,1)}$ & $4$ & $d_{sr}(SR(C_{0},C_{1}))=13$\\
    \hline
  \end{tabular}
  \caption{Cyclic LCD  Sum-Rank Codes $SR(C_{0},C_{1})$ of Block Length $13$}\label{Table 3}
\end{table}	
\FloatBarrier
\end{Example}
\begin{Example}
  We list the dimensions, minimum Hamming distances and generator polynomials of three cyclic LCD codes over $\mathbb{F}_{4}$ of length $205$ in Table \ref{Table 4}.

  \newpage
  {\small \begin{table}[h]
    \centering
    \begin{tabular}{ccc}
      \hline
      $C$ & $\dim(C)$ & $d_{H}(C)$\\
      \hline
      $C_{(4,205,33,1)}$ & $25$ & $41$ \\
      $C_{(4,205,49,1)}$ & $3$ & $123$ \\
      $C_{(4,205,34,0)}$ & $24$ & $82$  \\
      $C_{(4,205,50,0)}$ & $2$ & $164$ \\
      \hline
    \end{tabular}
    \caption{Cyclic LCD Codes of Length $205$}\label{Table 4}
  \end{table}}

  Combining Theorem \ref{Theorem 2.3}, Theorem \ref{Theorem 3.1} and Proposition \ref{Proposition 3.3}, we can construct some cyclic LCD sum-rank-codes from the four cyclic LCD codes and get their parameters, see Table \ref{Table 5}. $\star$ mean that the minimum sum-rank distance of $SR(C_{0},C_{1})$ achieves the upper bound in Theorem \ref{Theorem 2.3}.

  \begin{table}[h]
    \centering
    \begin{tabular}{cccc}
      \hline
      $C_{0}$ & $C_{1}$ & $\dim(SR(C_{0},C_{1}))$ & $d_{sr}(SR(C_{0},C_{1}))$  \\
      \hline
      $C_{(4,205,33,1)}$ & $C_{(4,205,33,1)}$ & $100$ & $d_{sr}(SR(C_{0},C_{1}))=41$ \\
      $C_{(4,205,33,1)}$ & $C_{(4,205,49,1)}$ & $56$ & $d_{sr}(SR(C_{0},C_{1}))=82(\star)$ \\
      $C_{(4,205,33,1)}$ & $C_{(4,205,34,0)}$ & $98$ & $d_{sr}(SR(C_{0},C_{1}))=82(\star)$ \\
      $C_{(4,205,33,1)}$ & $C_{(4,205,50,0)}$ & $54$ & $d_{sr}(SR(C_{0},C_{1}))=82(\star)$ \\
      $C_{(4,205,49,1)}$ & $C_{(4,205,33,1)}$ & $56$ & $d_{sr}(SR(C_{0},C_{1}))=82(\star)$ \\
      $C_{(4,205,49,1)}$ & $C_{(4,205,49,1)}$ & $24$ & $d_{sr}(SR(C_{0},C_{1}))=123$ \\
      $C_{(4,205,49,1)}$ & $C_{(4,205,34,0)}$ & $54$ & $123\leq d_{sr}(SR(C_{0},C_{1}))\leq 164$ \\
      $C_{(4,205,49,1)}$ & $C_{(4,205,50,0)}$ & $10$ & $164\leq d_{sr}(SR(C_{0},C_{1}))\leq 246$ \\
      $C_{(4,205,34,0)}$ & $C_{(4,205,33,1)}$ & $98$ & $d_{sr}(SR(C_{0},C_{1}))=82(\star)$ \\
      $C_{(4,205,34,0)}$ & $C_{(4,205,49,1)}$ & $54$ & $123\leq d_{sr}(SR(C_{0},C_{1}))\leq 164$ \\
      $C_{(4,205,34,0)}$ & $C_{(4,205,34,0)}$ & $96$ & $d_{sr}(SR(C_{0},C_{1}))=82$ \\
      $C_{(4,205,34,0)}$ & $C_{(4,205,50,0)}$ & $52$ & $d_{sr}(SR(C_{0},C_{1}))=164(\star)$ \\
      $C_{(4,205,50,0)}$ & $C_{(4,205,33,1)}$ & $54$ & $d_{sr}(SR(C_{0},C_{1}))=82(\star)$ \\
      $C_{(4,205,50,0)}$ & $C_{(4,205,49,1)}$ & $10$ & $164\leq d_{sr}(SR(C_{0},C_{1}))\leq 246$ \\
      $C_{(4,205,50,0)}$ & $C_{(4,205,34,0)}$ & $52$ & $d_{sr}(SR(C_{0},C_{1}))=164(\star)$ \\
      $C_{(4,205,50,0)}$ & $C_{(4,205,50,0)}$ & $8$ & $d_{sr}(SR(C_{0},C_{1}))=164$ \\
      \hline
    \end{tabular}
    \caption{Cyclic LCD Sum-Rank Codes $SR(C_{0},C_{1})$ of Block Length $205$}\label{Table 5}
  \end{table}
\end{Example}

At last, we construct some cyclic LCD sum-rank codes of larger block lengths with the lemma below.

\begin{Lemma}(\!\cite[Theorem 3.4]{FL})\label{Lemma 2.3}
  Suppose $n=\frac{q^{m}+1}{q+1}$, where $m>5$ is odd. For $\delta=lq^{\frac{m-1}{2}}+1$ with $2\leq l\leq q-1$, the code $C_{(q,n,\delta,1)}$ and $C_{(q,n,\delta+1,0)}$ have parameters $[n,k,\geq\delta]$ and $[n,k-1,\geq 2\delta]$, respectively, where
  $$k=\begin{cases}
  n-2m(l(q-1)q^{\frac{m-3}{2}}-2l^{2}+l), & \mbox{if } 2\leq l\leq \lceil\frac{q-1}{2}\rceil, \\
  n-2m(l(q-1)q^{\frac{m-3}{2}}-2\lceil\frac{q+1}{2}\rceil\lceil\frac{q-1}{2}\rceil-\lceil\frac{q-1}{2}\rceil), & \mbox{if } l=\lceil\frac{q+1}{2}\rceil, \\
  n-2m(l(q-1)q^{\frac{m-3}{2}}-2l^{2}+3l-q), & \mbox{if }\lceil\frac{q+3}{2}\rceil\leq l\leq q-1.
  \end{cases}$$
\end{Lemma}

Using the lemma above, we can get the following four cyclic LCD codes over $\mathbb{F}_{4}$.

\begin{Lemma}\label{Lemma 3.2}
  Let $m>5$ be an odd integer. Then there exist the following four cyclic LCD codes:
  \begin{itemize}
  \item[(1)] $C_{(4,\frac{4^{m}+1}{5},2\cdot 4^{\frac{m-1}{2}}+1,1)}$ has parameters $[\frac{4^{m}+1}{5},\frac{4^{m}+1}{5}-12m(4^{\frac{m-3}{2}}-1),\geq 2\cdot 4^{\frac{m-1}{2}}+1]$.
  \item[(2)] $C_{(4,\frac{4^{m}+1}{5},2(\cdot 4^{\frac{m-1}{2}}+1),0)}$ has parameters $[\frac{4^{m}+1}{5},\frac{4^{m}+1}{5}-12m(4^{\frac{m-3}{2}}-1)-1,\geq 4^{\frac{m+1}{2}}+2]$.
  \item[(3)] $C_{(4,\frac{4^{m}+1}{5},3\cdot 4^{\frac{m-1}{2}}+1,1)}$ has parameters $[\frac{4^{m}+1}{5},\frac{4^{m}+1}{5}-2m(9\cdot 4^{\frac{m-3}{2}}-14),\geq 3\cdot 4^{\frac{m-1}{2}}+1]$.
  \item[(4)] $C_{(4,\frac{4^{m}+1}{5},3\cdot 4^{\frac{m-1}{2}}+2,0)}$ has parameters $[\frac{4^{m}+1}{5},\frac{4^{m}+1}{5}-2m(9\cdot 4^{\frac{m-3}{2}}-14)-1,\geq 6\cdot 4^{\frac{m-1}{2}}+2]$.
  \end{itemize}
\end{Lemma}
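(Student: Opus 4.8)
The plan is to specialize Lemma~\ref{Lemma 2.3} to the case $q=4$, for which $\frac{q^{m}+1}{q+1}=\frac{4^{m}+1}{5}$, and to read off each of the four codes by substituting the two admissible values of $l$. Before touching the parameters I would dispose of the LCD property for all four codes at once. Since $m$ is odd and $4\equiv-1\pmod 5$, we have $4^{m}\equiv-1\pmod 5$, so $5\mid 4^{m}+1$ and the common length $n=\frac{4^{m}+1}{5}$ is of the form $\frac{q^{m}+1}{N}$ with $q=4$ and $N=5$. By the criterion recalled just above the statement (namely \cite{LDL}: a cyclic code of length $n$ over $\mathbb{F}_{q}$ is LCD whenever $-1$ is a power of $q$ modulo $n$), every cyclic code of this length is LCD, so all four codes are LCD and only their $[n,k,\geq d]$ parameters remain to be verified.

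For the parameters, in Lemma~\ref{Lemma 2.3} with $q=4$ the admissible range for $l$ is $2\le l\le q-1=3$, so only $l=2$ and $l=3$ occur, producing exactly the two pairs $\{(1),(2)\}$ and $\{(3),(4)\}$. For $l=2$: since $\lceil\frac{q-1}{2}\rceil=2$, the value $l=2$ lies in the first branch of the piecewise formula for $k$, and substituting $q=4$, $l=2$ gives $\delta=2\cdot 4^{\frac{m-1}{2}}+1$ together with
$$k=n-2m\big(l(q-1)q^{\frac{m-3}{2}}-2l^{2}+l\big)=n-2m\big(6\cdot 4^{\frac{m-3}{2}}-6\big)=n-12m\big(4^{\frac{m-3}{2}}-1\big),$$
which is code~(1); the associated code $C_{(4,n,\delta+1,0)}$ has dimension $k-1$ and minimum distance at least $2\delta=4^{\frac{m+1}{2}}+2$, which is code~(2).

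For $l=3$ the decisive point, and the one requiring the most care, is the case classification. Since $\lceil\frac{q+1}{2}\rceil=\lceil\frac{5}{2}\rceil=3$, the value $l=3$ falls into the \emph{middle} branch $l=\lceil\frac{q+1}{2}\rceil$ rather than the last branch, whose range $\lceil\frac{q+3}{2}\rceil=4\le l\le 3$ is empty. Substituting $q=4$, $l=3$ into that branch and using $2\lceil\frac{q+1}{2}\rceil\lceil\frac{q-1}{2}\rceil+\lceil\frac{q-1}{2}\rceil=2\cdot 3\cdot 2+2=14$ gives $\delta=3\cdot 4^{\frac{m-1}{2}}+1$ and
$$k=n-2m\big(9\cdot 4^{\frac{m-3}{2}}-14\big),$$
which is code~(3); the code $C_{(4,n,\delta+1,0)}$ then has dimension $k-1$ and minimum distance at least $2\delta=6\cdot 4^{\frac{m-1}{2}}+2$, which is code~(4). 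The only genuine obstacle here is this bookkeeping — selecting the correct branch of the dimension formula for each $l$ and carrying out the arithmetic simplifications — since existence, parameters, and the LCD property all follow directly from the cited results once $q=4$ is fixed.
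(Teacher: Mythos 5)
Your proposal is correct and follows exactly the route the paper intends: the paper gives no explicit proof, stating only that Lemma~\ref{Lemma 3.2} follows from Lemma~\ref{Lemma 2.3} specialized to $q=4$ (so $l\in\{2,3\}$) together with the LCD criterion from \cite{LDL} recalled earlier in the subsection. Your branch bookkeeping is the one genuinely delicate point and you handle it correctly — for $q=4$ one has $\lceil\frac{q-1}{2}\rceil=2$ and $\lceil\frac{q+1}{2}\rceil=3$, so $l=2$ uses the first branch, $l=3$ uses the middle branch, and the third branch is vacuous — and all four parameter computations check out.
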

For convenience, let $$C_{(1)}=C_{(4,\frac{4^{m}+1}{5},2\cdot 4^{\frac{m-1}{2}}+1,1)},~~~~C_{(2)}=C_{(4,\frac{4^{m}+1}{5},2(\cdot 4^{\frac{m-1}{2}}+1),0)},$$ $$C_{(3)}=C_{(4,\frac{4^{m}+1}{5},3\cdot 4^{\frac{m-1}{2}}+1,1)},~~~~C_{(4)}=C_{(4,\frac{4^{m}+1}{5},3\cdot 4^{\frac{m-1}{2}}+2,0)}.$$
Combining Theorem \ref{Theorem 2.3}, Theorem \ref{Theorem 3.1}, Proposition \ref{Proposition 3.3} and Lemma \ref{Lemma 3.2},we can get the proposition below.
\begin{Proposition}\label{Proposition 3.7}
Let $m>5$ be an odd integer. Then we have $16$ cyclic LCD sum-rank codes in $\mathbb{F}_{2}^{(2,2),\ldots,(2,2)}$ with block length $\frac{4^{m}+1}{5}$ in Table \ref{Table 6}:
{\small \begin{table}[H]
  \centering
  \begin{tabular}{ccc}
    \hline
    % after \\: \hline or \cline{col1-col2} \cline{col3-col4} ...
    $SR(C_{0},C_{1})$ & $\dim(SR(C_{0},C_{1}))$ & $d_{sr}(SR(C_{0},C_{1}))$ \\
    \hline
    $SR(C_{(1)},C_{(1)})$ & $2[\frac{4^{m}+1}{5}-12m(4^{\frac{m-3}{2}}-1)]$ & $d_{sr}(SR(C_{0},C_{1}))\geq 2\cdot 4^{\frac{m-1}{2}}+1$ \\
    $SR(C_{(1)},C_{(2)})$ & $2[2\cdot\frac{4^{m}+1}{5}-24m(4^{\frac{m-3}{2}}-1)-1]$ & $d_{sr}(SR(C_{0},C_{1}))\geq 4^{\frac{m+1}{2}}+2$ \\
    $SR(C_{(1)},C_{(3)})$ & $2[2\cdot\frac{4^{m}+1}{5}-30m\cdot 4^{\frac{m-3}{2}}+40m]$ & $d_{sr}(SR(C_{0},C_{1}))\geq 3\cdot 4^{\frac{m-1}{2}}+1$ \\
    $SR(C_{(1)},C_{(4)})$ & $2[2\cdot\frac{4^{m}+1}{5}-30m\cdot 4^{\frac{m-3}{2}}+40m-1]$ & $d_{sr}(SR(C_{0},C_{1}))\geq 4^{\frac{m+1}{2}}+2$ \\
    $SR(C_{(2)},C_{(1)})$ & $2[2\cdot\frac{4^{m}+1}{5}-24m(4^{\frac{m-3}{2}}-1)-1]$ & $d_{sr}(SR(C_{0},C_{1}))\geq 4^{\frac{m+1}{2}}+2$ \\
    $SR(C_{(2)},C_{(2)})$ & $4[\frac{4^{m}+1}{5}-12m(4^{\frac{m-3}{2}}-1)-1]$ & $d_{sr}(SR(C_{0},C_{1}))\geq 4^{\frac{m+1}{2}}+2$ \\
    $SR(C_{(2)},C_{(3)})$ & $2[2\cdot\frac{4^{m}+1}{5}-30m\cdot 4^{\frac{m-3}{2}}+40m-1]$ & $d_{sr}(SR(C_{0},C_{1}))\geq 4^{\frac{m+1}{2}}+2$ \\
    $SR(C_{(2)},C_{(4)})$ & $2[2\cdot\frac{4^{m}+1}{5}-30m\cdot 4^{\frac{m-3}{2}}+40m-2]$ & $d_{sr}(SR(C_{0},C_{1}))\geq 6\cdot 4^{\frac{m-1}{2}}+2$ \\
    $SR(C_{(3)},C_{(1)})$ & $2[2\cdot\frac{4^{m}+1}{5}-30m\cdot 4^{\frac{m-3}{2}}+40m]$ & $d_{sr}(SR(C_{0},C_{1}))\geq 3\cdot 4^{\frac{m-1}{2}}+1$ \\
    $SR(C_{(3)},C_{(2)})$ & $2[2\cdot\frac{4^{m}+1}{5}-30m\cdot 4^{\frac{m-3}{2}}+40m-1]$ & $d_{sr}(SR(C_{0},C_{1}))\geq 4^{\frac{m+1}{2}}+2$ \\
    $SR(C_{(3)},C_{(3)})$ & $4[\frac{4^{m}+1}{5}-2m(9\cdot 4^{\frac{m-3}{2}}-14)]$ & $d_{sr}(SR(C_{0},C_{1}))\geq 3\cdot 4^{\frac{m-1}{2}}+1$ \\
    $SR(C_{(3)},C_{(4)})$ & $2[2\cdot\frac{4^{m}+1}{5}-4m(9\cdot 4^{\frac{m-3}{2}}-14)-1]$ & $d_{sr}(SR(C_{0},C_{1}))\geq 6\cdot 4^{\frac{m-1}{2}}+2$ \\
    $SR(C_{(4)},C_{(1)})$ & $2[2\cdot\frac{4^{m}+1}{5}-30m\cdot 4^{\frac{m-3}{2}}+40m-1]$ & $d_{sr}(SR(C_{0},C_{1}))\geq 4^{\frac{m+1}{2}}+2$ \\
    $SR(C_{(4)},C_{(2)})$ & $2[2\cdot\frac{4^{m}+1}{5}-30m\cdot 4^{\frac{m-3}{2}}+40m-2]$ & $d_{sr}(SR(C_{0},C_{1}))\geq 6\cdot 4^{\frac{m-1}{2}}+2$ \\
    $SR(C_{(4)},C_{(3)})$ & $2[2\cdot\frac{4^{m}+1}{5}-4m(9\cdot 4^{\frac{m-3}{2}}-14)-1]$ & $d_{sr}(SR(C_{0},C_{1}))\geq 6\cdot 4^{\frac{m-1}{2}}+2$ \\
    $SR(C_{(4)},C_{(4)})$ & $4[\frac{4^{m}+1}{5}-2m(9\cdot 4^{\frac{m-3}{2}}-14)-1]$ & $d_{sr}(SR(C_{0},C_{1}))\geq 6\cdot 4^{\frac{m-1}{2}}+2$ \\
    \hline
  \end{tabular}
  \caption{Cyclic LCD Sum-Rank Codes $SR(C_{0},C_{1})$ of Block Length $\frac{4^{m}+1}{5}$ Where $m>5$}\label{Table 6}
\end{table}}
\end{Proposition}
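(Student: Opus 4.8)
The plan is to read Proposition \ref{Proposition 3.7} as a direct assembly of the four quoted results, applied to each of the sixteen ordered pairs $(C_{0},C_{1})$ with $C_{0},C_{1}\in\{C_{(1)},C_{(2)},C_{(3)},C_{(4)}\}$. First I would record, from Lemma \ref{Lemma 3.2}, that each of the four codes is a cyclic Euclidean LCD code over $\mathbb{F}_{4}$ of length $n=\frac{4^{m}+1}{5}$, together with its dimension $k_{(i)}$ and the stated lower bound $d_{(i)}$ on its minimum Hamming distance; since $m>5$ is odd, Lemma \ref{Lemma 3.2} applies and all four codes are well-defined.

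For a fixed pair $(C_{0},C_{1})$ I would then verify the three asserted properties in turn. For cyclicity, since $C_{0}$ and $C_{1}$ are cyclic codes over $\mathbb{F}_{4}$, the Remark following the definition of $SR(\cdot)$ shows that $SR(C_{0},C_{1})\subseteq\mathbb{F}_{2}^{(2,2),\ldots,(2,2)}$ is a cyclic sum-rank code. For the LCD property, since $C_{0}$ and $C_{1}$ are Euclidean LCD, Theorem \ref{Theorem 3.1}(2) gives at once that $SR(C_{0},C_{1})$ is an LCD sum-rank code. For the dimension, Theorem \ref{Theorem 2.3}, specialized to two constituent codes and matrix size $2\times 2$, yields $\dim(SR(C_{0},C_{1}))=2(\dim C_{0}+\dim C_{1})$, and substituting the dimensions from Lemma \ref{Lemma 3.2} reproduces column~2 of Table \ref{Table 6}.

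For the minimum sum-rank distance I would split into two cases. When $C_{0}=C_{1}$ (the four diagonal rows), Proposition \ref{Proposition 3.3} gives the exact value $d_{sr}(SR(C_{0},C_{1}))=d_{H}(C_{0})$, so the Hamming-distance bound of Lemma \ref{Lemma 3.2} transfers verbatim. When $C_{0}\neq C_{1}$, I would invoke the lower bound of Theorem \ref{Theorem 2.3}, which for two constituent codes specializes to
$$d_{sr}(SR(C_{0},C_{1}))\geq\max\{\min\{2d_{0},d_{1}\},\min\{d_{0},2d_{1}\}\},$$
and substitute the bounds $d_{0}\geq d_{C_{0}}$, $d_{1}\geq d_{C_{1}}$ from Lemma \ref{Lemma 3.2}; since $\min$ and $\max$ are monotone nondecreasing, these lower bounds propagate through the expression. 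A useful simplification is that, writing $d_{\min}\leq d_{\max}$ for the two bounds, the right-hand side collapses to $\min\{2d_{\min},d_{\max}\}$, because $\min\{2d_{\min},d_{\max}\}\geq d_{\min}$ always.

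The main obstacle is simply the bookkeeping in this last case: for each of the twelve off-diagonal pairs one must compare $2d_{\min}$ with $d_{\max}$. This is streamlined by setting $x=4^{(m-1)/2}$, so that the four designed-distance bounds become $2x+1$, $4x+2$, $3x+1$ and $6x+2$ (using $4^{(m+1)/2}=4x$); each comparison is then a single linear inequality in $x$, valid for all the admissible $x$, and carrying them out pair by pair recovers column~3 of Table \ref{Table 6}. There is no conceptual difficulty, only the need to keep the inequalities oriented correctly, since the $d_{(i)}$ are guaranteed only as lower bounds.
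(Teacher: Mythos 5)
Your proposal is correct and follows essentially the same route as the paper, whose entire proof is the one-line assembly of Theorem \ref{Theorem 2.3}, Theorem \ref{Theorem 3.1}, Proposition \ref{Proposition 3.3} and Lemma \ref{Lemma 3.2} that you spell out, including the same diagonal/off-diagonal treatment of the sixteen pairs. One caveat: your (correct) dimension formula $2(\dim C_{0}+\dim C_{1})$ gives $4\bigl[\frac{4^{m}+1}{5}-12m(4^{\frac{m-3}{2}}-1)\bigr]$ for $SR(C_{(1)},C_{(1)})$, so the first row of Table \ref{Table 6} appears to contain a typographical factor-of-two error in the paper rather than a flaw in your argument.
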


\subsection{The Second Method}
In this subsection, we present the second method which can be used to construct self-dual sum-rank codes and LCD sum-rank codes in $\mathbb{F}_{q}^{(m,n_{1}),\ldots,(m,n_{t})}$. This construction is based on the Subsection 3.2.
Now, we can derive a vital theorem from Corollary \ref{Corollary 3.1}.
\begin{Theorem}\label{Theorem 3.2}
Let $q$ be even or both $q$ and $m$ be odd, $\mathcal{B}$ be a self-dual basis of $\mathbb{F}_{q^{m}}$ over $\mathbb{F}_{q}$ and $C$ be a linear code in $\mathbb{F}_{q^{m}}^{N}$.
\begin{itemize}
\item[(1)] $Mat_{\mathcal{B}}(C)$ is a self-dual sum-rank code in $\mathbb{F}_{q}^{(m,n_{1}),\ldots,(m,n_{t})}$ if and only if $C$ is Euclidean self-dual.
\item[(2)] $Mat_{\mathcal{B}}(C)$ is an LCD sum-rank code in $\mathbb{F}_{q}^{(m,n_{1}),\ldots,(m,n_{t})}$ if and only if $C$ is Euclidean LCD.
\end{itemize}
\end{Theorem}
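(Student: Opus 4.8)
The plan is to mirror the proof of Theorem~\ref{Theorem 3.1} almost verbatim, replacing the ad hoc duality identity of Theorem~\ref{Lemma 3.1} by its structural analogue, Corollary~\ref{Corollary 3.1}. The hypotheses ($q$ even, or both $q$ and $m$ odd) are exactly those guaranteeing the existence of a self-dual basis $\mathcal{B}$, so under these hypotheses Corollary~\ref{Corollary 3.1} applies and gives the single identity that drives everything: $Mat_{\mathcal{B}}(C)^{\bot_{tr}}=Mat_{\mathcal{B}}(C^{\bot})$. The other ingredient I would invoke is the fact, recorded just before Lemma~\ref{Lemma 3.3}, that $Mat_{\mathcal{B}}\colon\mathbb{F}_{q^{m}}^{N}\to\mathbb{F}_{q}^{(m,n_{1}),\ldots,(m,n_{t})}$ is an $\mathbb{F}_{q}$-linear \emph{isomorphism}, hence in particular injective and bijective onto its image.

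For part (1), I would argue by the chain of equivalences $Mat_{\mathcal{B}}(C)$ is self-dual $\iff Mat_{\mathcal{B}}(C)=Mat_{\mathcal{B}}(C)^{\bot_{tr}}$; applying Corollary~\ref{Corollary 3.1} the right-hand side becomes $Mat_{\mathcal{B}}(C^{\bot})$, so this holds $\iff Mat_{\mathcal{B}}(C)=Mat_{\mathcal{B}}(C^{\bot})$. Because $Mat_{\mathcal{B}}$ is injective, the latter equality of images is equivalent to $C=C^{\bot}$, i.e.\ to $C$ being Euclidean self-dual. For part (2), the analogous chain reads $Mat_{\mathcal{B}}(C)$ is LCD $\iff Mat_{\mathcal{B}}(C)\cap Mat_{\mathcal{B}}(C)^{\bot_{tr}}=\{\boldsymbol{0}\}$, which by Corollary~\ref{Corollary 3.1} becomes $Mat_{\mathcal{B}}(C)\cap Mat_{\mathcal{B}}(C^{\bot})=\{\boldsymbol{0}\}$, and I would then conclude this is equivalent to $C\cap C^{\bot}=\{\boldsymbol{0}\}$.

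The one step that deserves a line of care, and the only place where anything beyond pure substitution happens, is the passage $Mat_{\mathcal{B}}(C)\cap Mat_{\mathcal{B}}(C^{\bot})=Mat_{\mathcal{B}}(C\cap C^{\bot})$ in part (2). The inclusion $Mat_{\mathcal{B}}(C\cap C^{\bot})\subseteq Mat_{\mathcal{B}}(C)\cap Mat_{\mathcal{B}}(C^{\bot})$ is automatic for any map; the reverse inclusion is exactly where injectivity of $Mat_{\mathcal{B}}$ is needed, since if $M$ lies in both images then its unique $Mat_{\mathcal{B}}$-preimage must lie in both $C$ and $C^{\bot}$. Granting this, $Mat_{\mathcal{B}}(C\cap C^{\bot})=\{\boldsymbol{0}\}$ is equivalent to $C\cap C^{\bot}=\{\boldsymbol{0}\}$ again by injectivity. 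I do not expect any genuine obstacle here: the whole theorem is a formal consequence of the duality identity plus bijectivity of the matrix-expansion map, and the proof can be written in essentially the two-bullet format already used for Theorem~\ref{Theorem 3.1}.
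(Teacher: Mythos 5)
Your proposal is correct and follows essentially the same route as the paper's own proof: both reduce the theorem to the identity $Mat_{\mathcal{B}}(C)^{\bot_{tr}}=Mat_{\mathcal{B}}(C^{\bot})$ from Corollary~\ref{Corollary 3.1} and then translate self-duality and LCD-ness back through $Mat_{\mathcal{B}}$. Your explicit treatment of the step $Mat_{\mathcal{B}}(C)\cap Mat_{\mathcal{B}}(C^{\bot})=Mat_{\mathcal{B}}(C\cap C^{\bot})$ via injectivity is a welcome refinement of a point the paper glosses over with ``this implies,'' but it is the same argument, not a different one.
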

\begin{proof}
\renewcommand{\qedsymbol}{}
  If $q$ is even or both $q$ and $m$ are odd, a self-dual basis $\mathcal{B}$ of $\mathbb{F}_{q^{m}}$ over $\mathbb{F}_{q}$ exists.
  Due to Corollary \ref{Corollary 3.1}, we can get these results,
  \begin{itemize}
    \item[(1)] $Mat_{\mathcal{B}}(C)$ is self-dual if and only if $Mat_{\mathcal{B}}(C)=Mat_{\mathcal{B}}(C)^{\bot_{tr}}$.
    Since $Mat_{\mathcal{B}}(C)^{\bot_{tr}}=Mat_{\mathcal{B}}(C^{\bot})$, $Mat_{\mathcal{B}}(C)=Mat_{\mathcal{B}}(C)^{\bot_{tr}}$ if and only if $Mat_{\mathcal{B}}(C)=Mat_{\mathcal{B}}(C^{\bot})$.
    This implies that $C=C^{\bot}$, giving $C$ is Euclidean self-dual.
    \item[(2)] $Mat_{\mathcal{B}}(C)$ is LCD if and only if $Mat_{\mathcal{B}}(C)\cap Mat_{\mathcal{B}}(C)^{\bot_{tr}}=\{\boldsymbol{0}\}$.
    Since $Mat_{\mathcal{B}}(C)^{\bot_{tr}}=Mat_{\mathcal{B}}(C^{\bot})$, $Mat_{\mathcal{B}}(C)\cap Mat_{\mathcal{B}}(C)^{\bot_{tr}}=\{\boldsymbol{0}\}$ if and only if $Mat_{\mathcal{B}}(C)\cap Mat_{\mathcal{B}}(C^{\bot})=\{\boldsymbol{0}\}$.
    This implies that $C\cap C^{\bot}=\{\boldsymbol{0}\}$, giving $C$ is Euclidean LCD.$\hfill\square$
  \end{itemize}
\end{proof}
\vspace{-8mm}
According to Lemma \ref{Lemma 3.3}, $d_{sr}(Mat_{\mathcal{B}}(C))=d_{sr}(C)$. However, $d_{sr}(C)$ is difficult to get exactly. In most cases, we only know the exact value of $d_{H}(C)$. Hence, we derive upper and lower bounds of $d_{sr}(Mat_{\mathcal{B}}(C))$ (or $d_{sr}(C)$) in terms of $d_{H}(C)$.
\begin{Proposition}\label{Proposition 3.8}
Let $C$ be a linear code in $\mathbb{F}_{q^{m}}^{N}$ with minimum Hamming distance $d$, where $N=n_{1}+\cdots+n_{t}$. If $d=\sum_{i=1}^{s}n_{i}+l$ and $0< l\leq n_{s+1}$, the minimum sum-rank distance of $Mat_{\mathcal{B}}(C)$ in $\mathbb{F}_{q}^{(m,n_{1}),\ldots,(m,n_{t})}$
$$s+1\leq d_{sr}(Mat_{\mathcal{B}}(C))\leq \min\{d,mt\}.$$
\end{Proposition}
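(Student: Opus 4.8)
The plan is to move everything to the level of the code $C\subseteq\mathbb{F}_{q^{m}}^{N}$ itself. By Lemma \ref{Lemma 3.3} we have $d_{sr}(Mat_{\mathcal{B}}(C))=d_{sr}(C)$, and since $Mat_{\mathcal{B}}(C)$ is a linear sum-rank code its minimum distance equals its minimum nonzero weight; so it suffices to bound the minimum sum-rank weight of $C$ computed with respect to the block decomposition $N=n_{1}+\cdots+n_{t}$ described in Remark \ref{Remark 2.1}. First I would record the block-by-block comparison between sum-rank weight and Hamming weight. Writing a codeword as $\boldsymbol{X}=(\boldsymbol{X}^{(1)},\ldots,\boldsymbol{X}^{(t)})$ with $\boldsymbol{X}^{(i)}\in\mathbb{F}_{q^{m}}^{n_{i}}$, and setting $r_{i}=\dim_{\mathbb{F}_{q}}\langle\boldsymbol{X}_{1}^{(i)},\ldots,\boldsymbol{X}_{n_{i}}^{(i)}\rangle_{\mathbb{F}_{q}}$, the two elementary facts I need are that $r_{i}\leq\min\{wt_{H}(\boldsymbol{X}^{(i)}),m\}$ (the span of the block entries is generated by its nonzero coordinates and lies inside the $m$-dimensional $\mathbb{F}_{q}$-space $\mathbb{F}_{q^{m}}$) and that $r_{i}\geq 1$ precisely when $\boldsymbol{X}^{(i)}\neq\boldsymbol{0}$, where $wt_{H}$ denotes Hamming weight.

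For the upper bound, summing $r_{i}\leq wt_{H}(\boldsymbol{X}^{(i)})$ over $i$ gives $wt_{sr}(\boldsymbol{X})=\sum_{i}r_{i}\leq wt_{H}(\boldsymbol{X})$, while summing $r_{i}\leq m$ gives $wt_{sr}(\boldsymbol{X})\leq mt$; hence $wt_{sr}(\boldsymbol{X})\leq\min\{wt_{H}(\boldsymbol{X}),mt\}$ for every nonzero codeword. Applying this to a codeword $\boldsymbol{c}$ of Hamming weight $d$ yields $d_{sr}(C)\leq wt_{sr}(\boldsymbol{c})\leq\min\{d,mt\}$, which is the right-hand inequality of the statement.

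For the lower bound I would argue by contradiction. Suppose some nonzero codeword $\boldsymbol{X}$ had $wt_{sr}(\boldsymbol{X})\leq s$. Since every nonzero block contributes $r_{i}\geq 1$, at most $s$ of the $t$ blocks are nonzero; let $S$ index the nonzero blocks, so $|S|\leq s$. Then $wt_{H}(\boldsymbol{X})=\sum_{i\in S}wt_{H}(\boldsymbol{X}^{(i)})\leq\sum_{i\in S}n_{i}$, and because the block sizes satisfy $n_{1}\geq\cdots\geq n_{t}$, the sum of any $s$ of the $n_{i}$ is at most $\sum_{i=1}^{s}n_{i}$. Therefore $wt_{H}(\boldsymbol{X})\leq\sum_{i=1}^{s}n_{i}=d-l<d$, using $l>0$, which contradicts $wt_{H}(\boldsymbol{X})\geq d$. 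Hence every nonzero codeword has $wt_{sr}(\boldsymbol{X})\geq s+1$, giving $d_{sr}(C)\geq s+1$.

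The genuinely delicate point is the lower bound. Because $r_{i}$ can be as small as $1$ while the corresponding block carries Hamming weight as large as $n_{i}$, a small sum-rank weight does not by itself force a small Hamming weight; the control comes entirely from limiting the \emph{number} of nonzero blocks and then invoking the ordering $n_{1}\geq\cdots\geq n_{t}$ to bound the total Hamming weight by $\sum_{i=1}^{s}n_{i}$. The strict inequality $l>0$ is exactly what turns this estimate into a contradiction with the minimum distance, so I would state the indexing set $S$ and the use of the sorted $n_{i}$ carefully. By comparison, the upper bound follows routinely from the two block-wise inequalities.
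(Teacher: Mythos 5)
Your proof is correct and takes essentially the same route as the paper: the blockwise bounds $1\leq r_{i}\leq\min\{wt_{H}(\boldsymbol{X}^{(i)}),m\}$ give the upper bound $\min\{d,mt\}$ via a minimum-weight codeword, and the sorted block sizes $n_{1}\geq\cdots\geq n_{t}$ force any codeword of Hamming weight at least $d=\sum_{i=1}^{s}n_{i}+l$ (with $l>0$) to meet at least $s+1$ blocks, each contributing rank at least $1$. If anything, your contradiction formulation of the lower bound is stated more carefully than the paper's informal claim that the sum-rank weight ``would be minimum'' for the packed support configuration.
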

\begin{proof}
  Since $d_{H}(C)=d$, there exists a codeword $\boldsymbol{c}=(c_{1},\ldots,c_{N})\in C$ such that $d_{H}(\boldsymbol{c})=d$.
  Because $n_{1}\geq\cdots\geq n_{t}$, $d_{sr}(Mat_{\mathcal{B}}(\boldsymbol{c}))$ would be minimum when
  $$c_{k}=\begin{cases}
            1, & \mbox{if } 1\leq k\leq n_{s}+l, \\
            0, & \mbox{if } n_{s}+l+1\leq k\leq N.
          \end{cases}$$
  Thus, $d_{sr}(Mat_{\mathcal{B}}(\boldsymbol{c}))\geq s+1$.\\
  Let $\boldsymbol{c}^{\prime}\in C$. Then $d_{H}(\boldsymbol{c}^{\prime})\geq d$.
  Hence, $d_{sr}(Mat_{\mathcal{B}}(\boldsymbol{c}^{\prime}))\geq s+1$.
  Therefore, $d_{sr}(Mat_{\mathcal{B}}(C))\geq s+1$.\\
  Also, $d_{sr}(Mat_{\mathcal{B}}(\boldsymbol{c}))\leq d$. Thus, $d_{sr}(Mat_{\mathcal{B}}(C))\leq d$.
  Obviously, $d_{sr}(Mat_{\mathcal{B}}(C))\leq mt$, because $m\leq n_{i}$, $i=1,\ldots,t$.
  Hence, $d_{sr}(Mat_{\mathcal{B}}(C))\leq \min\{d,mt\}$.
\end{proof}
Now, let $m=n_{1}=\cdots=n_{t}=2$, it is easy to get the corollary as follows.
\begin{Corollary}\label{Corollary 3.2}
Let $C$ be a linear code in $\mathbb{F}_{q^{2}}^{2t}$ with minimum Hamming distance $d$. The minimum sum-rank distance of $Mat_{\mathcal{B}}(C)$ in $\mathbb{F}_{q}^{(2,2),\ldots,(2,2)}$
$$\big{\lceil}\frac{d}{2}\big{\rceil}\leq d_{sr}(Mat_{\mathcal{B}}(C))\leq d.$$
\end{Corollary}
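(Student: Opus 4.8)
The plan is to specialize Proposition \ref{Proposition 3.8} to the case $m = n_1 = \cdots = n_t = 2$, so that $N = 2t$ and $mt = 2t$, and then to translate the quantities $s+1$ and $\min\{d,mt\}$ appearing there into the two-sided bound stated in the corollary. There is no new geometric or algebraic content to produce; the work is entirely a bookkeeping specialization together with one elementary ceiling identity.

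First I would invoke Proposition \ref{Proposition 3.8} directly. Writing the minimum Hamming distance in the form $d = \sum_{i=1}^{s} n_i + l$ with $0 < l \leq n_{s+1}$, and substituting $n_i = 2$ for every $i$, this becomes $d = 2s + l$ with $0 < l \leq 2$, i.e.\ $l \in \{1,2\}$. Proposition \ref{Proposition 3.8} then yields $s + 1 \leq d_{sr}(Mat_{\mathcal{B}}(C)) \leq \min\{d, 2t\}$, which is the bound I must now rewrite.

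For the lower bound I would verify that $s + 1 = \lceil d/2 \rceil$ by a short case split on $l$: if $l = 1$ then $d = 2s+1$ and $\lceil d/2 \rceil = s+1$, while if $l = 2$ then $d = 2s+2$ and again $\lceil d/2 \rceil = s+1$. In either case $s + 1 = \lceil d/2 \rceil$, so the lower bound reads $\lceil d/2 \rceil \leq d_{sr}(Mat_{\mathcal{B}}(C))$. For the upper bound I would observe that $C$ has length $2t$, so its minimum Hamming distance satisfies $d \leq 2t$; hence $\min\{d, 2t\} = d$ and the upper bound collapses to $d_{sr}(Mat_{\mathcal{B}}(C)) \leq d$. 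Combining the two halves gives the claim.

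The only step that needs a moment's care is the identity $s+1 = \lceil d/2 \rceil$, and even this is immediate once the parity of $l$ is tracked, so I do not anticipate a genuine obstacle; the argument is a routine specialization of the preceding proposition.
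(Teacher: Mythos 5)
Your proposal is correct and is exactly the paper's route: the paper derives Corollary \ref{Corollary 3.2} by setting $m=n_{1}=\cdots=n_{t}=2$ in Proposition \ref{Proposition 3.8} and states the specialization without further detail. Your explicit bookkeeping --- the case split $l\in\{1,2\}$ showing $s+1=\lceil d/2\rceil$, and the observation $d\leq 2t$ collapsing $\min\{d,2t\}$ to $d$ --- is precisely the omitted verification and is accurate.
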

\begin{Remark}
  By Theorem \ref{Theorem 3.2}, we connect self-dual sum-rank codes and LCD sum-rank codes with Euclidean self-dual codes and Euclidean LCD codes. Using this connection, we can construct some self-dual sum-rank codes and LCD sum-rank codes in $\mathbb{F}_{q}^{(m,n_{1}),\ldots,(m,n_{t})}$ provided $q$ is even or both $q$ and $m$ are odd.
  What is more, according to Proposition \ref{Proposition 3.8} and Corollary \ref{Corollary 3.2}, to some extent, the larger the minimum Hamming distance of $C$ is, the larger the minimum sum-rank distance of $Mat_{\mathcal{B}}(C)$ will be.
\end{Remark}
\begin{Example}
Let $q=m=n_{1}=\cdots=n_{t}=2$. We can obtain some self-dual sum-rank codes with those Euclidean self-dual codes over $\mathbb{F}_{4}$ with the largest minimum Hamming distances in \cite{G}. For example, let $C$ be the unique $[2,1,2]$ Euclidean self-dual code over $\mathbb{F}_{4}$ and $\mathcal{B}=\{w,w^{2}\}$ be the self-dual basis of $\mathbb{F}_{4}$ over $\mathbb{F}_{2}$ where $w$ is a primitive element of $\mathbb{F}_{4}$. (Throughout the subsection, we keep the notation.) Then, by Theorem \ref{Theorem 3.2}, $Mat_{\mathcal{B}}(C)$ is a self-dual sum-rank code with block length $1$, matrix size $2\times 2$ and dimension $2$. By computation, its minimum sum-rank distance is equal to $1$. We present those with block length up to $15$ in Table \ref{Table 7}. In this table, we list the block length, the dimension and the minimum sum-rank distance of $Mat_{\mathcal{B}}(C)$ in columns 1, 3 and 4. In column 2, we give the minimum Hamming distance of $C$.
{\small\begin{table}[H]
  \centering
  \begin{tabular}{cccc}
    \hline
    $t$ & $d_{H}(C)$ & $\dim(Mat_{\mathcal{B}}(C))$ & $d_{sr}(Mat_{\mathcal{B}}(C))$ \\
    \hline
    $1$ & $2$ & $2$ & $d_{sr}(Mat_{\mathcal{B}}(C))=1$ \\
    $2$ & $3$ & $4$ & $d_{sr}(Mat_{\mathcal{B}}(C))=2$ \\
    $3$ & $3$ & $6$ & $2\leq d_{sr}(Mat_{\mathcal{B}}(C))\leq 3$ \\
    $4$ & $4$ & $8$ & $2\leq d_{sr}(Mat_{\mathcal{B}}(C))\leq 4$ \\
    $5$ & $4$ & $10$ & $2\leq d_{sr}(Mat_{\mathcal{B}}(C))\leq 4$ \\
    $6$ & $6$ & $12$ & $3\leq d_{sr}(Mat_{\mathcal{B}}(C))\leq 6$ \\
    $7$ & $6$ & $14$ & $3\leq d_{sr}(Mat_{\mathcal{B}}(C))\leq 6$ \\
    $8$ & $6$ & $16$ & $3\leq d_{sr}(Mat_{\mathcal{B}}(C))\leq 6$ \\
    $9$ & $6$ & $18$ & $3\leq d_{sr}(Mat_{\mathcal{B}}(C))\leq 6$ \\
    $10$ & $8$ & $20$ & $4\leq d_{sr}(Mat_{\mathcal{B}}(C))\leq 8$ \\
    $11$ & $8$ & $22$ & $4\leq d_{sr}(Mat_{\mathcal{B}}(C))\leq 8$ \\
    $12$ & $8$ & $24$ & $4\leq d_{sr}(Mat_{\mathcal{B}}(C))\leq 8$ \\
    $13$ & $8$ & $26$ & $4\leq d_{sr}(Mat_{\mathcal{B}}(C))\leq 8$ \\
    $14$ & $9$ & $28$ & $5\leq d_{sr}(Mat_{\mathcal{B}}(C))\leq 9$ \\
    $15$ & $10$ & $30$ & $5\leq d_{sr}(Mat_{\mathcal{B}}(C))\leq 10$ \\
    \hline
  \end{tabular}
  \caption{Self-Dual Sum-Rank Codes $Mat_{\mathcal{B}}(C)$}\label{Table 7}
\end{table}}
\end{Example}

It is evident that $Mat_{\mathcal{B}}(C)$ is cyclic if $C$ is cyclic, so we can also construct some cyclic self-dual sum-rank codes with those known cyclic self-dual codes over $\mathbb{F}_{4}$ just like we did in the first method.

\begin{Example}
Combining those cyclic self-dual codes presented in \cite{JLX} and the minimum Hamming distances we calculated, we get some cyclic self-dual sum-rank codes of block length up to $15$ shown in Table \ref{Table 12}. In Table \ref{Table 12}, we list the generator polynomial of $C$ in column 2. In other three columns, we present the block length, dimension and minimum sum-rank distance of $Mat_{\mathcal{B}}(C)$.
\end{Example}

Then we consider those cyclic self-dual sum-rank codes with larger lengths. Using Lemma \ref{Lemma 3.3}, Lemma \ref{Lemma 2.2}, Theorem \ref{Theorem 3.2} and Corollary \ref{Corollary 3.2}, we derive the following result.
\begin{Theorem}\label{Theorem 4.22}
  Let $t=\frac{2^{2m}-1}{u}$, where $m=3,5,7,\ldots$ and $u$ is a divisor of $2^{2m}-1$. Then there exists a family of cyclic self-dual sum-rank codes over $\mathbb{F}_{2}$ with block length $t$, matrix size $2\times 2$, dimension $2t$ and minimum sum-rank distance at least $\sqrt{\frac{t}{u}}-\frac{2}{u}$.
\end{Theorem}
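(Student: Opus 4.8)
The plan is to combine the self-dual code from Lemma~\ref{Lemma 2.2} with the second construction method, using the isomorphism $Mat_{\mathcal{B}}$ and its behaviour under duality recorded in Corollary~\ref{Corollary 3.1} and Theorem~\ref{Theorem 3.2}. First I would invoke Lemma~\ref{Lemma 2.2} with $s=2$: taking $q=2^{s}=4$ and the odd exponent $m$, the lemma produces a Euclidean self-dual repeated-root cyclic code $C$ over $\mathbb{F}_{4}$ of length $n=\frac{2(4^{m}-1)}{u}=\frac{2(2^{2m}-1)}{u}$ with minimum Hamming distance at least $\sqrt{\frac{2^{s-1}n}{u}}-\frac{2^{s}}{u}=\sqrt{\frac{2n}{u}}-\frac{4}{u}$. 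Since we want the final block length to be $t=\frac{2^{2m}-1}{u}$, observe that $n=2t$, so this $C$ is exactly a code of length $2t$ over $\mathbb{F}_{4}=\mathbb{F}_{q^{2}}$, putting us precisely in the setting of Corollary~\ref{Corollary 3.2} with $q=2$ and $N=2t$.

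Next I would apply the second method. Fix the self-dual basis $\mathcal{B}=\{w,w^{2}\}$ of $\mathbb{F}_{4}$ over $\mathbb{F}_{2}$, which exists because $q=2$ is even. Since $C$ is Euclidean self-dual, Theorem~\ref{Theorem 3.2}(1) gives that $Mat_{\mathcal{B}}(C)$ is a self-dual sum-rank code in $\mathbb{F}_{2}^{(2,2),\ldots,(2,2)}$ with block length $t$ and matrix size $2\times 2$. By Lemma~\ref{Lemma 3.3} its $\mathbb{F}_{2}$-dimension is $m'\dim_{\mathbb{F}_{q^{m'}}}(C)$ with $m'=2$; since $C$ is self-dual of length $2t$ over $\mathbb{F}_{4}$ its $\mathbb{F}_{4}$-dimension is $t$, so $\dim_{\mathbb{F}_{2}}(Mat_{\mathcal{B}}(C))=2t$, as claimed. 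Finally, since $C$ is cyclic, $Mat_{\mathcal{B}}(C)$ is cyclic as well (noted in the text just before this theorem).

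For the distance bound I would use the lower estimate in Corollary~\ref{Corollary 3.2}, namely $d_{sr}(Mat_{\mathcal{B}}(C))\geq\lceil\frac{d}{2}\rceil\geq\frac{d}{2}$, where $d=d_{H}(C)$. Plugging in the Hamming-distance bound $d\geq\sqrt{\frac{2n}{u}}-\frac{4}{u}=\sqrt{\frac{4t}{u}}-\frac{4}{u}=2\sqrt{\frac{t}{u}}-\frac{4}{u}$ yields
\[
d_{sr}(Mat_{\mathcal{B}}(C))\;\geq\;\frac{1}{2}\Bigl(2\sqrt{\tfrac{t}{u}}-\tfrac{4}{u}\Bigr)\;=\;\sqrt{\tfrac{t}{u}}-\tfrac{2}{u},
\]
which is exactly the stated bound. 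The main thing to be careful about is the bookkeeping of the parameters: matching $n=2t$ so that the $\sqrt{\frac{2^{s-1}n}{u}}$ term from Lemma~\ref{Lemma 2.2} transforms correctly into a bound in $t$, and confirming that the factor-of-two loss from $\lceil d/2\rceil$ in Corollary~\ref{Corollary 3.2} precisely cancels the factor $\sqrt{2}$ gained by passing from length $t$ to length $2t$. Everything else is a direct citation of the preceding results, so I expect no real obstacle beyond this parameter-tracking.
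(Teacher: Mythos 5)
Your proposal is correct and follows exactly the paper's route: the paper derives Theorem~\ref{Theorem 4.22} by combining Lemma~\ref{Lemma 2.2} (with $s=2$, giving a self-dual cyclic code over $\mathbb{F}_{4}$ of length $n=2t$), the map $Mat_{\mathcal{B}}$ with a self-dual basis via Theorem~\ref{Theorem 3.2} and Lemma~\ref{Lemma 3.3}, and the distance estimate of Corollary~\ref{Corollary 3.2}, which is precisely your argument including the parameter bookkeeping $d_{sr}\geq\lceil d/2\rceil\geq\sqrt{t/u}-2/u$.
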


\begin{Remark}
  Compared this theorem with Theorem \ref{Theorem 4.9},  it seems that the parameters of the two classes of cyclic self-dual sum-rank codes are very similar, but in fact, the block lengths of them are definitely different. In details, the block lengths of the cyclic self-dual sum-rank codes in Theorem \ref{Theorem 4.9} are even while those of the cyclic self-dual sum-rank codes in this theorem are odd. It is shown that we can construct cyclic self-dual sum-rank codes with various lengths by using the two methods.
\end{Remark}

Next, we consider constructing some cyclic LCD sum-rank codes in $\mathbb{F}_{2}^{(2,3),(2,2),\ldots,(2,2)}$ with this method.
\begin{Example}
Using the three cyclic LCD codes over $\mathbb{F}_{4}$ of length $13$ in Table \ref{Table 2}, by Theorem \ref{Theorem 3.2} and Proposition \ref{Proposition 3.8}, we can construct three cyclic LCD sum-rank codes, see Table \ref{Table 8}.
\begin{table}[h]
  \centering
  {\small\begin{tabular}{ccc}
    \hline
    $C$ & $\dim(Mat_{\mathcal{B}}(C))$ & $d_{sr}(Mat_{\mathcal{B}}(C))$ \\
    \hline
    $C_{(4,13,2,1)}$ & $14$ & $2\leq d_{sr}(Mat_{\mathcal{B}}(C))\leq 5$  \\
    $C_{(4,13,3,0)}$ & $12$ & $3\leq d_{sr}(Mat_{\mathcal{B}}(C))\leq 6$ \\
    $C_{(4,13,13,1)}$ & $12$ & $d_{sr}(Mat_{\mathcal{B}}(C))=6$ \\
    \hline
  \end{tabular}}
  \caption{Cyclic LCD Sum-Rank Codes $Mat_{\mathcal{B}}(C)$ of Block Length $6$}\label{Table 8}
\end{table}
\end{Example}
\begin{Example}
Using the four cyclic LCD codes over $\mathbb{F}_{4}$ of length $102$ in Table \ref{Table 4}, by Theorem \ref{Theorem 3.2} and Proposition \ref{Proposition 3.8}, we can construct four cyclic LCD sum-rank codes, see Table \ref{Table 9}.
\begin{table}[h]
  \centering
  \begin{tabular}{ccc}
    \hline
    $C$ & $\dim(Mat_{\mathcal{B}}(C))$ & $d_{sr}(Mat_{\mathcal{B}}(C))$ \\
    \hline
    $C_{(4,205,33,1)}$ & $50$ & $20\leq d_{sr}(Mat_{\mathcal{B}}(C))\leq 41$  \\
    $C_{(4,205,49,1)}$ & $6$ & $61\leq d_{sr}(Mat_{\mathcal{B}}(C))\leq 122$ \\
    $C_{(4,205,34,0)}$ & $48$ & $41\leq d_{sr}(Mat_{\mathcal{B}}(C))\leq 82$ \\
    $C_{(4,205,50,0)}$ & $4$ & $82\leq d_{sr}(Mat_{\mathcal{B}}(C))\leq 164$ \\
    \hline
  \end{tabular}
  \caption{Cyclic LCD Sum-Rank Codes $Mat_{\mathcal{B}}(C)$ of Block Length $102$}\label{Table 9}
\end{table}
\end{Example}

\newpage
At last, some cyclic LCD sum-rank codes in $\mathbb{F}_{2}^{(2,3),(2,2),\ldots,(2,2)}$ of larger length can be obtained. Just like what we mentioned before Proposition \ref{Proposition 3.7}, let
$$C_{(1)}=C_{(4,\frac{4^{m}+1}{5},2\cdot 4^{\frac{m-1}{2}}+1,1)},~~~~C_{(2)}=C_{(4,\frac{4^{m}+1}{5},2(\cdot 4^{\frac{m-1}{2}}+1),0)},$$ $$C_{(3)}=C_{(4,\frac{4^{m}+1}{5},3\cdot 4^{\frac{m-1}{2}}+1,1)},~~~~C_{(4)}=C_{(4,\frac{4^{m}+1}{5},3\cdot 4^{\frac{m-1}{2}}+2,0)}.$$ Combining Lemma \ref{Lemma 3.3}, Lemma \ref{Lemma 3.2}, Theorem \ref{Theorem 3.2} and Proposition \ref{Proposition 3.8}, we can prove the proposition below directly.
\begin{Proposition}\label{Proposition 4.25}
  Let $m>5$ be an odd integer and $Mat_{\mathcal{B}}\,:\,\mathbb{F}_{4}^{\frac{4^{m}+1}{5}}\rightarrow\mathbb{F}_{2}^{(2,3),(2,2),\ldots,(2,2)}$ be the $\mathbb{F}_{2}$-linear isomorphism between $\mathbb{F}_{4}^{\frac{4^{m}+1}{5}}$ and $\mathbb{F}_{2}^{(2,3),(2,2),\ldots,(2,2)}$ where $\mathcal{B}=\{w,w^{2}\}$ is the self-dual basis of $\mathbb{F}_{4}$ over $\mathbb{F}_{2}$. We have the following $4$ cyclic LCD sum-rank codes in $\mathbb{F}_{2}^{(2,3),(2,2),\ldots,(2,2)}$ with block length $\frac{2(4^{m-1}-2)}{5}$:
  \begin{table}[h]
    \centering
    \begin{tabular}{ccc}
      \hline
      $Mat_{\mathcal{B}}(C)$ & $\dim(Mat_{\mathcal{B}}(C))$ & $d_{sr}(Mat_{\mathcal{B}}(C))$ \\
      \hline
      $Mat_{\mathcal{B}}(C_{(1)})$ & $2[\frac{4^{m}+1}{5}-12m(4^{\frac{m-3}{2}}-1)]$ & $d_{sr}(Mat_{\mathcal{B}}(C))\geq 4^{\frac{m-1}{2}}$ \\
      $Mat_{\mathcal{B}}(C_{(2)})$ & $2[\frac{4^{m}+1}{5}-12m(4^{\frac{m-3}{2}}-1)-1]$ & $d_{sr}(Mat_{\mathcal{B}}(C))\geq 2\cdot 4^{\frac{m-1}{2}}+1$ \\
      $Mat_{\mathcal{B}}(C_{(3)})$ & $2[\frac{4^{m}+1}{5}-2m(9\cdot 4^{\frac{m-3}{2}}-14)]$ & $d_{sr}(Mat_{\mathcal{B}}(C))\geq 6\cdot 4^{\frac{m-3}{2}}$ \\
      $Mat_{\mathcal{B}}(C_{(4)})$ & $2[\frac{4^{m}+1}{5}-2m(9\cdot 4^{\frac{m-3}{2}}-14)-1]$ & $d_{sr}(Mat_{\mathcal{B}}(C))\geq 3\cdot 4^{\frac{m-1}{2}}+1$ \\
      \hline
    \end{tabular}
    \caption{Cyclic LCD Sum-Rank Codes $Mat_{\mathcal{B}}(C)$ of Block Length $\frac{2(4^{m-1}-2)}{5}$ Where $m>5$}\label{Table 10}
  \end{table}
\end{Proposition}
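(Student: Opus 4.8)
The plan is to transport each of the four cyclic Euclidean LCD codes $C_{(1)},\ldots,C_{(4)}\subseteq\mathbb{F}_{4}^{(4^{m}+1)/5}$ of Lemma \ref{Lemma 3.2} through the $\mathbb{F}_{2}$-linear isomorphism $Mat_{\mathcal{B}}$ attached to the self-dual basis $\mathcal{B}=\{w,w^{2}\}$, and to read off the three columns of Table \ref{Table 10} from Theorem \ref{Theorem 3.2}, Lemma \ref{Lemma 3.3} and Proposition \ref{Proposition 3.8}, respectively. First I would fix the ambient decomposition: identifying $\mathbb{F}_{4}^{(4^{m}+1)/5}$ with $\mathbb{F}_{2}^{(2,3),(2,2),\ldots,(2,2)}$ forces $N=\frac{4^{m}+1}{5}$ together with the column pattern $n_{1}=3$, $n_{i}=2$ for $i\geq 2$. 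Since $m$ is odd one has $4^{m}\equiv 4\pmod{10}$, so $N$ is odd and the equation $n_{1}+\cdots+n_{t}=3+2(t-1)=N$ has the integral solution $t=\tfrac12(N-1)=\frac{2(4^{m-1}-1)}{5}$, which is the block length. I note that the extension degree entering Lemma \ref{Lemma 3.3} is $2$, not the odd parameter $m$; keeping these two roles of ``$m$'' apart is essential throughout.

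For the LCD property, since $q=2$ is even the basis $\mathcal{B}=\{w,w^{2}\}$ is self-dual (one checks $Trace(w\cdot w)=Trace(w^{2}\cdot w^{2})=1$ and $Trace(w\cdot w^{2})=0$), so Theorem \ref{Theorem 3.2}(2) gives that $Mat_{\mathcal{B}}(C_{(j)})$ is an LCD sum-rank code if and only if $C_{(j)}$ is Euclidean LCD; the latter holds by Lemma \ref{Lemma 3.2}. The dimensions are immediate from Lemma \ref{Lemma 3.3}: $\dim_{\mathbb{F}_{2}}Mat_{\mathcal{B}}(C_{(j)})=2\dim_{\mathbb{F}_{4}}C_{(j)}$, and substituting the four $\mathbb{F}_{4}$-dimensions listed in Lemma \ref{Lemma 3.2} produces exactly the middle column of Table \ref{Table 10}. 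The cyclic structure of $Mat_{\mathcal{B}}(C_{(j)})$ is inherited from the cyclicity of each $C_{(j)}$ over $\mathbb{F}_{4}$, as already observed before this proposition.

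The only step with genuine content is the distance column, obtained from Proposition \ref{Proposition 3.8}. For the pattern $(n_{1},n_{2},\ldots)=(3,2,2,\ldots)$ the partial sums are $\sum_{i=1}^{s}n_{i}=2s+1$ for every $s\geq 1$, so for a Hamming distance $d$ I would take $s$ to be the largest index with $2s+1<d$, set $l=d-(2s+1)$, verify $0<l\leq n_{s+1}=2$, and conclude $d_{sr}\geq s+1$. Because this lower bound is nondecreasing in $d$, it is legitimate to feed in the \emph{lower} bounds on $d_{H}(C_{(j)})$ supplied by Lemma \ref{Lemma 3.2}. Carrying this out for the four values $2\cdot 4^{\frac{m-1}{2}}+1$, $4^{\frac{m+1}{2}}+2$, $3\cdot 4^{\frac{m-1}{2}}+1$ and $6\cdot 4^{\frac{m-1}{2}}+2$, and using $4^{\frac{m-1}{2}}=2^{m-1}$ to track parities, yields $s+1$ equal to $4^{\frac{m-1}{2}}$, $2\cdot 4^{\frac{m-1}{2}}+1$, $6\cdot 4^{\frac{m-3}{2}}$ and $3\cdot 4^{\frac{m-1}{2}}+1$, matching the last column.

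The main obstacle is precisely this matching of each Hamming-distance bound to the correct block index $s$: the uneven first block $n_{1}=3$ shifts all partial sums to the odd value $2s+1$, so whether $l$ lands in $\{1,2\}$ depends delicately on the parity of expressions such as $3\cdot 4^{\frac{m-1}{2}}$ and $6\cdot 4^{\frac{m-1}{2}}$, and a single parity slip would change $s+1$ by one. Everything else---the isomorphism set-up, the LCD equivalence, the dimension count and the cyclicity---is a direct citation of the results assembled above.
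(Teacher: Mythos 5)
Your proposal is correct and is essentially the paper's own proof, which consists of the single line ``Combining Lemma~\ref{Lemma 3.3}, Lemma~\ref{Lemma 3.2}, Theorem~\ref{Theorem 3.2} and Proposition~\ref{Proposition 3.8}, we can prove the proposition below directly''; you merely supply the details the paper leaves implicit (verifying $\mathcal{B}=\{w,w^{2}\}$ is self-dual, the partial sums $2s+1$ for the pattern $(3,2,2,\ldots)$, the parity bookkeeping via $4^{\frac{m-1}{2}}=2^{m-1}$, and the monotonicity remark justifying that the BCH-type \emph{lower} bounds on $d_{H}(C_{(j)})$ from Lemma~\ref{Lemma 3.2} may be fed into Proposition~\ref{Proposition 3.8}), and your computations of all four values $s+1$ and all four dimensions match the table. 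Note in addition that your block length $t=\frac{2(4^{m-1}-1)}{5}$ is the correct one: the value $\frac{2(4^{m-1}-2)}{5}$ in the statement is not even an integer (since $4^{m-1}\equiv 1 \pmod{5}$ for $m-1$ even), so you have in fact corrected a typo in the proposition, consistently with the block lengths $6$ and $102$ for $N=13$ and $N=205$ in Tables~\ref{Table 8} and~\ref{Table 9}.
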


\section{Asymptotically Good Self-Dual Sum-Rank Codes}
In this section, we aim to prove that there exists a sequence of asymptotically good self-dual sum-rank codes. In order to construct the sequence, we will use the second method above and a sequence of asymptotically good Euclidean self-dual codes. In the proof of the following theorem, we also use an important theorem called Bolzano-Weierstrass Theorem which can be found in \cite{Ru}. It claims that for any bounded sequence of real numbers, there exists a convergent subsequence of it.

Firstly, we review the sequence of asymptotically good Euclidean self-dual codes in \cite{LF}. Let $G$ be an abelian group of order $n$. By $\mathbb{F}_{q}G$ we denote the group algebra, i.e., the $\mathbb{F}_{q}$-vector space with basis $G=\{x_{1},\ldots,x_{n}\}$ and with the multiplication induced by the group multiplication of $G$. Any vector $\boldsymbol{a}=(a_{1},\ldots,a_{n})\in\mathbb{F}_{q}^{n}$ can be identified with an element $\sum\limits_{i=1}^{n}a_{i}x_{i}\in\mathbb{F}_{q}G$. An ideal $C$ of $\mathbb{F}_{q}G$ is called an $\mathbb{F}_{q}G$-code. Let $(\mathbb{F}_{q}G)^{2}:=\mathbb{F}_{q}G\times \mathbb{F}_{q}G=\{(a,b)\,|\,a,b\in\mathbb{F}_{q}G\}$ which is an $\mathbb{F}_{q}G$-module. Any $\mathbb{F}_{q}G$-submodule of $(\mathbb{F}_{q}G)^{2}$ is called a 2-quasi-$\mathbb{F}_{q}G$ code.
In addition, we recall the definition of $q$-cyclotomic cosets $C_{i}^{(q,n)}=\{i,iq,\ldots,iq^{l-1}\}{\pmod n}$ where $l$ is the smallest positive integer such that $iq^{l}\equiv i{\pmod n}$. Obviously, $\{0\}$ is a $q$-cyclotomic coset which we call the trivial $q$-cyclotomic coset. By $\mu_{q}(n)$ we denote the minimal size of the non-trivial $q$-cyclotomic cosets.

\begin{Lemma}(\!\cite[Theorem 1.1]{LF})\label{Theorem 2.1}
  Let $n_{1}$, $n_{2}$, $\ldots$ be odd positive integers coprime to $q$ satisfying that $n_{i}$ goes to infinity with $i$ and $\lim\limits_{i\rightarrow\infty}\frac{\log_{q}n_{i}}{\mu_{q}(n_{i})}=0$. For $i=1,2,\ldots$, let $G_{i}$ be any abelian group of order $n_{i}$. If $-1$ is a square in $\mathbb{F}_{q}$, then there exist self-dual $2$-quasi-$\mathbb{F}_{q}G_{i}$ codes $C_{i}$, $i=1,2,\ldots$, such that the code sequence $C_{1}$, $C_{2}$, $\ldots$ is asymptotically good.
\end{Lemma}

By \cite{LF}, there exist odd integers $n_{1}$, $n_{2}$, $\ldots$ coprime to $q$ satisfying that $\lim\limits_{i\rightarrow\infty}n_{i}=\infty$ and $\lim\limits_{i\rightarrow\infty}\frac{\log_{q}n_{i}}{\mu_{q}(n_{i})}=0$. Hence, if $-1$ is a square in $\mathbb{F}_{q}$, there exists a sequence of asymptotically good Euclidean self-dual codes over $\mathbb{F}_{q}$. Using this sequence of asymptotically good Euclidean self-dual codes, we prove that there exist  asymptotically good self-dual sum-rank codes.

\begin{Theorem}\label{Theorem 5.1}
  Let $q=2^{l}$ and $t_{1}$, $t_{2}$, $\ldots$ be odd positive integers satisfying that $t_{i}$ goes to infinity with $i$ and $\lim\limits_{i\rightarrow\infty}\frac{\log_{q^{2}}t_{i}}{\mu_{q^{2}}(t_{i})}=0$. Then there exists an asymptotically good sequence of self-dual sum-rank codes $C_{1}$, $C_{2}$, $\ldots$ such that $C_{i}\subseteq\mathbb{F}_{q}^{(2,2),\ldots,(2,2)}$ has block length $t_{i}$, $i=1,2,\ldots$.
\end{Theorem}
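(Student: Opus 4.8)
The plan is to pull back an asymptotically good sequence of Euclidean self-dual codes over $\mathbb{F}_{q^2}$ through the isometry $Mat_{\mathcal{B}}$ of the second method (Theorem~\ref{Theorem 3.2}), using $m=2$ and the block decomposition $n_1=\cdots=n_{t_i}=2$. First I would record the arithmetic that makes this admissible: since $q=2^l$ is even, $\mathbb{F}_{q^2}$ has characteristic $2$, so $-1=1$ is a square in $\mathbb{F}_{q^2}$, and by Corollary~\ref{Corollary 3.1} a self-dual basis $\mathcal{B}$ of $\mathbb{F}_{q^2}$ over $\mathbb{F}_q$ exists. Moreover each $t_i$ is odd, hence coprime to $q^2=2^{2l}$.

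With these observations I would apply Lemma~\ref{Theorem 2.1} (the result of \cite{LF}) with the base field taken to be $\mathbb{F}_{q^2}$ rather than $\mathbb{F}_q$. The hypotheses transfer verbatim: the $t_i$ are odd positive integers coprime to $q^2$ with $t_i\to\infty$ and $\lim_{i\to\infty}\frac{\log_{q^{2}}t_i}{\mu_{q^{2}}(t_i)}=0$, and $-1$ is a square in $\mathbb{F}_{q^2}$. Choosing any abelian group $G_i$ of order $t_i$, this yields an asymptotically good sequence of self-dual $2$-quasi-$\mathbb{F}_{q^2}G_i$ codes $D_i$; viewed as $\mathbb{F}_{q^2}$-linear codes, each $D_i$ is a Euclidean self-dual code in $\mathbb{F}_{q^2}^{2t_i}$ with $\dim_{\mathbb{F}_{q^2}}D_i=t_i$. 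I would then set $C_i:=Mat_{\mathcal{B}}(D_i)\subseteq\mathbb{F}_q^{(2,2),\ldots,(2,2)}$, a sum-rank code of block length $t_i$; by Theorem~\ref{Theorem 3.2}(1) each $C_i$ is self-dual.

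It remains to verify asymptotic goodness. For the rate, Lemma~\ref{Lemma 3.3} gives $\dim_{\mathbb{F}_q}C_i=2\dim_{\mathbb{F}_{q^2}}D_i=2t_i$, so $R_{sr}(C_i)=\frac{2t_i}{4t_i}=\frac12$, which is positively bounded from below. For the relative minimum sum-rank distance, Lemma~\ref{Lemma 3.3} together with Corollary~\ref{Corollary 3.2} gives $d_{sr}(C_i)=d_{sr}(D_i)\geq\lceil d_H(D_i)/2\rceil\geq d_H(D_i)/2$, whence $\Delta_{sr}(C_i)=\frac{d_{sr}(C_i)}{2t_i}\geq\frac12\cdot\frac{d_H(D_i)}{2t_i}=\frac12\Delta_H(D_i)$. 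Since the sequence $D_1,D_2,\ldots$ is asymptotically good, $\Delta_H(D_i)$ is bounded below by some $\delta>0$; as $\{\Delta_H(D_i)\}$ lies in $[0,1]$, the Bolzano-Weierstrass Theorem lets one pass to a subsequence along which it converges to a positive limit, and along that subsequence $\Delta_{sr}(C_i)\geq\delta/2>0$. This establishes that $C_1,C_2,\ldots$ is an asymptotically good sequence of self-dual sum-rank codes.

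The main obstacle is not the construction itself but the bookkeeping that keeps the relative distance from degrading to zero: the passage from $D_i$ to $C_i$ loses a factor of two through the inequality $d_{sr}\geq\lceil d_H/2\rceil$ of Corollary~\ref{Corollary 3.2}, and one must also track the correct normalizing denominators (total matrix size $4t_i$ for the rate, total row count $2t_i$ for the relative distance). The one genuinely external input is that the asymptotic goodness of \cite{LF} survives the change of base field to $\mathbb{F}_{q^2}$; this is precisely why confirming that $-1$ remains a square and that the cyclotomic-coset hypothesis is phrased for $q^2$ is the crux of the setup.
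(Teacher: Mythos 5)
Your proposal is correct and follows essentially the same route as the paper's own proof: apply the result of \cite{LF} over the base field $\mathbb{F}_{q^{2}}$ (noting $-1=1$ is a square in characteristic $2$ and the odd $t_{i}$ are coprime to $q^{2}$), push the resulting asymptotically good Euclidean self-dual codes through $Mat_{\mathcal{B}}$ with a self-dual basis via Theorem~\ref{Theorem 3.2}, and control the relative sum-rank distance by the factor-of-two bound of Corollary~\ref{Corollary 3.2} (your explicit rate computation $R_{sr}=\tfrac12$ and the Bolzano--Weierstrass step match the paper, the latter being equally superfluous in both, since the uniform lower bound $\delta/2$ already suffices). The only cosmetic slip is attributing the existence of the self-dual basis to Corollary~\ref{Corollary 3.1} rather than to the cited fact of \cite{JMV}, which does not affect correctness.
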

\begin{proof}
As $q$ is even, then $-1=1$ in $\mathbb{F}_{q^{2}}$. Thus, $-1$ is a square in $\mathbb{F}_{q^{2}}$.
Since $t_{1}$, $t_{2}$, $\ldots$ are odd positive integers, $t_{1}$, $t_{2}$, $\ldots$ are coprime to $q^2=2^{2l}$.
In addition, as $t_{i}$ goes to infinity with $i$ and $\lim\limits_{i\rightarrow\infty}\frac{\log_{q^{2}}t_{i}}{\mu_{q^{2}}(t_{i})}=0$, by Lemma \ref{Theorem 2.1}, there exist Euclidean self-dual codes $C^{\prime}_{i}\subseteq\mathbb{F}_{q^{2}}^{2t_{i}}$, $i=1,2,\ldots$ such that the code sequence $C_{1}^{\prime}$, $C_{2}^{\prime}$, $\ldots$ is asymptotically good.

As the code sequence $C_{1}^{\prime}$, $C_{2}^{\prime}$, $\ldots$ is asymptotically good, $\frac{d_{H}(C_{i}^{\prime})}{2t_{i}}$, $i=1,2,\ldots$ are positively bounded from below. Thus, there exists a positive real number $a$ such that $\frac{d_{H}(C_{i}^{\prime})}{2t_{i}}\geq a$, $i=1,2,\ldots$.
Let $C_{i}=Mat_{\mathcal{B}}(C_{i}^{\prime})\subseteq\mathbb{F}_{q}^{(2,2),\ldots,(2,2)}$ where $\mathcal{B}$ is a self-dual basis of $\mathbb{F}_{q^{2}}$ over $\mathbb{F}_{q}$. Since $C^{\prime}_{i}$ is Euclidean self-dual, according to Theorem \ref{Theorem 3.2}, $C_{i}$ is a self-dual sum-rank code.
And by Corollary \ref{Corollary 3.2}, $$\frac{d_{sr}(C_{i})}{2t_{i}}\geq \frac{d_{H}(C_{i}^{\prime})}{4t_{i}}\geq \frac{a}{2},\,i=1,2,\ldots.$$
In addition, $\frac{d_{sr}(C_{i})}{2t_{i}}\leq 1$, $i=1,2,\ldots$, so $\{\frac{d_{sr}(C_{i})}{2t_{i}}\}$ is a bounded sequence. By Bolzano-Weierstrass Theorem, there exists a convergent subsequence of the sequence. Without loss of generality, we say the convergent subsequence is just $\{\frac{d_{sr}(C_{i})}{2t_{i}}\}$.
Since $$\lim\limits_{i\rightarrow\infty}\frac{d_{sr}(C_{i})}{2t_{i}}\geq \lim\limits_{i\rightarrow\infty}\frac{d_{H}(C_{i}^{\prime})}{4t_{i}}>0,$$
$C_{1}$, $C_{2}$, $\ldots$ is an asymptotically good sequence of self-dual sum-rank codes.
\end{proof}
\begin{Remark}
  As a self-dual basis $\mathcal{B}$ of $\mathbb{F}_{q}^{m}$ over $\mathbb{F}_{q}$ exists if and only if $q$ is even or both $q$ and $m$ are odd, there is no self-dual basis of $\mathbb{F}_{q^{2}}$ over $\mathbb{F}_{q}$ if $q$ is a power of an odd prime. Thus, in this case, even if $C$ is self-dual, we cannot guarantee that $Mat_{\mathcal{B}}(C)$ is a self-dual sum-rank code. This is the reason why we restrict $q$ to be even.

  In addition, since there exists an asymptotically good sequence of linear codes over $\mathbb{F}_{q^{2}}$, $q=2^l$, by Theorem \ref{Theorem 2.2},
  there exists an asymptotically good sequence of LCD codes over $\mathbb{F}_{q^{2}}$, $q=2^l$. Hence, similar to the proof of the theorem above, it is obvious that there exists an asymptotically good sequence of LCD sum-rank codes $C_{1}$, $C_{2}$, $\ldots$ such that $C_{i}\subseteq\mathbb{F}_{q}^{(2,2),\ldots,(2,2)}$ has block length $t_{i}$, $i=1,2,\ldots$.
\end{Remark}
\section{Conclusion}
  The target of this paper is self-dual codes and LCD codes in sum-rank metric. We introduce the notions of self-dual codes and LCD codes in sum-rank metric and give some properties of self-dual sum-rank codes. We also characterize the dual codes of two classes of sum-rank codes and provide two methods of constructing self-dual codes and LCD codes in sum-rank metric from Euclidean self-dual codes and Euclidean LCD codes. With these methods, we construct not only some normal self-dual codes and LCD codes in sum-rank metric with good parameters but also some cyclic self-dual codes and cyclic LCD codes in sum-rank metric. We also prove that asymptotically good self-dual sum-rank codes exist by the second method.

 Apart from these contributions, here are two challenging open problems which are worth considering:
 \vspace{-2mm}
 \begin{itemize}
 \item[(1)] Consider the equivalence of sum-rank codes and try to extend Theorem \ref{Theorem 2.2} to the case in sum-rank metric.
 \item[(2)] Let $C_{i}$ be a linear code in $\mathbb{F}_{q^{m}}^{t}$, $0\leq i\leq m-1$. Is it true that $SR(C_{0},C_{1},\ldots,C_{m-1})^{\bot_{tr}}=SR(C_{0}^{\bot},C_{1}^{\bot},\ldots,C_{m-1}^{\bot})$?
 \end{itemize}

 In addition, giving more constructions of self-dual codes and LCD codes in sum-rank metric with good parameters will be a significant direction.

 \section*{Appendix}
In the following, we present two related tables.
\begin{sidewaystable}
  \centering
  \begin{scriptsize}
  \begin{tabular}{cccc}
    \hline
    $t$ & $G(x)$ & $d_{H}(C_{0})=d_{H}(C_{1})$ & $d_{sr}(SR(C_{0},C_{1}))$\\
    \hline
    $2$ & $1+x$ & $2$ & $d_{sr}(SR(C_{0},C_{1}))=2$ \\
    $4$ & $1+x^{2}$ & $2$ & $d_{sr}(SR(C_{0},C_{1}))=2$ \\
    $6$ & \makecell{$w^{2}+w^{2}x+x^{2}+x^{3}$,\\ $w+wx+x^{2}+x^{3}$} & $3$ & $d_{sr}(SR(C_{0},C_{1}))=3$ \\
    $8$ & $1+x^{4}$ & $2$ & $d_{sr}(SR(C_{0},C_{1}))=2$ \\
    $10$ & $1+x^{5}$ & $2$ & $d_{sr}(SR(C_{0},C_{1}))=2$ \\
    $12$ & \makecell{$w^{2}+w^{2}x+x^{2}+wx^{3}+w^{2}x^{4}+x^{5}+x^{6}$,\\ $w+wx+x^{2}+w^{2}x^{3}+wx^{4}+x^{5}+x^{6}$} & $4$ & $4\leq d_{sr}(SR(C_{0},C_{1}))\leq 8$ \\
    $14$ & \makecell{$1+x+x^{2}+x^{3}+x^{6}+x^{7}$,\\ $1+x+x^{4}+x^{5}+x^{6}+x^{7}$} & $4$ & $4\leq d_{sr}(SR(C_{0},C_{1}))\leq 8$ \\
    $16$ & $1+x^{8}$ & $2$ & $d_{sr}(SR(C_{0},C_{1}))=2$ \\
    $18$ & \makecell{$w^{2}+w^{2}x+x^{2}+wx^{3}+w^{2}x^{4}+x^{5}+wx^{6}+w^{2}x^{7}+x^{8}+x^{9}$,\\ $w+wx+x^{2}+w^{2}x^{3}+wx^{4}+x^{5}+w^{2}x^{6}+wx^{7}+x^{8}+x^{9}$} & $4$ & $4\leq d_{sr}(SR(C_{0},C_{1}))\leq 8$ \\
    $20$ & $1+x^{10}$ & $2$ & $d_{sr}(SR(C_{0},C_{1}))=2$ \\
    $22$ & \makecell{$1+x+wx^{2}+wx^{3}+x^{4}+x^{5}+x^{6}+x^{7}+w^{2}x^{8}+w^{2}x^{9}+x^{10}+x^{11}$,\\ $1+x+w^{2}x^{2}+w^{2}x^{3}+x^{4}+x^{5}+x^{6}+x^{7}+wx^{8}+wx^{9}+x^{10}+x^{11}$} & $6$ & $6\leq d_{sr}(SR(C_{0},C_{1}))\leq 12$ \\
    $24$ & \makecell{$1+x+w^{2}x^{2}+wx^{3}+wx^{5}+x^{6}+w^{2}x^{7}+w^{2}x^{9}+wx^{10}+x^{11}+x^{12}$,\\ $w+wx^{2}+x^{4}+w^{2}x^{6}+wx^{8}+x^{10}+x^{12}$,\\ $w^{2}+w^{2}x+x^{2}+wx^{3}+w^{2}x^{4}+x^{5}+wx^{6}+w^{2}x^{7}+x^{8}+wx^{9}+w^{2}x^{10}+x^{11}+x^{12}$,\\ $w+wx+x^{2}+w^{2}x^{3}+wx^{4}+x^{5}+w^{2}x^{6}+wx^{7}+x^{8}+w^{2}x^{9}+wx^{10}+x^{11}+x^{12}$,\\ $w^{2}+w^{2}x^{2}+x^{4}+wx^{6}+w^{2}x^{8}+x^{10}+x^{12}$,\\ $1+x+wx^{2}+w^{2}x^{3}+w^{2}x^{5}+x^{6}+wx^{7}+wx^{9}+w^{2}x^{10}+x^{11}+x^{12}$}  & $4$ & $4\leq d_{sr}(SR(C_{0},C_{1}))\leq 8$ \\
    $26$ & $1+x^{13}$ & $2$ & $d_{sr}(SR(C_{0},C_{1}))=2$ \\
    $28$ & \makecell{$1+x^{2}+x^{4}+x^{6}+x^{12}+x^{14}$,\\ $1+x+x^{2}+x^{3}+x^{6}+x^{8}+x^{9}+x^{10}+x^{13}+x^{14}$,\\ $1+x+x^{4}+x^{5}+x^{6}+x^{8}+x^{11}+x^{12}+x^{13}+x^{14}$,\\ $1+x^{2}+x^{8}+x^{10}+x^{12}+x^{14}$} & $4$ & $4\leq d_{sr}(SR(C_{0},C_{1}))\leq 8$ \\
    $30$ & \makecell{$w^{2}+wx^{2}+x^{4}+w^{2}x^{5}+wx^{7}+w^{2}x^{8}+x^{9}+x^{10}+w^{2}x^{13}+x^{15}$,\\ $w+x+x^{2}+x^{3}+w^{2}x^{5}+x^{6}+w^{2}x^{7}+w^{2}x^{9}+wx^{12}+x^{13}+w^{2}x^{14}+x^{15}$,\\ $1+x^{2}+w^{2}x^{4}+x^{5}+wx^{6}+x^{7}+x^{8}+w^{2}x^{9}+x^{10}+wx^{11}+x^{13}+x^{15}$,\\ $w+w^{2}x+wx^{2}+x^{3}+w^{2}x^{6}+w^{2}x^{8}+wx^{9}+w^{2}x^{10}+wx^{12}+wx^{13}+wx^{14}+x^{15}$,\\ $w^{2}+x^{2}+w^{2}x^{5}+w^{2}x^{6}+x^{7}+wx^{8}+x^{10}+w^{2}x^{11}+wx^{13}+x^{15}$,\\ $1+x+x^{3}+x^{4}+x^{5}+x^{6}+x^{10}+x^{13}+x^{14}+x^{15}$,\\ $1+x+x^{2}+x^{5}+x^{9}+x^{10}+x^{11}+x^{12}+x^{14}+x^{15}$,\\ $w+w^{2}x^{2}+x^{4}+wx^{5}+w^{2}x^{7}+wx^{8}+x^{9}+x^{10}+wx^{13}+x^{15}$,\\ $w^{2}+x+x^{2}+x^{3}+wx^{5}+x^{6}+wx^{7}+wx^{9}+w^{2}x^{12}+x^{13}+wx^{14}+x^{15}$,\\ $1+x^{2}+wx^{4}+x^{5}+w^{2}x^{6}+x^{7}+x^{8}+wx^{9}+x^{10}+w^{2}x^{11}+x^{13}+x^{15}$,\\ $w^{2}+wx+w^{2}x^{2}+x^{3}+wx^{6}+wx^{8}+w^{2}x^{9}+wx^{10}+w^{2}x^{12}+w^{2}x^{13}+w^{2}x^{14}+x^{15}$,\\ $w+x^{2}+wx^{5}+wx^{6}+x^{7}+w^{2}x^{8}+x^{10}+wx^{11}+w^{2}x^{13}+x^{15}$} & $6$ & $6\leq d_{sr}(SR(C_{0},C_{1}))\leq 12$ \\
    \hline
  \end{tabular}
  \end{scriptsize}
  \caption{Cyclic Self-Dual Sum-Rank Codes $SR(C_{0},C_{1})$}\label{Table 11}
\end{sidewaystable}

\begin{sidewaystable}
  \centering
  \begin{scriptsize}
  \begin{tabular}{cccc}
    \hline
    $t$ & $G(x)$ & $d_{H}(C)$ & $d_{sr}(Mat_{\mathcal{B}}(C))$\\
    \hline
    $1$ & $1+x$ & $2$ & $d_{sr}(Mat_{\mathcal{B}}(C))=1$ \\
    $2$ & $1+x^{2}$ & $2$ & $d_{sr}(Mat_{\mathcal{B}}(C))=2$ \\
    $3$ & \makecell{$w^{2}+w^{2}x+x^{2}+x^{3}$,\\ $w+wx+x^{2}+x^{3}$} & $3$ & $d_{sr}(Mat_{\mathcal{B}}(C))=2$ \\
    $4$ & $1+x^{4}$ & $2$ & $d_{sr}(Mat_{\mathcal{B}}(C))=2$ \\
    $5$ & $1+x^{5}$ & $2$ & $d_{sr}(Mat_{\mathcal{B}}(C))=2$ \\
    $6$ & \makecell{$w^{2}+w^{2}x+x^{2}+wx^{3}+w^{2}x^{4}+x^{5}+x^{6}$,\\ $w+wx+x^{2}+w^{2}x^{3}+wx^{4}+x^{5}+x^{6}$} & $4$ & $2\leq d_{sr}(Mat_{\mathcal{B}}(C))\leq 4$ \\
    $7$ & \makecell{$1+x+x^{2}+x^{3}+x^{6}+x^{7}$,\\ $1+x+x^{4}+x^{5}+x^{6}+x^{7}$} & $4$ & $2\leq d_{sr}(Mat_{\mathcal{B}}(C))\leq 3$ \\
    $8$ & $1+x^{8}$ & $2$ & $d_{sr}(Mat_{\mathcal{B}}(C))=2$ \\
    $9$ & \makecell{$w^{2}+w^{2}x+x^{2}+wx^{3}+w^{2}x^{4}+x^{5}+wx^{6}+w^{2}x^{7}+x^{8}+x^{9}$,\\ $w+wx+x^{2}+w^{2}x^{3}+wx^{4}+x^{5}+w^{2}x^{6}+wx^{7}+x^{8}+x^{9}$} & $4$ & $2\leq d_{sr}(Mat_{\mathcal{B}}(C))\leq 4$ \\
    $10$ & $1+x^{10}$ & $2$ & $d_{sr}(Mat_{\mathcal{B}}(C))=2$ \\
    $11$ & \makecell{$1+x+wx^{2}+wx^{3}+x^{4}+x^{5}+x^{6}+x^{7}+w^{2}x^{8}+w^{2}x^{9}+x^{10}+x^{11}$,\\ $1+x+w^{2}x^{2}+w^{2}x^{3}+x^{4}+x^{5}+x^{6}+x^{7}+wx^{8}+wx^{9}+x^{10}+x^{11}$} & $6$ & $3\leq d_{sr}(Mat_{\mathcal{B}}(C))\leq 6$ \\
    $12$ & \makecell{$1+x+w^{2}x^{2}+wx^{3}+wx^{5}+x^{6}+w^{2}x^{7}+w^{2}x^{9}+wx^{10}+x^{11}+x^{12}$,\\ $w+wx^{2}+x^{4}+w^{2}x^{6}+wx^{8}+x^{10}+x^{12}$,\\ $w^{2}+w^{2}x+x^{2}+wx^{3}+w^{2}x^{4}+x^{5}+wx^{6}+w^{2}x^{7}+x^{8}+wx^{9}+w^{2}x^{10}+x^{11}+x^{12}$,\\ $w+wx+x^{2}+w^{2}x^{3}+wx^{4}+x^{5}+w^{2}x^{6}+wx^{7}+x^{8}+w^{2}x^{9}+wx^{10}+x^{11}+x^{12}$,\\ $w^{2}+w^{2}x^{2}+x^{4}+wx^{6}+w^{2}x^{8}+x^{10}+x^{12}$,\\ $1+x+wx^{2}+w^{2}x^{3}+w^{2}x^{5}+x^{6}+wx^{7}+wx^{9}+w^{2}x^{10}+x^{11}+x^{12}$}  & $4$ & $2\leq d_{sr}(Mat_{\mathcal{B}}(C))\leq 4$\\
    $13$ & $1+x^{13}$ & $2$ & $d_{sr}(Mat_{\mathcal{B}}(C))=2$ \\
    $14$ & \makecell{$1+x^{2}+x^{4}+x^{6}+x^{12}+x^{14}$,\\ $1+x+x^{2}+x^{3}+x^{6}+x^{8}+x^{9}+x^{10}+x^{13}+x^{14}$,\\ $1+x+x^{4}+x^{5}+x^{6}+x^{8}+x^{11}+x^{12}+x^{13}+x^{14}$,\\ $1+x^{2}+x^{8}+x^{10}+x^{12}+x^{14}$} & $4$ & $2\leq d_{sr}(Mat_{\mathcal{B}}(C))\leq 4$ \\
    $15$ & \makecell{$w^{2}+wx^{2}+x^{4}+w^{2}x^{5}+wx^{7}+w^{2}x^{8}+x^{9}+x^{10}+w^{2}x^{13}+x^{15}$,\\ $w+x+x^{2}+x^{3}+w^{2}x^{5}+x^{6}+w^{2}x^{7}+w^{2}x^{9}+wx^{12}+x^{13}+w^{2}x^{14}+x^{15}$,\\ $1+x^{2}+w^{2}x^{4}+x^{5}+wx^{6}+x^{7}+x^{8}+w^{2}x^{9}+x^{10}+wx^{11}+x^{13}+x^{15}$,\\ $w+w^{2}x+wx^{2}+x^{3}+w^{2}x^{6}+w^{2}x^{8}+wx^{9}+w^{2}x^{10}+wx^{12}+wx^{13}+wx^{14}+x^{15}$,\\ $w^{2}+x^{2}+w^{2}x^{5}+w^{2}x^{6}+x^{7}+wx^{8}+x^{10}+w^{2}x^{11}+wx^{13}+x^{15}$,\\ $1+x+x^{3}+x^{4}+x^{5}+x^{6}+x^{10}+x^{13}+x^{14}+x^{15}$,\\ $1+x+x^{2}+x^{5}+x^{9}+x^{10}+x^{11}+x^{12}+x^{14}+x^{15}$,\\ $w+w^{2}x^{2}+x^{4}+wx^{5}+w^{2}x^{7}+wx^{8}+x^{9}+x^{10}+wx^{13}+x^{15}$,\\ $w^{2}+x+x^{2}+x^{3}+wx^{5}+x^{6}+wx^{7}+wx^{9}+w^{2}x^{12}+x^{13}+wx^{14}+x^{15}$,\\ $1+x^{2}+wx^{4}+x^{5}+w^{2}x^{6}+x^{7}+x^{8}+wx^{9}+x^{10}+w^{2}x^{11}+x^{13}+x^{15}$,\\ $w^{2}+wx+w^{2}x^{2}+x^{3}+wx^{6}+wx^{8}+w^{2}x^{9}+wx^{10}+w^{2}x^{12}+w^{2}x^{13}+w^{2}x^{14}+x^{15}$,\\ $w+x^{2}+wx^{5}+wx^{6}+x^{7}+w^{2}x^{8}+x^{10}+wx^{11}+w^{2}x^{13}+x^{15}$} & $6$ & $3\leq d_{sr}(Mat_{\mathcal{B}}(C))\leq 6$ \\
    \hline
  \end{tabular}
  \end{scriptsize}
  \caption{Cyclic Self-Dual Sum-Rank Codes $Mat_{\mathcal{B}}(C)$}\label{Table 12}
\end{sidewaystable}

\newpage
{\bf Acknowledgements}~
This work was supported by
The National Natural Science Foundation of China (Grant Nos. 12271199, 12401689, 12441102 and 62032009), The Major Program of Guangdong Basic and Applied Research (No. 2019B030302008) and Postdoctoral Program for Innovative Talents (BX20240142).

\vspace{2mm}

\end{document}